\def\BibTeX{{\rm B\kern-.05em{\sc i\kern-.025em b}\kern-.08em
    T\kern-.1667em\lower.7ex\hbox{E}\kern-.125emX}}
\begin{document}

\title{Optimal Non-Adaptive Group Testing \\ with One-Sided Error Guarantees}

\author{{Daniel McMorrow} and
{Jonathan Scarlett}\thanks{The authors are with the  Department of Computer Science, School of Computing, National University of Singapore (NUS). J.~Scarlett is also with the Department of Mathematics, NUS, and the Institute of Data  Science, NUS. Emails: \url{mcmorrow@nus.edu.sg};  \url{scarlett@comp.nus.edu.sg}}
}

\maketitle

\begin{abstract}
   The group testing problem consists of determining a sparse subset of defective items from within a larger set of items via a series of tests, where each test outcome indicates whether at least one defective item is included in the test.  We study the approximate recovery setting, where the recovery criterion of the defective set is relaxed to allow some number of items (up to a certain specified threshold) to be misclassified. In particular, we consider \emph{one-sided} approximate recovery criteria, where we allow either only false negative or only false positive misclassifications.  Under false negatives only (i.e., finding a subset of defectives), we show that there exists an algorithm matching the optimal threshold of two-sided approximate recovery, albeit with exponential runtime.  Under false positives only (i.e., finding a superset of the defectives), we provide a converse bound showing that the better of two existing algorithms is optimal.
\end{abstract}

% \jon{We can discuss the minor point of what to do about American vs.~British spelling (I usually use American, but I didn't spot much British anyway except ``minimise'' and ``flavour'').}

% \begin{IEEEkeywords}
% Group Testing, Sparse Inference, Noiseless, Approximate Recovery.
% \end{IEEEkeywords}

\section{Introduction}
The goal of group testing is the accurate recovery of a set of “defective” items $S \subseteq [n] \coloneqq \{1,2,\dots,n\}$ of size $\abs{S} = k$ from within a larger population via a series of tests, where each test outcome indicates whether at least one defective item is included in the test.  We wish to minimize the number of tests $T$ while still ensuring the reliable recovery of the defective set $S$. Trivially, one can test each item individually by including each item in a test, and declare the defective set to be the items returning positive tests, leading to $T = n$ tests being sufficient. However, in many cases one can reduce the number of tests required by including multiple items in each test. 
Group testing was originally introduced by Dorfman \cite{dorf} as a means to more efficiently test incoming US soldiers for syphilis during WW2, and has since seen a variety of applications such as in DNA testing \cite{curnow1998pooling, gille1991pooling}, network communications \cite{hayes1978adaptive, wolf1985born}, sparse inference \cite{malioutov2013exact} and database systems\cite{cormode2005s}.

There are several different flavors of group testing, each having different results on the minimum number of tests needed for reliable recovery. 
For instance, one can consider both \emph{adaptive} testing where previous test results can be used to inform one’s decisions in designing subsequent tests, and \emph{non-adaptive} testing where all tests are designed before any testing takes place. Moreover, one can consider \emph{noiseless} testing, where the correct outcome is always observed deterministically, or \emph{noisy} testing, where some randomness is introduced to the test outcomes (e.g., test outcomes are flipped independently with some probability). Finally, one can consider both the \emph{linear regime} where $k =\Theta(n)$ (i.e., $k$ scales linearly with $n$) or the \emph{sparse regime} where $k =\Theta(n^{\theta})$ for some $\theta\in\left(0,1\right)$ (i.e., $k$ scales sub-linearly relative to $n$).\footnote{There is also a range of $k$ such that $k = n^{1-o(1)}$ but $k = o(n)$, as considered in \cite{bay2022optimal}, but this regime is narrow and less commonly considered compared to the other two discussed here.} 

In addition, there are several recovery criteria of interest \cite{aldridge2019group}; we list some of the main ones as follows in decreasing order of how stringent they are:
\begin{itemize}
    \item In \emph{zero-error combinatorial} group testing, we require the exact recovery of any defective set $S$ of size $k$ (or in some variations, size at most $k$).  This must hold deterministically, i.e., the ``error probability'' must be zero.  This strong recovery guarantee comes at the expense of requiring at least $\min\{\Omega(k^{2}),n\}$ tests \cite{d1983bounds}.
    \item In \emph{small-error probabilistic} group testing with \emph{exact recovery}, we require that $\bP(\hat{S} \neq S) \to 0$ as $n \to \infty$, where the randomness is with respect to $S$ (chosen uniformly at random) and/or the possibly randomized test design and decoding algorithm.  This relaxation significantly reduces the required number of tests to $O(k \log n)$.
    \item One can further relax small-error probabilistic group testing to the \emph{approximate recovery} criterion, in which instead of requiring $\hat{S} = S$ exactly, we allow for a small number of false positives and/or false negatives in the estimate.  While this typically does not improve the scaling of the number of tests, it can significantly improve the constant factors.
\end{itemize}
We note that approximate recovery criteria can also be considered in the zero-error setting, and this can again reduce the required number of tests to $O(k \log n)$ \cite{cheraghchi2013noise}.

% There are also the two contrasting notions of \emph{probabilistic} group testing and \emph{zero-error} group testing. In the latter we require that $\bP(\hat{S} \neq S) = 0$ for an estimate $\hat{S}$ of $S$, whereas in the former we weaken this to only require $\bP(\hat{S} \neq S) \to 0$ as $n \to \infty$. In the sparse regime, zero-error recovery is a much harder problem in terms of the number of tests needed, requiring $\min\{\Omega(k^{2}),n\}$ tests \cite{d1983bounds}, compared to the $\Theta(k\log n)$ tests needed for high probability recovery (to be discussed later).

% Finally, there are two main forms of recovery criteria that one can analyze, those being the \emph{exact} recovery criterion, where we require that the estimated defective set is precisely equal to $S$, and the \emph{approximate} recovery setting, where we loosen the recovery requirement to allow a small number of errors in the estimate of $S$.

In this paper, we will focus on non-adaptive probabilistic testing in the sparse regime under various approximate recovery criteria, and we will only consider noiseless testing.

\subsection{Problem Setup} \label{sec:setup}
Given a set of items $[n] \coloneqq \{1,2,\dots, n\}$, there exists some set $S \subseteq [n]$ of defective items with $\abs{S} = k$, where $k = \Theta(n^{\theta})$ for some $\theta \in (0,1)$ and $k$ is known \emph{a priori}. The set of all possible values of $S$ is denoted by $\cS$, and $S$ is assumed to be uniformly distributed over all $n \choose k$ elements in $\cS$, which is known as the \emph{combinatorial prior}. We also consider in certain parts of our proofs the \emph{i.i.d prior} (as an intermediate step), where each item is defective independently with some probability $q \in (0,1)$.

The items included in each test are represented by a \emph{test matrix} $\sfX \in \{0,1\}^{T \times n}$, where $\sfX_{ti} = 1$ if item $i$ is included in test $t$, and $\sfX_{ti} = 0$ otherwise. 
% Row $t \in \{1,2,\dots,T\}$ corresponds to the items included in test $t$ and column $i \in [n]$ corresponds to the tests that item $i$ is included in. Item $i$ is included in test $t$ if $\sfX_{ti} = 1$ and is not included in test $t$ otherwise. 
The test outcome vector $\bfY\in \{0,1\}^T$ is generated component-wise according to the following observation model:
%In more detail, each test contains a subset of items from $[n]$, and in the absence of noise in the outcome vector, each test is positive if at least one defective item is included in the test, and negative otherwise. Formally, each test outcome $Y_{t}$ is generated as
\begin{equation}
   \label{eq:1} Y_t = \bigvee_{i \in S}\sfX_{ti}, \ \forall t \in \{1,2,\dots, T\},
\end{equation}
where $\bigvee$ represents the \texttt{OR} operator, $Y_t$ is the $t$-th entry of $\bfY$, and $\bfX_t = (\sfX_{t1}, \sfX_{t2}, \dots, \sfX_{tn}) \in \{0,1\}^{n}$ is the $t$-th row of $\sfX$.  Given $\sfX$ and $\bfY$, a \emph{decoder} forms an estimate $\hat{S}$ of the defective set $S$.

We construct our tests \emph{non-adaptively}, meaning that all tests are designed in advance, and outcomes of tests cannot be used to influence future testing decisions.

% As noted above, we consider $n$ items among which $k$ is defective, let $T$ denote the number of tests, and let each outcome $Y_t$ be generated from the test vector $\bfX_t = (\sfX_{t1}, \sfX_{t2}, \dots, \sfX_{tn}) \in \{0,1\}^{n}$ according to \eqref{eq:1}.

\begin{definition} \label{def1}
We define the \emph{rate} for a given group testing algorithm aimed at identifying $k$ defective items out of a population of $n$ items as:

\begin{equation}
   \label{eq:2} R \coloneqq \frac{\log_{2}{n \choose k }}{T}.
\end{equation}
\end{definition}

We say that a rate is \emph{achievable} if for any $\delta, \epsilon>0$ and sufficiently large $n$, the number of tests $T$ satisfies $ \frac{\log_{2}{n \choose k}}{T} > R - \delta$ and the error probability incurred is at most $\epsilon$, where the notion of ``error probability'' depends on the recovery criterion. For example, for the exact recovery criterion the probability of error is given by $\bP(\hat{S} \neq S)$, whereas for the two-sided approximate recovery criterion (see Section \ref{sec:prior} for further discussion), the probability of error is given by $\bP\big(\abs{\hat{S} \setminus S}> \beta k \,\cup\, \abs{S \setminus \hat{S}} > \beta k\big)$ for some $\beta > 0$.

While previous works on approximate recovery focused mainly on this symmetric two-sided criterion, our focus in this paper is instead on the following problems, which represent the extremes of placing unequal weighting on false positives and false negatives:
\begin{itemize}
    \item Finding a near-complete subset of $S$ (a problem we will henceforth call SUBSET).
    \item Finding a slightly-enlarged superset of $S$ (a problem we will henceforth call SUPERSET).
\end{itemize}
There are natural scenarios where these problems may be more relevant than the previously considered two-sided approximate recovery criterion. For example, SUPERSET may be relevant when trying to contain a highly contagious disease in its initial stages, where a small number of healthy individuals being quarantined is worth ensuring all diseased individuals are identified.  An example where SUBSET may be relevant is identifying illegal activity via traitor tracing \cite{meerwald2011group}, where it is paramount to avoid false accusations.
% For SUBSET, a relevant problem would be in identifying guilty suspects via DNA testing in a criminal case, where false prosecutions are to be avoided.

Note that exact recovery solves both SUBSET and SUPERSET, since after successful recovery, one can either remove defective items or add non-defective items to $S$. One can also consider the asymmetric approximate recovery case where we allow at most $\alpha_{1}k$ false negatives and $\alpha_{2}k$ false positives for general fixed values of $\alpha_{1}, \alpha_{2} > 0$. However, this problem is a relatively straightforward extension of existing work \cite{scarlett2017how, truong2020all}; see Appendix \ref{sec:asymmetric} for details.

For the SUBSET and SUPERSET problems, the error probabilities are given as follows with parameters $\eta^-, \eta^+ > 0$. %, where both error probabilities are over the random defective set $S$ and any potential randomness in the test design/decoder:

\begin{definition} \label{def:subset}
    For the SUBSET problem with parameter $\eta^- \in [0,1]$, the error probability is defined by
    \begin{equation}
        \label{eq:4} P_{\mathrm{e}}^{-} \coloneqq \bP\left(\hat{S}\not \subseteq S  \right)
    \end{equation}
for an estimate $\hat{S}$ of $S$ with $\abs{\hat{S}} = (1-\eta^-)k$.
\end{definition}

\begin{definition} \label{def:superset}
    For the SUPERSET problem with parameter $\eta^+ \ge 0$, the error probability is defined by 
    \begin{equation}
        \label{eq:5} P_{\mathrm{e}}^{+} \coloneqq \bP\left(\hat{S} \not \supseteq S\right)
    \end{equation}
for an estimate $\hat{S}$ of $S$ with $\abs{\hat{S}} = (1+\eta^+)k$.
\end{definition}

Note that the probabilities in these definitions are over the random defective set $S$ and any potential randomness in the test design and decoder.  Moreover, we assume for notational convenience that $(1-\eta^-)k$ and $(1+\eta^+)k$ are integers, but if not, all of our results will hold regardless of whether these values are rounded up or rounded down.

\subsection{Prior Work} \label{sec:prior}

{\bf Exact recovery.} Dorfman's original scheme \cite{dorf} was a two-stage adaptive algorithm consisting of splitting up the items into disjoint groups, and including all items of each group in a single test. All items in negative tests are marked as non-defective, while the remaining items in positive tests are subsequently tested individually. While this does indeed beat individual testing, such a scheme is highly suboptimal when the number of defectives is sufficiently sparse relative to the number of items.

In the sparse setting (namely, $k = \Theta(n^{\theta})$ with $\theta \in (0,1)$), there exists a ubiquitous lower bound of $ T \geq \big( k\log_{2}\frac{n}{k}\big)(1-o(1))$ on the number of tests required to obtain a vanishing probability of error, regardless of adaptivity or non-adaptivity. This bound is known as the counting bound \cite{baldassini2013capacity}. The intuition behind this bound stems from the fact that $\log_{2}{n \choose k} = \big(k \log_{2}\frac{n}{k}\big)\big(1+o(1)\big)$ bits are required to describe the defective set, and each test can provide at most 1 bit of information since the test outcomes are binary. This intuition can easily be made precise using standard information theoretic tools such as Fano's inequality \cite[Theorem 1]{chan2011non}, which implies we cannot obtain an arbitrarily small probability of error with fewer tests. This result has also been strengthened to obtain a strong converse bound stating that $\bP(\hat{S} \neq S) \to 1$ as $n \to \infty$ when $T \leq \left(k\log_{2}\frac{n}{k}\right)(1-\epsilon)$ for arbitrarily small $\epsilon > 0$ \cite[Theorem~3.1]{baldassini2013capacity}. 

The counting bound can be met when adaptive testing is used for all $\theta \in (0,1)$ using a generalized binary splitting algorithm due to Hwang \cite{hwang1972method}.
The counting bound can also be achieved for $\theta \in \big(0, \frac{\log 2}{1+\log 2}\big) \approx (0,0.409)$ non-adaptively, whereas a converse bound specific to the non-adaptive setting shows that more tests are required for $\theta \geq \frac{\log 2}{1+\log 2}$ \cite[Theorem~1.1]{coja2020optimal}. Furthermore, \cite[Theorem~1.2]{coja2020optimal} shows that there exists a polynomial-time algorithm matching this bound in the non-adaptive setting.  (Prior to their work, the bound was showed to be information-theoretically achievable in \cite{coja2020information}.)

{\bf Approximate recovery.} There is also a converse bound specific to the approximate recovery setting \cite[Theorem 3]{scarlett2016phase}, which states that the high probability recovery of a set $\hat{S}$ such that $\max\big\{\abs{\hat{S} \setminus S}, \abs{S \setminus \hat{S}}\big\} \leq \beta k$ for fixed $\beta \in (0,1)$ is not possible when the number of tests satisfies $T \leq (1-\beta)\left(k \log_{2}\frac{n}{k}\right)(1-\epsilon)$ for arbitrarily small $\epsilon > 0$. %When an equal number of false positive and false negative items are allowed (i.e., two-sided approximate recovery), a 
A matching upper bound is also known \cite{truong2020all}, and the polynomial-time algorithm from \cite{coja2020optimal} can be used to achieve the same bound. The techniques in these works appear to be heavily reliant on the approximate recovery criterion being two-sided, and thus not directly suitable for SUBSET and SUPERSET.  For instance, in \cite{scarlett2016phase} the decoder specifically searches over size-$k$ estimates, which immediately implies an equal number of false positives and false negatives.
%\textcolor{blue}{Their method cannot be directly used to solve either SUBSET or SUPERSET however, since one of the steps involves using the approximate recovery algorithm from \cite{scarlett2016phase}, which returns both false positive and false negative items.}

We briefly pause to highlight the test designs used in these works:
\begin{itemize}
    \item In \cite{scarlett2016phase,truong2020all} the {\em Bernoulli design} is used, in which each item is placed in each test independently with probability $\frac{\nu}{k}$ for some $\nu > 0$.
    \item In \cite{coja2020information} (see also \cite{johnson2018performance}) the \emph{near-constant tests-per-item design} (or \emph{near-constant column weight design}) is used, in which each item is placed in $L = \frac{\nu T}{k}$ tests (for some $\nu > 0$) chosen uniformly at random with replacement.
    \item The polynomial-time method in \cite{coja2020optimal} is based on a \emph{spatially coupled} test design that builds on the near-constant tests-per-item design but is specifically geared towards lowering the computation required for decoding.
\end{itemize}

There are also some related works tackling similar problems to SUBSET and SUPERSET.  Aldridge \cite{aldridge2021pooled} studied a related problem of isolating defectives, and translated to our setup, one of their results solves an instance of the SUBSET problem using a combination of the exact recovery algorithm in \cite{coja2020optimal} and a proposed algorithm using ideas from SAFFRON \cite{lee2019saffron}.  However, in the regimes where the latter algorithm attains the better rate, it only recovers roughly $58\%$ of the defective set and attains a rate below $0.19$, which turns out to be highly suboptimal. Sharma et al.~\cite{sharma2013finding} also considered the problem of finding a subset of healthy individuals in the population, but taking the complement of that set would give a ``very enlarged superset'' rather than ``slightly enlarged'' as we seek. The SUPERSET problem is also related to list decoding, which has been considered in works such as \cite{d2015almost, gilbert2008group, cheraghchi2013noise, indyk2010efficiently} but with very different goals to ours (e.g., as a stepping stone to exact recovery in the zero-error setting).  Mézard et al. \cite{mezard2011group} established a converse bound stating that a rate above $\log 2$ is not achievable for a problem related to SUPERSET. However, their result concerns two-stage adaptive procedures rather than the non-adaptive designs we consider, and they are interested in the zero-error recovery problem rather than the small-error criterion that we consider. 
% Further problems related to list decoding in the zero-error setting have been studied in \cite{cheraghchi2013noise, indyk2010efficiently}, though these are less relevant to this paper.

\subsection{Achievable Rates Adapted from Prior Works}

With the definitions from Section \ref{sec:setup} in mind, we now state the best known existing results for both problems.  The relevant existing works stated these results for $\eta^-,\eta^+ = \Theta(1)$ in Definition \ref{def:subset}, but by slightly adapting their proofs, we produce a stronger result in which these parameters can decay to zero at a certain (sufficiently slow) speed.

\begin{theorem}[Adapted from \cite{aldridge2014group, chan2011non, aldridge2019group}] \label{thm1}
    Let $\theta \in (0,1)$ and

    \begin{equation}
        \label{eq:6} R^{*} = \max\{\zeta(\theta), \log 2\},
    \end{equation}
    where

    \begin{equation}
       \label{eq:7} \zeta(\theta) \coloneqq \min\left\{1, \log 2 \frac{1-\theta}{\theta}\right\}.
    \end{equation}
Then if $\eta^- = \Omega(k^{-\lambda(\frac{1}{\theta}-1)})$ for any $\lambda \in [0,1)$, and $\eta^+ = k^{-o(1)}$, the rate $R^{*}$ is achievable in both the SUBSET and SUPERSET problems.
\end{theorem}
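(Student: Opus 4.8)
The plan is to show that each of the two rates $\zeta(\theta)$ and $\log 2$ is separately achievable in both the SUBSET and SUPERSET problems, using two different test designs; since $\theta$ is known, one then runs whichever design yields the larger rate, obtaining $R^* = \max\{\zeta(\theta),\log 2\}$. Note that $\zeta(\theta) \ge \log 2$ exactly when $\theta \le \tfrac12$, so the first construction below is what matters for $\theta \le \tfrac12$ and the second for $\theta \ge \tfrac12$.

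First, to achieve $\zeta(\theta)$ I would reduce to \emph{exact} recovery: by the information-theoretic achievability result for non-adaptive exact recovery recalled in Section~\ref{sec:prior} (near-constant tests-per-item design with an optimal decoder), there is an estimate $\hat S$ with $|\hat S| = k$ and $\mathbb{P}(\hat S \ne S) \to 0$ at every rate below $\zeta(\theta)$ --- the counting bound $1$ for $\theta < \tfrac{\log 2}{1+\log 2}$ and $\log 2\cdot\tfrac{1-\theta}{\theta}$ beyond that. Deleting any $\eta^- k$ elements of $\hat S$ gives a SUBSET estimate $\hat S^-$, and adjoining any $\eta^+ k$ items outside $\hat S$ gives a SUPERSET estimate $\hat S^+$; since $\{\hat S^- \not\subseteq S\}\subseteq\{\hat S\ne S\}$ and $\{\hat S^+ \not\supseteq S\}\subseteq\{\hat S\ne S\}$, both $P_{\mathrm e}^-$ and $P_{\mathrm e}^+$ vanish at the same rate. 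This step uses nothing about $\eta^\pm$ beyond their admissible ranges.

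Second, to achieve $\log 2$ I would use the near-constant tests-per-item design with density parameter $\nu = \log 2$, decoded by COMP for SUPERSET and by DD for SUBSET. COMP returns the set $PD$ of items appearing in no negative test, so $S \subseteq PD$ always; DD returns the items that occur as the unique $PD$-member of some positive test, a set always contained in $S$. Thus both decoders are one-sided by construction, with COMP's SUPERSET error equal to $\{|PD\setminus S| > \eta^+ k\}$ and DD's SUBSET error equal to $\{|S\setminus DD| > \eta^- k\}$. Fixing the rate at $\log 2 - \delta$ pins the column weight to $L = (1+\delta')\tfrac{(1-\theta)\ln n}{\log 2}$ for some $\delta' = \delta'(\delta) > 0$, and the standard first-moment estimates for this design give $\mathbb{E}\big[|PD\setminus S|\big] \approx n(1-e^{-\nu})^{L} = n^{\theta - \delta'(1-\theta)} = o(k)$ --- hence $|PD| = (1+o(1))k$ with high probability, and conditioning on this, $\mathbb{E}\big[|S\setminus DD|\big] \approx k(1-e^{-\nu})^{L} = n^{\theta - (1+\delta')(1-\theta)}$. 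Since $\eta^+ k = k^{1+o(1)} = n^{\theta + o(1)}$, and since the hypothesis on $\eta^-$ translates to $\eta^- k = \Omega\big(k^{1-\lambda(1/\theta-1)}\big) = \Omega\big(n^{\theta - \lambda(1-\theta)}\big)$ with $\lambda < 1 \le 1+\delta'$, in both cases the governing expectation is smaller than the permitted error budget by a fixed polynomial factor, so Markov's inequality forces $P_{\mathrm e}^+, P_{\mathrm e}^- \to 0$.

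The main obstacle is the parameter bookkeeping in the second construction: the rate loss $\delta$ must be small enough to meet the rate requirement, while $\delta' > 0$ must be large enough that $\mathbb{E}[|PD\setminus S|] = n^{\theta - \delta'(1-\theta)}$ is comfortably $o(k)$ (so that $PD$ does not overwhelm DD's isolation step) and, simultaneously, $(1+\delta')(1-\theta) > \lambda(1-\theta)$, which keeps $\mathbb{E}[|S\setminus DD|]$ a polynomial factor below $\eta^- k$; it is precisely the assumption $\lambda < 1$ that creates this slack, and $\eta^+ = k^{o(1)}$ that makes the $o(k)$ false positives of COMP harmless. The remaining ingredients --- the first-moment computations $\mathbb{E}[|PD\setminus S|]\approx n(1-e^{-\nu})^L$ and $\mathbb{E}[|S\setminus DD|]\approx k(1-e^{-\nu})^L$, together with the identification of the COMP and DD error events --- are standard and can be imported essentially verbatim from the COMP/DD analyses in the cited works.
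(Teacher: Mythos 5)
Your proposal takes essentially the same route as the paper's proof in Appendix~\ref{sec:comp_dd}: achieve $\zeta(\theta)$ by reducing to exact recovery and trimming or padding the exact estimate, and achieve $\log 2$ with COMP (for SUPERSET) and DD (for SUBSET) on the near-constant column weight design, controlling the false positives and negatives via first-moment bounds plus Markov's inequality, with $\lambda < 1$ and $\eta^+ = k^{o(1)}$ providing exactly the polynomial slack you describe. The only step you gloss over is making rigorous the first-moment bound on DD's false negatives after conditioning on COMP's false positives, which the paper handles via the precise DD analysis of \cite{johnson2018performance} (the bound~\eqref{eq:dd_bound_1} involving $\Phi$ and $C(L,w_-)$) --- but this is exactly the ingredient you flag as importable from the cited works, so it is a matter of detail rather than a gap.
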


The condition $\eta^+ = k^{-o(1)}$ covers a range of possible values, including not only $\eta^+ = \Theta(1)$ but also the case of decaying sufficiently slowly (e.g., as $(\log k)^{-c}$ for some $c > 0$). Moreover, the condition on $\eta^-$ allows not only $\eta^- = \Theta(1)$ and ``slow'' decay to zero at rate $k^{-o(1)}$, but also faster (polynomial) decay; by letting $\lambda$ be close to $1$ we can allow $\eta^- = \big( \frac{k}{n} \big)^{1-\epsilon}$ for arbitrarily small $\epsilon >0$, amounting to at most $k \big( \frac{k}{n} \big)^{1-\epsilon}$ false negatives in the SUBSET problem. 
Note also that the term $\zeta(\theta)$ is the optimal exact recovery rate (among all designs) from \cite{coja2020optimal}, and is present because achieving exact recovery implies solving both problems even with $\eta^+, \eta^-=0$.  This term is dominant for $\theta < \frac{1}{2}$, and for any such $\theta$ the quantity $\eta^-k = \Theta(k(\frac{k}{n})^{\lambda})$ is $o(1)$ for $\lambda$ sufficiently close to one, which means that the ``approximate recovery'' guarantee of SUBSET is in fact equivalent to exact recovery.
% exact recovery is possible with a rate of $\log 2$ using DD when $\theta < \frac{1}{2}$; this is consistent with existing results for exact recovery \cite{johnson2018performance}. 

For the SUBSET problem, the rate $\log 2 \approx 0.693$ is achievable using the \emph{Definite Defectives} (DD) algorithm along with the near-constant tests-per-item design.  See \cite{aldridge2014group} for the algorithm studied under Bernoulli designs, \cite{johnson2018performance} for its study under the improved design, and \cite[Section 5.1]{aldridge2019group} for an overview of the extension to approximate recovery.

For the SUPERSET problem, the rate of $\log 2 $ is achievable using the \emph{Combinatorial Orthogonal Matching Pursuit} (COMP) algorithm.  See \cite{chan2011non} for the algorithm and Bernoulli designs, \cite{johnson2018performance} for its study under the improved test design, and \cite[Section 5.1]{aldridge2019group} for an overview of the extension to approximate recovery. 

Since the details on these extensions are omitted from \cite[Section 5.1]{aldridge2019group}, and since the outline therein only covers the case that $\eta^-,\eta^+ = \Theta(1)$, we detail the proof of Theorem \ref{thm1} in Appendix \ref{sec:comp_dd}.

\subsection{Contributions}\label{AA}

We now state our main contributions in this paper:
\begin{itemize}
    \item We provide a new algorithm for the SUBSET problem that achieves a rate of 1 for all $\theta \in (0,1)$ when $\eta^- = \Omega(k^{-\lambda(\frac{1}{\theta}-1)})$ for any $\lambda \in [0,\frac{1}{2})$ (albeit with exponential runtime), thus matching the counting bound and beating the best existing achievability result.
    \item We provide a strong converse bound for the SUPERSET problem that shows the rate in \eqref{eq:6} is optimal for any $\eta^+ = k^{o(1)}$ among arbitrary non-adaptive designs, with this converse holding for even larger $\eta^+$ when $\theta \in (0.409,\frac{1}{2})$, namely, $\eta^+ = O(k^{\lambda})$ with $\lambda \in [0,\frac{1}{\theta}-2]$.
\end{itemize}
A more detailed overview is given in Section \ref{sec:results}.

\subsection{Notation} \label{sec:notation}

We generally use boldface letters to denote vectors, lower-case to denote scalars, and capital letters to denote random variables (though the number of tests $T$ is an exception). %Distributions of random variables are given by a capital $P$, with subscripts indicating the random variable(s) that this distribution is with respect to (e.g. $P_{X}$ is the distribution on $X$ and $P_{X Y}$ is the distribution on $(X, Y)$). 
We use standard Bachmann-Landau notation $o(\cdot), \ \omega(\cdot), \ O(\cdot), \ \Omega(\cdot), \ \Theta(\cdot)$ to denote asymptotic relations. For a matrix $\sfM$ (resp. vector $\bfv$) and set $\cJ \subseteq [n]$, $\sfM_{\cJ}$ denotes the \emph{sub-matrix} of $\sfM$ (resp. sub-vector of $\bfv$) corresponding to the columns (resp. entries) indexed by $\cJ$. If $\cJ = \emptyset$ we use the convention that $\sfM_{\emptyset}$ (resp. $\bfv_{\emptyset}$) $ = \emptyset$. For a set $\cN$ and some $\cU \subseteq \cN$, we denote $\cU^\mathsf{c} \coloneqq \cN \setminus \cU$ as the complement of $\cU$. All logarithms have base $e$ unless specified otherwise (e.g., via $\log_{2}$).

% The \emph{information density} of two outcomes $(x,y)$ drawn from $(X,Y) \sim P_{XY}$ is defined as
% \begin{equation}
%     \label{eq:8} \imath(x,y) \coloneqq \log \frac{P_{XY}(x,y)}{P_{X}(x)P_{Y}(y)}.
% \end{equation}
% If $\bfx, \bfy$ are drawn from $(\bfX,\bfY) \sim P_{\bfX \bfY} = P_{XY}^{n}$ where $P_{XY}^{n}$ is the $n$-fold product of $P_{XY}$, the \emph{$n$-letter} information density is defined as

% \begin{equation}
%     \label{eq:9} \imath^{n}(\bfx, \bfy) \coloneqq \log \frac{P_{\bfX \bfY}(\bfx, \bfy)}{P_{\bfX}(\bfx)P_{\bfY}(\bfy)} = \sum_{i = 1}^{n}\imath(x_{i}, y_{i}).
% \end{equation}

The Hamming distance between two vectors $\bfu, \bfv \in \{0,1\}^{n}$ is given by $\dH(\bfu, \bfv)$. We also slightly abuse notation throughout and use $\dH(S, S')$ for $S, S' \subseteq [n]$, which is to be understood as the Hamming distance between the binary vectors having $1$'s in the locations indexed by $S, S'$ and $0$'s elsewhere.

\begin{figure}
    \centering 
    \includegraphics[width=0.84\columnwidth]{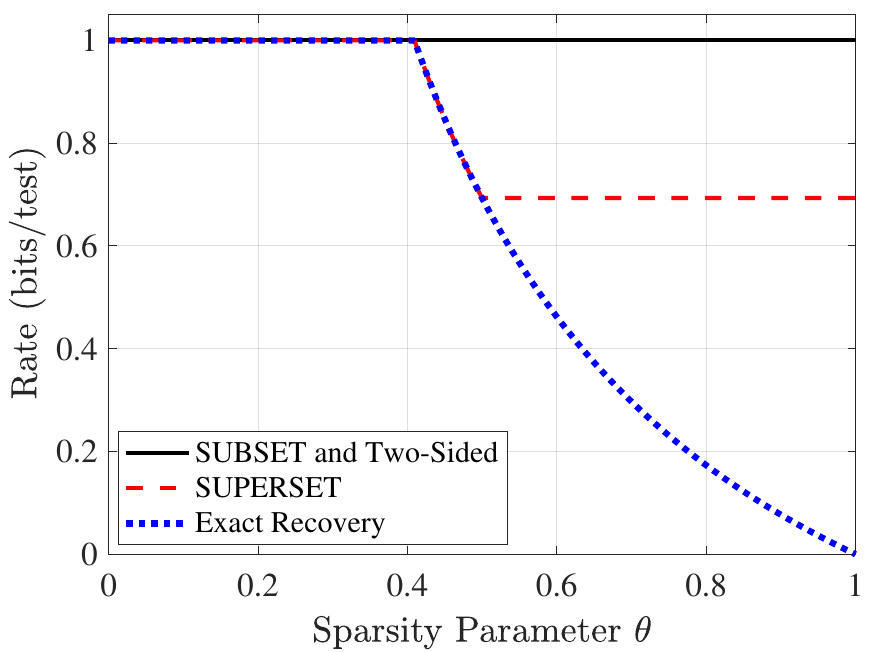}
    
    \caption{Summary of optimal rates under various recovery guarantees, where approximate recovery is with $o(k)$ misclassifications (e.g., requiring $\eta^-,\eta^+=o(1)$).  The result for exact recovery is from \cite{coja2020optimal}, and the result for two-sided approximate recovery is from \cite{scarlett2016phase}. } 
    \label{fig:rates}
\end{figure}

\section{Main Results} \label{sec:results}

We now formally state our main results; these are summarized in Figure \ref{fig:rates} along with existing results for exact recovery and two-sided approximate recovery. The proofs of these results are provided in Section \ref{sec:proofs}.

\begin{theorem} \label{thm2}
    In the SUBSET problem, for any $\theta \in (0,1)$, there exists an algorithm that achieves a rate of 1 for recovering an estimate $\hat{S}$ of size $(1 - \eta^-)k$, when $\eta^- = \Omega(k^{-\lambda(\frac{1}{\theta}-1)})$ for any $\lambda \in [0,\frac{1}{2})$.
\end{theorem}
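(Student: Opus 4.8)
For $\theta < \frac{\log 2}{1+\log 2}$ we have $\zeta(\theta)=1$, so rate $1$ is already achievable by exact recovery (Theorem~\ref{thm1}); I therefore focus on the complementary case, in which rate $1$ strictly exceeds the exact-recovery threshold, although the argument below is not specific to it. The plan is to pair a random design with a \emph{certified-subset} decoder that by construction outputs only defective items, and then to show that at rate $1-o(1)$ this decoder still certifies all but an $O(\eta^-)$-fraction of the defectives. I would use the near-constant tests-per-item design (or a Bernoulli design with inclusion probability $\Theta(1/k)$) with $T=(1+\delta)\log_2\binom{n}{k}$ tests, so the rate is $1-o(1)$ for small $\delta$, leaving the parameter $\nu$ free to be tuned at the end. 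Let $\mathrm{PD}$ denote the set of items appearing in no negative test (the output of COMP), so $S\subseteq\mathrm{PD}$ deterministically, and output
\[
\hat S \coloneqq \{\, i \in \mathrm{PD} : \text{no size-}k\text{ subset of } \mathrm{PD}\setminus\{i\} \text{ hits every positive test} \,\},
\]
i.e.\ the intersection of all size-$k$ defective sets consistent with $(\sfX,\bfY)$. Since $S$ is one such set, $\hat S\subseteq S$ with probability one, so no rounding issue arises and any $(1-\eta^-)k$-element subset may be output when $|\hat S|$ is larger. Writing $R \coloneqq \{\, i\in S : (\sfX,\bfY)\text{ is consistent with some size-}k\text{ set not containing }i \,\}$ for the set of \emph{redundant} defectives, one checks $\hat S = S\setminus R$, so it suffices to prove $\bP\big(|R| > \eta^- k\big)\to 0$. (A local, peeling-based surrogate decoder can be used instead for polynomial running time; up to a vanishing-probability event it is contained in $\hat S$, and the same reduction applies.)

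As suggested in the setup, I would bound $|R|$ first under the i.i.d.\ prior ($q=k/n$) and then transfer to the combinatorial prior using concentration of the number of defectives. The basic fact is that $i\in R$ iff $\mathrm{PD}\setminus\{i\}$ contains a hitting set of the positive tests of size at most $k$; letting $\mathcal{L}_i$ be the positive tests whose unique defective is $i$, the cheapest redundancy certificates are: $\mathcal{L}_i=\emptyset$ (then $i$ is trivially redundant), or some item of $\mathrm{PD}\setminus\{i\}$ lying in all of $\mathcal{L}_i$, or more generally a block $A\subseteq\mathrm{PD}\setminus S$ of possible-defectives covering the positive tests exposed by deleting a block $D\ni i$ of defectives, with $|A|\le|D|$. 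To bound $\bP(i\in R)$ I would condition on $|\mathcal{L}_i|$: since $i$ lies in $L=\Theta(\log n)$ tests, each ``lonely'' with probability $\approx e^{-\nu}$, the count $|\mathcal{L}_i|$ concentrates around $Le^{-\nu}=\Theta(\log n)$, and once $|\mathcal{L}_i|$ exceeds a $\theta$-dependent constant a direct count rules out the single-item certificates except with tiny probability. Granting a bound $\bP(i\in R)=o(\eta^-)$ (for a suitable $\nu$) in the stated parameter range, $\mathbb{E}|R|=o(\eta^- k)$ and Markov's inequality (upgraded to a bounded-differences concentration step if the first moment is not quite sharp enough) give $\bP(|R|>\eta^- k)\to 0$, hence $P_{\mathrm{e}}^{-}\le \bP\big(|\hat S|<(1-\eta^-)k\big)\to 0$; letting $\delta\downarrow 0$ then yields every rate below $1$.

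The main obstacle is eliminating the \emph{multi-element} replacement blocks $(D,A)$ with $|D|=|A|\ge 2$. A naive union bound over these fails badly, because the number of such pairs, $\binom{k}{|D|}\binom{n}{|A|}$, grows much faster than the per-pair consistency probability decays --- this is precisely the obstruction that pins down the optimal exact-recovery threshold $\zeta(\theta)$. Pushing the bound through requires the same refined machinery used there (restricting attention to ``typical'' tests and items and controlling correlations between candidate replacements), now with the considerable slack that only an $\eta^-$-fraction of the defectives needs to be controlled rather than all of them. It is this multi-element estimate, together with the i.i.d.-to-combinatorial transfer, that determines the admissible decay exponent and is responsible for the restriction $\lambda\in[0,\tfrac{1}{2})$; the remaining pieces --- the elementary COMP and consistency facts, the concentration of $|\mathcal{L}_i|$, and the prior transfer --- are routine.
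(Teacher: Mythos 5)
Your approach is genuinely different from the paper's and, more importantly, it leaves the central estimate unproven. You propose a certified-subset decoder (the intersection of all size-$k$ sets consistent with $(\sfX,\bfY)$) and reduce the theorem to showing $\bP(|R|>\eta^-k)\to 0$, where $R$ is the set of defectives omitted by some consistent set. You then explicitly write ``Granting a bound $\bP(i\in R)=o(\eta^-)$\dots'' and flag the multi-element replacement blocks $(D,A)$ as ``the main obstacle,'' noting that a naive union bound over $\binom{k}{|D|}\binom{n}{|A|}$ pairs fails. That is not a peripheral detail: the union bound over such consistency-preserving swaps is exactly what pins the non-adaptive exact-recovery threshold at $\zeta(\theta)<1$ for $\theta>\frac{\log 2}{1+\log 2}$, and it is not established by your argument that allowing an $\eta^-$-fraction of uncertified defectives by itself lets you pass rate $\zeta(\theta)$ and reach rate $1$. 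You assert that the ``same refined machinery'' with ``considerable slack'' would do it, but that claim is precisely the theorem, and it is asserted rather than proven.

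The paper avoids this union-bound blow-up with a structural device you do not have: it first runs the two-sided approximate recovery decoder of Scarlett--Cevher with distortion $\beta=\eta^-$ to get a warm start $\hat S'$ with $\dH(S,\hat S')\le 2\eta^-k$ whp, and then only scores candidate sets within Hamming distance $3\eta^-k$ of $\hat S'$ by how many tests they ``explain.'' This localizes the search to swaps of at most $\frac{5}{2}\eta^-k$ items, and---crucially---the reference set $\tilde S\subseteq S$ from Lemma~\ref{lem1} leaves only $\frac{T}{2}\,\Theta(\sqrt{\eta^-})=\frac{T}{2}\,n^{-\Omega(1)}$ positive tests unexplained, so an $\ell$-swap's replacements must pick up $\Omega(\ell\sqrt{\log n})$ explained tests from a vanishingly small pool, giving a per-swap failure probability of $\exp(-\Omega(\ell\sqrt{\log n}\log n))$ that crushes the $\binom{k}{\ell}\binom{n}{\ell}=\exp(O(\ell\log n))$ enumeration. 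Your decoder has no such localization: COMP alone does not help (at rate $1$ the expected size of $\mathrm{PD}\setminus S$ is $n^{1-(1-\theta)\log 2}\gg k$), and replacement blocks $A$ range over the full population with a positive-test pool of size $\Theta(T)$ to explain from, so there is nothing to make the union bound close. The remaining ingredients you list (COMP facts, concentration of $|\mathcal L_i|$, prior transfer) are indeed routine, but the missing multi-element estimate is the theorem.
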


\begin{theorem} \label{thm3}
    In the SUPERSET problem, for any $\theta \in (0,1)$ and $\eta^+ = k^{o(1)}$, any rate above
    \begin{equation}
       \label{eq:10} R^{*} = \max\{\zeta (\theta), \log 2\} 
    \end{equation}
where
    \begin{equation}
        \label{eq:11} \zeta(\theta) \coloneqq \min \left\{1, \log 2 \frac{1 - \theta}{\theta}\right\}
    \end{equation}
cannot be achieved regardless of test design. In particular, for rates above $R^{*}$, it holds that $P_{\mathrm{e}}^{+} \to 1$. Moreover, when $\theta \in \big(\frac{\log 2}{1+\log 2}, \frac{1}{2}\big)$ the same result also holds for $\eta^+ = O(k^{\lambda})$, for any $\lambda \in [0,\frac{1}{\theta}-2]$.
\end{theorem}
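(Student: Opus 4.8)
The plan is to analyze the optimal (MAP) decoder and to show that above rate $R^{*}$ the posterior distribution of $S$ given $(\sfX,\bfY)$ is too spread out for any set of size $(1+\eta^{+})k$ to be a superset of $S$ with non‑vanishing probability. Fix an arbitrary non‑adaptive design $\sfX$, taken independent of $S$, with $R=\frac{\log_{2}\binom nk}{T}>R^{*}$, and let $\mathcal{C}(\sfX,\bfY)=\{S'\subseteq[n]:|S'|=k,\ \bigvee_{i\in S'}\sfX_{ti}=Y_{t}\ \forall t\}$ be the family of defective sets consistent with the observations. Under the combinatorial prior the posterior of $S$ given $(\sfX,\bfY)$ is uniform on $\mathcal{C}(\sfX,\bfY)$, so any decoder outputting an estimate $\hat S=\hat S(\sfX,\bfY)$ of size $(1+\eta^{+})k$ satisfies $1-P_{\mathrm{e}}^{+}=\mathbb{E}_{\sfX,\bfY}\big[\,|\{S'\in\mathcal{C}(\sfX,\bfY):S'\subseteq\hat S\}|\,/\,|\mathcal{C}(\sfX,\bfY)|\,\big]$, and it suffices to bound this for the MAP decoder, which minimizes $P_{\mathrm{e}}^{+}$ (so if its $P_{\mathrm{e}}^{+}\to1$, the same holds for every decoder).

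\emph{Structural bound.} Call a non‑defective item \emph{masked} if it appears in no negative test, and let $M=M(S,\sfX)$ be the set of masked items, $m=|M|$. If $S'\in\mathcal{C}(\sfX,\bfY)$ then any item of $S'\setminus S$ lies in no negative test (else that test would be positive under $S'$), so $S'=(S\setminus F)\cup F'$ with $F\subseteq S$, $F'\subseteq M$, $|F|=|F'|$, and $S\setminus F$ still covering every test that $S$ covers; conversely any such $S'$ with $F'\subseteq M$ is consistent. Since distinct pairs $(F,F')$ give distinct $S'$, writing $g_{u}$ for the number of $u$‑subsets $F\subseteq S$ with $S\setminus F$ covering all positive tests (so $g_{0}=1$), we get $|\mathcal{C}(\sfX,\bfY)|=\sum_{u\ge0}g_{u}\binom mu$, while for any $\hat S$ each such $F$ of size $u$ contributes at most $\binom au$ consistent sets contained in $\hat S$, where $a:=|M\cap\hat S|\le|\hat S|=(1+\eta^{+})k$. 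Peeling off the $u=0$ term and using $\binom au\le\frac am\binom mu$ for $1\le u\le m$ gives, on $\{m\ge1\}$,
\begin{equation*}
\frac{|\{S'\in\mathcal{C}(\sfX,\bfY):S'\subseteq\hat S\}|}{|\mathcal{C}(\sfX,\bfY)|}\ \le\ \frac{1}{|\mathcal{C}(\sfX,\bfY)|}+\frac{(1+\eta^{+})k}{m},
\end{equation*}
and this ratio is at most $1$ when $m=0$ (where $|\mathcal{C}(\sfX,\bfY)|=1$).

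\emph{Two ingredients.} Since $R>R^{*}\ge\zeta(\theta)$, we are strictly above the exact‑recovery threshold, so the strong converse for exact recovery --- the counting‑bound strong converse of \cite{baldassini2013capacity} for $\theta<\frac{\log 2}{1+\log 2}$, and the non‑adaptive strong converse of \cite{coja2020optimal} for $\theta\ge\frac{\log 2}{1+\log 2}$ --- says the optimal exact‑recovery decoder succeeds with probability $\to0$; under the uniform posterior this is precisely $\mathbb{E}[1/|\mathcal{C}|]\to0$. Since also $R>R^{*}\ge\log 2$, the other ingredient (\emph{Lemma~A}) is a quantitative optimality of the near‑constant tests‑per‑item design for COMP: for every non‑adaptive design with $R\ge\log 2+\gamma$ one has $\bP\big(m\ge\eta^{+}k\cdot k^{c(\theta)\gamma}\big)\to1$ for some $c(\theta)>0$; moreover, for $\theta\in(\frac{\log 2}{1+\log 2},\frac12)$ the same computation run at $R>\zeta(\theta)$ yields the sharper $m\ge(n/k)\,k^{\Omega(\gamma)}$, which exceeds $(1+\eta^{+})k$ by a growing power of $k$ exactly when $\eta^{+}=O(k^{\lambda})$ with $\lambda\le\frac1\theta-2$. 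Granting both ingredients and using $\eta^{+}=k^{o(1)}$ (hence $1/\eta^{+}=k^{o(1)}$), we split $1-P_{\mathrm{e}}^{+}$ according to Lemma~A's high‑probability event:
\begin{equation*}
1-P_{\mathrm{e}}^{+}\ \le\ \bP\big(m<\eta^{+}k\,k^{c(\theta)\gamma}\big)+\mathbb{E}\!\left[\tfrac1{|\mathcal{C}|}\right]+\Big(1+\tfrac1{\eta^{+}}\Big)k^{-c(\theta)\gamma}\ \longrightarrow\ 0,
\end{equation*}
which establishes the strong converse $P_{\mathrm{e}}^{+}\to1$, and hence also that rates above $R^{*}$ are unachievable.

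\emph{Main obstacle.} The crux is Lemma~A: showing that no design can leave asymptotically fewer non‑defectives masked --- up to the stated power‑of‑$k$ slack --- than the near‑constant tests‑per‑item design with $\nu=\log 2$ at the same $T$. The intended route is to lower‑bound $\mathbb{E}_{S}[m]$ for an arbitrary design: for a non‑defective $i$, $\bP(i\text{ masked})\ge\bP(W_{i}\subseteq S)$ for any set $W_{i}$ meeting every test through $i$, and this should be balanced against the complementary fact that tests with many items are rarely negative (which limits how much the negative tests can overlap in the items they exclude), so that a convexity argument in the per‑item test counts identifies the near‑constant‑weight design as the worst case; concentration of $m$ around its mean then follows from a bounded‑differences inequality. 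A lesser, largely routine point is to confirm that the quoted exact‑recovery converses are genuinely strong converses and admit the phrasing $\mathbb{E}[1/|\mathcal{C}|]\to0$ under the combinatorial prior.
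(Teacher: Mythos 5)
Your high-level strategy---uniform posterior on the set $\mathcal{C}$ of consistent defective sets, then show any size-$(1+\eta^{+})k$ set captures an $o(1)$ posterior fraction by exhibiting many masked non-defectives---is the same as the paper's. However there are two substantive problems.

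\paragraph{The structural bound is flawed.}
You parametrize $\mathcal{C}$ by pairs $(F,F')$ with $F\subseteq S$, $F'\subseteq M$, $|F|=|F'|$, \emph{and $S\setminus F$ covering every positive test}, and conclude $|\mathcal{C}|=\sum_{u}g_{u}\binom{m}{u}$. But the last condition is sufficient, not necessary, for $(S\setminus F)\cup F'$ to be consistent: a positive test may be covered only by items of $F$ under $S$ and only by items of $F'$ under $S'$ (masked items can carry coverage). Concretely, with $S=\{1,2\}$ and a test containing exactly $\{1,3\}$, the set $S'=\{2,3\}$ is consistent, yet $S\setminus F=\{2\}$ does not cover that test. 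Hence $|\mathcal{C}|>\sum_{u}g_{u}\binom{m}{u}$ in general, and your numerator count $\sum_{u}g_{u}\binom{a}{u}$ also undercounts; the asserted inequality $\frac{|\{S'\in\mathcal{C}:S'\subseteq\hat{S}\}|}{|\mathcal{C}|}\le\frac{1}{|\mathcal{C}|}+\frac{(1+\eta^{+})k}{m}$ does not follow. The paper sidesteps this by working with a \emph{lower} bound on $|\mathcal{V}|$ constructed explicitly: each ``plausible'' consistent subset of $\hat{S}$ is associated with many ``implausible'' consistent sets produced by swapping a designated masked defective with a masked non-defective lying \emph{outside} $\hat{S}$, and a Gilbert--Varshamov-style greedy selection extracts enough plausible sets whose implausible neighbors are pairwise disjoint. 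This avoids any claim about a complete parametrization of $\mathcal{C}$.

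\paragraph{Lemma~A is the theorem, and it is not proved.}
You correctly identify that the heart of the matter is showing that \emph{every} non-adaptive design of the stated size must leave $\omega(k)$ non-defectives masked with high probability, but you only gesture at an ``intended route'' (a convexity argument in per-item test degrees together with the observation that large tests are rarely negative, followed by bounded-difference concentration). That sketch does not establish the claim, and in fact the argument that the near-constant-column-weight design is the pointwise worst case for the masking count is far from obvious and is not how the literature proves these converses. The paper's actual proof adapts \cite[Prop.~3.4--3.5]{coja2020optimal} and \cite[Eq.~3.12]{bay2022optimal}, which give high-probability lower bounds on the number of \emph{masked defectives} under an i.i.d.\ prior for $\theta$ near $1$. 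Translating these to the combinatorial prior, to masked \emph{non-defectives}, and to the target sparsity $\theta$ requires: (i) a two-step generation of $S$ (i.i.d.\ then trim/extend) together with \cite[Lemma~3.1]{bay2022optimal} and a tail bound on $M'_{1}\mid M'$ to pass from masked defectives to a lower bound on the total number of masked items; (ii) a coupling argument (bounding the probability a masked item loses its status when going from $S'$ to $S$) under a bounded tests-per-item assumption; (iii) dropping that bounded-degree assumption via \cite[Lemmas~3.7--3.8]{coja2020optimal}; and (iv) the dummy-non-defectives reduction to move from $\theta'\approx1$ down to the target $\theta$. None of this is routine. As written, your proposal assumes away exactly the difficult part of the theorem.

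Your ``ingredient 1'' (strong converses for exact recovery imply $\mathbb{E}[1/|\mathcal{C}|]\to0$ for the MAP decoder under the combinatorial prior) is correct and is also used implicitly by the paper (masked defectives exist, so $|\mathcal{V}|\ge2$ with high probability). But because of the two issues above, the proposal does not constitute a proof.
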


While the results in Theorems \ref{thm2} and \ref{thm3} establish the optimal rate for SUBSET and SUPERSET in the specified parameter regimes, there are more fine-grained notions of optimality that remain unsolved. In particular, it is unclear whether the range of $\lambda$ (and hence $\eta^-, \eta^+$) in Theorems \ref{thm2} and \ref{thm3} can be improved, or if there exists a polynomial-time algorithm for the SUBSET problem that can achieve the same guarantees of that in Theorem \ref{thm2} (recall that our algorithm has exponential runtime). These questions are further discussed in the following remarks.

\begin{remark} \label{rem:lambda}
{\em (Discussion on $\eta^-, \eta^+$)}
While the range of $\lambda$ in Theorems \ref{thm2} and \ref{thm3} is not our main focus, we note that the range in Theorem \ref{thm2} permits $\eta^- = \big(\sqrt{\frac{k}{n}}\big)^{1-\epsilon}$ (i.e., $k\big(\sqrt{\frac{k}{n}}\big)^{1-\epsilon}$ false negatives) for arbitrarily small $\epsilon > 0$, which is a stronger result than $\eta^- = \Theta(1)$ or $\eta^- = k^{-o(1)}$.  Similarly, according to the range in Theorem \ref{thm3}, when $\theta \in \big(\frac{\log 2}{1+\log 2}, \frac{1}{2}\big)$ the converse result holds even when $\eta^+ = \big( \frac{n}{k^2} \big)^{1-\epsilon}$ (i.e., $k\big( \frac{n}{k^2} \big)^{1-\epsilon} \gg k$ false positives).

While we attempted to find the smallest possible values of $\eta^-, \eta^+$ in Theorems \ref{thm2} and \ref{thm3} based on what our own analysis can give, we do not claim these to be the best possible, especially when considering arbitrary non-adaptive designs.  In particular, we expect that adopting a near-constant column weight design for SUBSET would lead to smaller values of $\eta^-$ being allowed. One way of seeing this is to observe that setting $\theta < \frac{1}{3}$ and $\lambda \approx \frac{1}{2}$ in Theorem \ref{thm2} permits $\eta^- < \frac{1}{k}$, which is equivalent to exact recovery.  This is consistent with exact recovery being possible for all $\theta \le \frac{1}{3}$ under the Bernoulli design.  For the near-constant column weight design, this range increases to $\theta \le \frac{\log 2}{1+\log 2} \approx 0.409$, which suggests that the range of allowed $\lambda$ may be higher, namely $\lambda \in [0,\log 2)$.

Despite being potentially suboptimal in this sense, we stick to the Bernoulli design since it simplifies the analysis, and because making $\eta^-, \eta^+$ as small as possible is not our main focus.  
\end{remark}

\begin{remark} {\em (Discussion on Computation)}
    The optimal threshold for SUPERSET can be achieved with polynomial-time decoding. When the rate matches that of exact recovery, this follows directly from the results of \cite{coja2020optimal} based on spatial coupling.  The only other case is that in which the rate is $\log 2$, which is achieved by the Definitive Defectives (DD) algorithm (Theorem \ref{thm1}).  
    For the SUBSET problem, however, our achievability proof (Theorem \ref{thm2}) is purely information-theoretic and uses a highly inefficient decoder, since the first step of our algorithm uses a decoder from \cite{scarlett2016phase} that searches over all ${n \choose k} = e^{\Theta(k\log n)}$ subsets of size $k$, and our second step involves a similar kind of search over $\Omega\big({k \choose 2\eta^- k}{n \choose \eta^-k}\big) = n^{\omega(1)}$ sets. We leave it as an open problem as to whether the same can be achieved in polynomial time, possibly via the spatial coupling approach of \cite{coja2020optimal}; see Section \ref{sec:conclusion} for further discussion.
\end{remark}

We can also establish the tightness of these results by comparing them to existing results:
\begin{itemize}
    \item In Theorem \ref{thm2} the rate is 1, and it is known from \cite{truong2020all} that this cannot be improved even when we make two further relaxations: (i) only require $\eta^- = o(1)$ instead of the smaller value from Theorem \ref{thm2}; (ii) additionally allow $o(k)$ false positives instead of allowing none (i.e., two-sided approximate recovery).
    \item In Theorem \ref{thm3}, we match the upper bound from Theorem \ref{thm1} for all $\eta^+ = k^{o(1)}$.
\end{itemize}
Based on these findings and Figure \ref{fig:rates}, we see that the order of difficulty of these problems from ``hardest'' to ``easiest'' is (i) exact recovery, (ii) SUPERSET, (iii) SUBSET, (iv) two-sided approximate recovery; strict inequality in the optimal rate is possible between (i)-(ii) and (ii)--(iii), whereas (iii)-(iv) in fact have the same optimal rate of 1, at least for the range of $\eta^-$ stated in Theorem \ref{thm2}.

To give some intuition as to why SUPERSET is ``harder'' than SUBSET, we consider the number of ``masked'' items, whose status is ambiguous due to only appearing in tests with other defectives (see Definition \ref{def:masked} below for a precise definition):
\begin{itemize}
    \item With high probability, for rates above $\log 2$ there are a large number ($\omega(k)$) of masked non-defective items, whereas for rates below $\log 2$ there are relatively few ($o(k)$).
    \item With high probability, for rates in $(\zeta(\theta),1)$ (where we recall that $\zeta(\theta)$ is the exact recovery rate), there are some (at least $1$ but $o(k)$) masked defective items, whereas for rates below $\zeta(\theta)$ there are none.
    \item For rates above 1, estimates very far from the true defective set become consistent with the test results, so none of the approximate recovery criteria can be attained.
    \item SUPERSET is solvable when there are \emph{either} relatively few masked non-defective items (so they can be guessed as being defective without significantly increasing the size of the estimate), \emph{or} there are no masked defectives and the rate is below 1 (meaning we can achieve exact recovery and thus solve SUPERSET). % Hence for rates above $R^*$, neither of these conditions hold, meaning solving SUPERSET is not plausible.
    \item In contrast, SUBSET is solvable for rates all the way up to $1$, since the $o(k)$ masked defectives can be guessed as being non-defective without significantly impacting the size of the estimate, and even the presence of $\omega(k)$ masked non-defectives does not impact this.
\end{itemize}
Note that the above claims on when masking occurs will be evident in our proofs, and can also be inferred from \cite{coja2020optimal}.

For SUPERSET, the proof of Theorem \ref{thm3} is directly based on formalizing the above intuition, with masking playing a central role.  In contrast, for SUBSET, Theorem \ref{thm2} is an achievability result, so a concrete algorithm is needed.  Our high-level approach is to (i) form an initial two-sided approximate recovery estimate $\hat{S}'$ using a Bernoulli test design \cite{scarlett2016phase}, and (ii) using the same tests and their outcomes, perform a ``local search'' of size-$(1-\eta^-)k$ sets close to $\hat{S}'$, seeking one that ``explains'' as many positive tests as possible.  With step (i) coming from prior work (with minor modifications), the bulk of our analysis is concerned with step (ii), which departs significantly from existing work.

A further consequence of Theorem \ref{thm2} is when we only require recovering a fraction $1-\alpha$ of the defectives for a fixed constant $\alpha \in (0,1)$, we can achieve a rate of $\frac{1}{1-\alpha}$.  Formally, we have the following.
%that for any fixed (not necessarily small) $\alpha \in (0,1)$, one can recover an estimate $\hat{S} \subseteq S$ of size $(1-\alpha)k$ using $T = (1 - \alpha)(k \log_{2}\frac{n}{k})(1+\epsilon)$ tests with high probability. This result is summarized in the following corollary. 

\begin{corollary} \label{corollary1}
    For $\epsilon > 0$ and %$\eta^- = \Omega(k^{-\lambda(\frac{1}{\theta}-1)})$ for any $\lambda \in [0, \frac{1}{2})$, and for a 
    any fixed $\alpha \in (0,1)$, there exists an algorithm that outputs an estimate $\hat{S}$ of $S$ of size $(1 - \alpha)k$ such that $\hat{S} \subseteq S$ with probability $1-o(1)$ whenever $T \geq (1-\alpha)(k\log_{2}\frac{n}{k})(1+\epsilon)$. In other words, a rate of $\frac{1}{1-\alpha}$ is achievable.
\end{corollary}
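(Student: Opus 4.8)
The plan is to deduce Corollary~\ref{corollary1} from Theorem~\ref{thm2} by running the SUBSET algorithm of that theorem on a randomly chosen subpopulation of roughly $(1-\alpha)n$ items, so that recovering almost all defectives \emph{within the subpopulation} already yields $(1-\alpha)k$ of the $k$ defectives overall. Fix a small constant $\epsilon' > 0$ to be chosen in terms of $\epsilon$, let $\mathcal{V} \subseteq [n]$ be a uniformly random subset with $|\mathcal{V}| = \lceil (1-\alpha)(1+\epsilon')n \rceil$, and restrict the test design to $\mathcal{V}$ (that is, $\sfX_{ti} = 0$ whenever $i \notin \mathcal{V}$). Then $Y_t = \bigvee_{i \in S \cap \mathcal{V}} \sfX_{ti}$, so the observations are exactly those of a noiseless group testing instance on the item set $\mathcal{V}$ with (random) defective set $S \cap \mathcal{V}$.

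First I would control $k' \coloneqq |S \cap \mathcal{V}|$. Since $S$ is fixed with $|S| = k$ and $\mathcal{V}$ is a uniformly random set of the stated size, $k'$ is hypergeometric with mean $\frac{|\mathcal{V}|}{n}k = (1-\alpha)(1+\epsilon')k\,(1+o(1))$, and since this mean is $\Theta(n^\theta) \to \infty$, standard concentration for the hypergeometric distribution gives $k' = (1-\alpha)(1+\epsilon')k\,(1+o(1))$ with probability $1 - o(1)$; call this event $\mathcal{E}$. On $\mathcal{E}$ the subproblem has $|\mathcal{V}| = \Theta(n)$ items and $k' = \Theta(n^\theta)$ defectives, hence parameter $\theta' \coloneqq \frac{\log k'}{\log|\mathcal{V}|} \to \theta \in (0,1)$, so Theorem~\ref{thm2} applies to it (taking $\lambda = 0$, so that $\eta^-$ may be any sufficiently small constant): there is a non-adaptive algorithm on $\mathcal{V}$ that, using $T$ tests with $T \ge (1+\delta)\log_2\binom{|\mathcal{V}|}{k'}$ for any fixed $\delta > 0$, outputs $\hat{S}' \subseteq S \cap \mathcal{V} \subseteq S$ with $|\hat{S}'| \ge (1-\eta^-)k'$ with failure probability $o(1)$. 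I would then output any $\hat{S} \subseteq \hat{S}'$ with $|\hat{S}| = (1-\alpha)k$; this is possible on $\mathcal{E}$ provided $(1-\eta^-)(1+\epsilon') > 1$, which holds once $\eta^-$ is chosen as a small enough constant (after $\epsilon'$).

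It then remains to check the number of tests. On $\mathcal{E}$ we have $\log_2\binom{|\mathcal{V}|}{k'} = k'\log_2\frac{|\mathcal{V}|}{k'}(1+o(1)) = (1-\alpha)(1+\epsilon')\,k\log_2\frac{n}{k}\,(1+o(1))$, using $\frac{|\mathcal{V}|}{k'} = \frac{n}{k}(1+o(1))$ and $\log\frac{1}{1-\alpha} = O(1) = o(\log\frac{n}{k})$. Hence the algorithm succeeds whenever $T \ge (1+\delta)(1+\epsilon')(1+o(1))(1-\alpha)\,k\log_2\frac{n}{k}$; choosing $\epsilon'$ and $\delta$ small enough that $(1+\delta)(1+\epsilon')(1+o(1)) \le 1 + \epsilon$ for large $n$, and using $(1-\alpha)k\log_2\frac{n}{k} = (1-\alpha)\log_2\binom{n}{k}(1+o(1))$, this is implied by $T \ge (1-\alpha)(k\log_2\frac{n}{k})(1+\epsilon)$. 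A union bound over $\mathcal{E}^{\mathsf{c}}$ and the algorithm's failure event, both $o(1)$, completes the argument, and the resulting rate is $\frac{\log_2\binom{n}{k}}{T} \to \frac{1}{1-\alpha}$.

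The step requiring the most care, and the one I would single out as the main obstacle, is that $k'$ is random whereas a non-adaptive test design must be fixed in advance, so one cannot literally instantiate Theorem~\ref{thm2} ``at the true $k'$''. I would resolve this by noting that on $\mathcal{E}$ the value $k'$ lies in a window $[\underline{k},\overline{k}]$ with $\overline{k}/\underline{k} = 1 + o(1)$, designing the subproblem for $\overline{k}$ defectives (which changes $\log_2\binom{|\mathcal{V}|}{\overline{k}}$ by only a $1+o(1)$ factor, harmlessly absorbed above), and using a decoder that simply outputs the set of items it can certify as defective and then trims to size $(1-\alpha)k$. Since having fewer than $\overline{k}$ actual defectives only makes the tests less saturated and certification easier, the certified set still has size $\ge (1-\alpha)k$ with probability $1-o(1)$; making this precise is a minor re-reading of the proof of Theorem~\ref{thm2} rather than a new argument. (Alternatively, one can first pass through the i.i.d.\ prior, under which the subpopulation's defective set is again i.i.d.\ with a cleanly controlled size, and invoke the corresponding i.i.d.-prior version of Theorem~\ref{thm2}.)
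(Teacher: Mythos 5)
Your proposal takes essentially the same route as the paper: delete a uniformly random fraction of items slightly less than $\alpha$ (equivalently, restrict all tests to a random subpopulation $\mathcal{V}$, which implicitly declares the rest non-defective), apply the Theorem~\ref{thm2} machinery on the remaining $(1-\alpha+\Theta(\xi))n$ items, use hypergeometric concentration to control the number $k'$ of surviving defectives, and absorb the $1+o(1)$ and $1+\epsilon'$ slack into the final $1+\epsilon$. This is the argument in Section~\ref{sec:cor_proof}, and your identification of the random-$k'$ issue as the main obstacle is also the paper's view. One small imprecision: your resolution of that obstacle via the monotonicity claim ``fewer than $\overline{k}$ actual defectives only makes certification easier'' does not straightforwardly apply to Algorithm~\ref{alg2}, whose decoder is not a DD-style certification rule but a maximum-explained-tests search whose correctness rests on a delicate balance of quantities (Lemmas~\ref{lem1}--\ref{lem:union_ell}) that do not obviously move monotonically with the defective count. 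The paper instead makes the needed modifications explicit (replacing $K'$ with the upper bound $\overline{k}'$ in the Hamming-distance filter, using $\underline{k}'$ for the output size, and letting the constant $5$ in $\cS_{\rm an}$ become $5+o(1)$), and cites \cite[Appendix~B]{scarlett2018noisy} for robustness of the two-sided subroutine to $k$ known only up to $1\pm o(1)$. So the approach is the same, but that step should be worked through rather than asserted by monotonicity; your parenthetical i.i.d.-prior alternative is a cleaner way to close it.
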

This is achieved using a technique from \cite{truong2020all}, which ``ignores'' roughly a fraction $\alpha$ of the items (declaring them non-defective) and then applies approximate recovery on the remaining items.  We provide the details in Section \ref{sec:cor_proof}.

\section{Proofs} \label{sec:proofs}

Here we provide the proofs of Theorem \ref{thm2}, Corollary \ref{corollary1}, and Theorem \ref{thm3}, including a description of the algorithm mentioned in Theorem \ref{thm2}. We drop the superscripts from $\eta^-, \eta^+$ throughout the proofs for ease of presentation; whether $\eta$ represents $\eta^-$ or $\eta^+$ will be clear from the context.

\subsection{Proof of Theorem \ref{thm2} (Achievability for SUBSET)} \label{sec:ach_subset_pf}

\medskip
\noindent \textbf{\underline{Preliminaries and Roadmap}}
\medskip

\noindent Since we are seeking to establish a rate of 1, we assume throughout the proof that the number of tests is $T = \big(k\log_{2}\frac{n}{k}\big)(1+\epsilon)$ for arbitrarily small $\epsilon > 0$.  We also implicitly condition on a fixed choice of $S$ (say $s := \{1,\dotsc,k\}$, though we often still adopt the original notation $S$); the analysis will hold for any such choice, with the randomness only being over the test design (whose distribution is described below).

Before presenting the algorithm for SUBSET, we first briefly describe the approximate recovery algorithm from \cite{scarlett2016phase}, the output of which is used in our algorithm.  
%We will find it convenient to implicitly condition on a specific choice of $S$, say $s = \{1,2,\dots,k\}$, since the distribution $(\sfX_{s}, \bfY \mid  S = s)$ is the same for all $s \in \cS$, and both our SUBSET decoder and the decoder in \cite{scarlett2016phase} (described below) exhibit a similar symmetry. 
For a given $\beta \in (0,1)$ the algorithm from \cite{scarlett2016phase} seeks to return an estimate $\hat{S}'$ of size $k$ such that the following condition holds:
\begin{equation}
    \max\big\{\abs{S \setminus \hat{S}'}, \abs{\hat{S}' \setminus S}\big\} \leq \beta k \iff \dH(S, \hat{S}') \leq 2\beta k, \label{eq:dH_equiv}
\end{equation}
where the equivalence holds due to both sets having the same size. (Recall that the Hamming distance $\dH$ between sets is interpreted as that between the associated length-$n$ indicator vectors.)

The tests follow a \emph{Bernoulli test design}, where each item is included in each test independently with some probability $p$. The value of $p$ is set to $p = \frac{\log 2}{k}\big(1+o(1)\big)$ such that each test is positive with probability $\frac{1}{2}$, thus maximizing the entropy of each test. The approximate recovery algorithm then checks through every $\bar{s} \in \cS$, and looks for a unique choice such that, for all partitions $(\sdif, \seq)$ of $\bar{s}$ satisfying $\abs{\sdif} > \beta k$, the following holds:
\begin{equation} \label{eq:12}
    \imath^{T}\left(\sfX_{\sdif} ; \bfy \mid  \sfX_{\seq}\right) > \gamma_{\abs{\sdif}},
\end{equation}
where $\gamma_{\lfloor\beta k + 1\rfloor}, \gamma_{\lfloor \beta k + 2\rfloor}, \dots, \gamma_{k}$ are positive constants and $\imath^{T}\left(\sfX_{\sdif} ; \bfy \mid  \sfX_{\seq}\right)$ is a \emph{conditional information density} (the precise definition is omitted as it is not important for our purposes).
%
% Note that this is a conditional version of \eqref{eq:9}, although the precise details present in \cite{scarlett2016phase} are not required for our purposes. The decoder then returns an arbitrary set which satisfies \eqref{eq:12}, if such a set exists. 
%
The proof of \cite[Theorem 3]{scarlett2016phase} reveals that for appropriate choices of $\gamma_{\lfloor\beta k + 1\rfloor}, \gamma_{\lfloor \beta k + 2\rfloor}, \dots, \gamma_{k}$ the algorithm succeeds in finding such a set if $\beta \in (0,1)$ is fixed and $T \geq \left(k\log_{2}\frac{n}{k}\right)(1+\epsilon)$ for arbitrarily small $\epsilon > 0$. It turns out that their analysis can also be extended from constant $\beta$ to any $\beta = \Omega(k^{-\lambda (\frac{1}{\theta}-1)})$ with $\lambda \in [0,\frac{1}{2})$, with only minor changes.  The proof of this slightly strengthened statement is outlined in Appendix \ref{sec:two_sided}.

We also present one further definition that will be used in the description of our algorithm.

\begin{definition} \label{def:explained}
    A test $t \in \{1,2,\dots, T\}$ is \emph{explained} by an item $i \in [n]$ if it holds that both $\sfX_{ti} = 1$ and $y_{t} = 1$, and there is no $t' \in \{1,2,\dots, T\}$ such that $\sfX_{t'i} = 1$ and $y_{t'} = 0$. Moreover, a test $t \in \{1,2,\dots, T\}$ is explained by a set $\cJ \subseteq [n]$ if $t$ is explained by at least one $i \in \cJ$.
\end{definition}

Put simply, this means that a test is explained by an item if the test is positive and contains that item, and we do not know for certain that the item is non-defective. With this definition in mind, we now describe the SUBSET algorithm; see Algorithm \ref{alg2} for the pseudo-code.

\begin{algorithm}[!t]
    \caption{SUBSET Algorithm \label{alg2}}

    \textbf{Input:} Estimate $\hat{S}'$ from the approximate recovery algorithm \cite{scarlett2016phase} with $\beta=\eta$, test matrix $\sfX$, test outcome vector $\bfy$.

    \textbf{Output:} An estimate $\hat{S}$ of $S$.

    \begin{algorithmic}[1]
        \STATE Initialize $T_{\mathrm{best}} = 0$
        \FOR{all $\bar{S}$ such that $\dH(\hat{S}',\bar{S}) \le 3\eta k$ and $\abs{\bar{S}}=(1-\eta)k$}
        \STATE Set $T_{\mathrm{exp}}(\bar{S}) = (\#$tests explained by items in $\bar{S})$
        \IF{$T_{\mathrm{exp}}(\bar{S}) > T_{\mathrm{best}}$}
        \STATE Set $T_{\mathrm{best}} \leftarrow T_{\mathrm{exp}}(\bar{S})$ and $S_{\mathrm{best}} \leftarrow \bar{S}$
        \ENDIF
        \ENDFOR
        \RETURN{$\hat{S}=S_{\mathrm{best}}$}
    \end{algorithmic}
\end{algorithm}

The SUBSET algorithm first takes the estimate returned from the approximate recovery algorithm (which has size $k$), and then searches through all subsets of size $(1 - \eta)k$ within Hamming distance at most $3\eta k$ from the approximate recovery estimate, returning the set that explains the most tests. Note that this algorithm does not conduct any additional tests outside of the ones conducted to obtain $\hat{S}'$. As such, the testing matrix $\sfX$ follows a Bernoulli test design with the following inclusion probability:
\begin{equation}
    p = \frac{\log 2}{k}\big(1+o(1)\big) ~\text{ satisfying }~ \bP(Y_t = 1) = \frac{1}{2}. \label{eq:p_choice}
\end{equation}

The algorithm is successful if the returned set $\hat{S}$ is a subset of the defective set $S$. We will prove that this is indeed the case with high probability. To show this, it suffices to show that a subset of $S$ explains the most tests with high probability among all sets of size $(1 - \eta)k$ within distance $3\eta k$ of the approximate recovery estimate $\hat{S}'$. We will show this via the following high-level roadmap:
\begin{enumerate}
    \item We first show the high probability existence of a size-$(1-\eta)k$ subset $\tilde{S} \subset S$ such that each item is good with respect to many tests.
    \item We then establish two useful facts: (i) Any set $\tilde{S}' \not \subset S$ considered by the SUBSET algorithm can be formed from $\tilde{S}$ by ``swapping out'' $\ell$ of its items for $\ell$ non-defectives, for a suitable value of $\ell$; (ii) In order to have a ``bad event'' of $\tilde{S}$ explaining fewer tests than some such $\tilde{S}' \not \subset S$, it is necessary that the number of tests that are ``no longer explained'' (upon removing $\ell$ items from $\tilde{S}$) is exceeded by the number that are ``newly explained'' (upon adding $\ell$ items to the reduced set of size $|\tilde{S}|-\ell$).
    % the number of tests explained from removing the $\ell$ defectives from $\tilde{S}$ is greater than the increase in the number of tests newly explained following the addition of the $\ell$ non-defectives to $\tilde{S} \setminus \cD_\ell$, where $\cD_\ell$ is the set of $\ell$ defectives removed from $\tilde{S}$.} 
    \item Finally, we show that the ``goodness'' property of $\tilde{S}$ outlined in Step 1 ensures that the event just described has sufficiently sufficiently small probability (even after a union bound over $\tilde{S}'$ and $\ell$) via a careful argument based on binomial concentration.
    % these ``swaps'' lead to the resulting sets $\tilde{S}' \not \subset S$ explaining fewer tests than $\tilde{S}$ with high probability for all possible swaps, via a careful argument based on binomial concentration and the union bound.
\end{enumerate}

\medskip
\noindent \textbf{\underline{Detailed Proof}}
\medskip

\noindent We proceed to formalize the above outline. A first thought would be to condition on the approximate recovery estimate $\hat{S}'$ and analyze the sets around this estimate. However, this approach appears to be difficult, since conditioning on this event means that the entries of $\sfX$ are no longer independent, which complicates the analysis.

Instead, we study sets $S'$ such that $\dH(S, S') \leq 5\eta k$, i.e., sets that lie within
\begin{equation}
    \label{eq:S_an} \cS_{\mathrm{an}} \coloneqq \{S' \subseteq [n] : \abs{S'} = (1- \eta)k, \ \dH(S, S') \leq 5\eta k\}.
\end{equation}
(The subscript ``an'' indicates being used for mathematical ``analysis''.)
By comparison, the sets considered by the algorithm are as follows:
\begin{equation}
    \label{eq:S_alg} \cS_{\mathrm{alg}} \coloneqq \{\bar{S} \subseteq [n] : \abs{\bar{S}} = (1-\eta)k, \ \dH(\hat{S}', \bar{S}) \leq 3\eta k\}.
\end{equation}
We know that $\dH(\hat{S}',S) \le 2\eta k$ (with high probability) due to the choice $\beta = \eta$ in Algorithm \ref{alg2}, and when this holds, the triangle inequality gives $\cS_{\mathrm{alg}} \subseteq \cS_{\mathrm{an}}$.  In other words, if we prove a certain property for $\cS_{\mathrm{an}}$, we can transfer that property to $\cS_{\mathrm{alg}}$.  When studying $\cS_{\mathrm{an}}$, there is no need for conditioning of the kind discussed above.

%Since $\dH(s, \hat{S}') \leq  \eta^-k$ and $\dH(\hat{S}', \bar{S}) \leq 2 \eta^-k$ for all $\bar{S} \in \cS_{\mathrm{alg}}$, this implies that $\dH(S, \bar{S}) \leq 3\eta^-k$ for all $\bar{S} \in \cS_{\mathrm{alg}}$ due to the triangle inequality. Hence, 

Thus, we seek to show that some subset $\tilde{S} \subseteq S$ of size $(1-\eta)k$ explains the most tests among those in $\cS_{\mathrm{an}}$, while ensuring that the choice of $\tilde{S}$ still satisfies $\tilde{S} \in \cS_{\mathrm{alg}}$.  To do this, we consider making ``local changes'' to this suitably-chosen $\tilde{S}$, and show that with high probability, making these local changes reduces the number of tests explained. Before we do so, we introduce the following definition.

\begin{definition} \label{def5} 
    A test $t \in \{1,2,\dots, T\}$ is \emph{good} with respect to a defective item $i \in S$ if $i$ appears in $t$ and no other defective items appear in $t$.
\end{definition}

This definition is relevant since when replacing an item $i \in \tilde{S}$ by some other non-defective item, these ``good'' tests are no longer explained by defective items in the perturbed set. For a given $i \in S$, a test $t \in \{1,2,\dots, T\}$ is good with respect to $i$ with probability $\frac{\log 2 (1+o(1))}{k}(1 - \frac{\log 2 (1+o(1))}{k})^{k-1} = \frac{\log 2}{2k}\big(1+o(1)\big)$ (see \eqref{eq:p_choice}). Hence, the number of tests that are good with respect to $i$ is distributed as $G_{i} \sim \bin\big(T, \frac{\log 2}{2k}(1+o(1))\big)$. The following lemma gives a high probability lower bound on the number of items $i \in S$ such that $G_{i}$ is sufficiently large. 

\begin{lemma} \label{lem1}
    If $T = \big(k\log_{2}\frac{n}{k}\big)(1+\epsilon)$ for some $\epsilon > 0$, and $\eta = \Omega((\frac{k}{n})^{\lambda})$ for some $\lambda \in \big[0,\frac{1}{2}\big)$,\footnote{We note that $k^{-\lambda(\frac{1}{\theta}-1)} = \Theta((\frac{k}{n})^\lambda)$; the former parametrization is mainly used for the theorem statement while the latter is more convenient here and in subsequent proofs.} then for any $\delta_{1} = 1 - o(1)$, it holds with probability $1 - o(1)$ that there are at least $k(1 - \eta)$ defective items $i$ such that $G_{i} \geq \frac{1 - \delta_{1}}{2}\log \frac{n}{k}$.
\end{lemma}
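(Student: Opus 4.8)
The plan is to control the number of defective items $i$ with $G_i$ exceeding the threshold $\frac{1-\delta_1}{2}\log\frac{n}{k}$ by a two-step argument: first, a Chernoff/tail bound shows that for a single fixed defective $i$, the ``bad'' event $G_i < \frac{1-\delta_1}{2}\log\frac{n}{k}$ has probability at most some $\rho = \rho(n,k)$; second, since the $G_i$ are independent across $i \in S$ (each test's ``goodness with respect to $i$'' depends on disjoint randomness once we condition on $S$: the column $\sfX_{\cdot,i}$ together with whether other defective columns are all zero in that row), the count $B \coloneqq |\{i \in S : G_i < \frac{1-\delta_1}{2}\log\frac{n}{k}\}|$ is stochastically dominated by $\mathrm{Bin}(k,\rho)$. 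Then Markov's inequality (or a further Chernoff bound) gives $\bP(B > \eta^- k) \le \frac{k\rho}{\eta^- k} = \frac{\rho}{\eta^-} \to 0$ provided $\rho = o(\eta^-)$. Since $\eta^- = \Omega((k/n)^\lambda)$ with $\lambda < \tfrac12$, it suffices to show $\rho = (k/n)^{1/2 - o(1)}$ or better, i.e. $\rho$ decays faster than $(k/n)^\lambda$ for every $\lambda<\tfrac12$.

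The core computation is therefore the single-item tail bound. Recall $G_i \sim \mathrm{Bin}\big(T, \frac{\log 2}{2k}(1+o(1))\big)$ with $T = (k\log_2\frac{n}{k})(1+\epsilon)$, so the mean is $\mu \coloneqq \E[G_i] = \frac{\log 2}{2k}(1+o(1)) \cdot (k\log_2\frac{n}{k})(1+\epsilon) = \frac{1+\epsilon}{2}\log\frac{n}{k}(1+o(1))$ (using $\log 2 \cdot \log_2 x = \log x$). The threshold $\frac{1-\delta_1}{2}\log\frac{n}{k}$ is a factor $\frac{1-\delta_1}{1+\epsilon}(1+o(1))$ below $\mu$; call this factor $1-\kappa$ where $\kappa$ is bounded away from $0$ as long as $\delta_1$ is not too close to $1$ — but the lemma asks for $\delta_1 = 1-o(1)$, so I need to be careful. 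When $\delta_1 = 1-o(1)$, the threshold $\frac{1-\delta_1}{2}\log\frac{n}{k} = o(\log\frac{n}{k})$ is a vanishing fraction of the mean, which makes the lower-tail bound \emph{easier}, not harder: the relative deviation $\kappa \to 1$. Applying the standard multiplicative Chernoff lower-tail bound $\bP(G_i \le (1-\kappa)\mu) \le e^{-\kappa^2\mu/2}$, and noting $\kappa \to 1$ and $\mu = \Theta(\log\frac{n}{k})$, we get $\rho \le e^{-(1-o(1))\mu/2} = e^{-(1-o(1))\frac{1+\epsilon}{4}\log\frac{n}{k}} = (k/n)^{\frac{1+\epsilon}{4}(1-o(1))}$. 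Hmm — this exponent $\frac{1+\epsilon}{4}$ can be below $\frac12$, so Markov's bound $\rho/\eta^-$ need not vanish for $\lambda$ close to $\tfrac12$.

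The main obstacle is thus sharpening this tail estimate: the crude Chernoff bound above loses because it compares $G_i$ to its full mean rather than exploiting the small target. Instead I would bound $\bP(G_i \le g)$ for $g = \frac{1-\delta_1}{2}\log\frac{n}{k}$ directly via $\bP(G_i \le g) \le \sum_{j=0}^{g}\binom{T}{j}p_g^j(1-p_g)^{T-j} \le (g+1)\binom{T}{g}p_g^{T-... }$ — more cleanly, use $\bP(G_i = 0) = (1-p_g)^T = e^{-\mu(1+o(1))} = (k/n)^{\frac{1+\epsilon}{2}(1+o(1))}$ and, for small $g$, $\bP(G_i \le g) \le e^{-\mu}\sum_{j=0}^g \frac{\mu^j}{j!}(1+o(1))^j \le e^{-\mu} \cdot (g+1)\frac{\mu^{g}}{g!}$ when $g \le \mu$; since $g = o(\log\frac{n}{k}) = o(\mu)$, the Poisson-type correction factor $(g+1)\mu^g/g! = e^{o(\mu)}$ is subexponential in $\mu$, giving $\rho \le e^{-(1-o(1))\mu} = (k/n)^{\frac{1+\epsilon}{2}(1-o(1))}$. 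Now $\frac{1+\epsilon}{2} > \tfrac12$, so $\rho = o((k/n)^\lambda)$ for every $\lambda \in [0,\tfrac12)$, and $\rho/\eta^- = o(1)$ as required. I would then finish with the stochastic-domination/Markov step described above. The remaining bookkeeping — making the independence-across-$i$ claim precise (via a coupling that reveals columns of $S$ one at a time, or by direct computation since the events $\{Y_t\text{ good for }i\}$ for a fixed $t$ partition over $i\in S$), and verifying the $(1+o(1))$ factors in $p$ and $\mu$ don't interfere — is routine.
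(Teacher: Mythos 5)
Your approach matches the paper's overall structure: a single-item tail bound on $\bP\big(G_i \le \frac{1-\delta_1}{2}\log\frac{n}{k}\big)$ followed by Markov's inequality on the count of ``bad'' defectives. You also correctly diagnose that the weakened Chernoff bound with exponent $\delta^2/2$ is insufficient once $\delta_1 \to 1$. The paper sidesteps this by invoking the sharp multiplicative Chernoff bound $\bP(X \le (1-\delta)\mu) \le \exp\big(-\mu((1-\delta)\log(1-\delta)+\delta)\big)$, whose exponent tends to $1$ (not $\tfrac12$) as $\delta \to 1$. Your Poisson-type estimate $\bP(G_i \le g)\le e^{-\mu(1-o(1))}$ for $g=o(\mu)$ recovers the same conclusion by a slightly different route, and in fact yields a marginally sharper exponent $\tfrac{1+\epsilon}{2}$ in place of the paper's $\tfrac12$ (the paper crudely lower bounds $\mu_G \ge \frac{1+\epsilon}{2}\log\frac{n}{k}(1+o(1))$ by $\frac12\log\frac{n}{k}$). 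Either exponent is enough for the claimed range $\lambda\in[0,\tfrac12)$, so the core estimate is correct.

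One flaw, though it turns out to be harmless: the $G_i$ are \emph{not} independent across $i\in S$. For a fixed test $t$, the event ``$t$ is good for $i$'' depends on the entries $\sfX_{tj}$ for \emph{all} $j\in S$, not just on column $i$; for distinct $i\ne j\in S$ the events are mutually disjoint, hence negatively dependent rather than independent (disjoint nontrivial events are never independent). Your own parenthetical remark that these events ``partition over $i\in S$'' is exactly why independence fails, so the stochastic-domination-by-$\mathrm{Bin}(k,\rho)$ claim is not established as stated. Fortunately this never bites: the Markov step, which is your stated alternative and what the paper actually uses, needs only linearity of expectation (namely $\bE\big[\sum_{i\in S}\mathds{1}\{G_i<\cdot\}\big]\le k\rho$) and is agnostic to the joint law of the $G_i$. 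A Chernoff-type bound on the count, on the other hand, would require handling the negative dependence carefully, so it is best to drop that remark entirely and commit to Markov.
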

\begin{proof}
For a given $\smash{i \in S}$, the above calculations show that
\begin{align}
    \mu_{G} &\coloneqq\bE G_{i} \\
    &= \frac{T \log 2}{2k}(1+o(1)) \\
    &=\frac{(1 + \epsilon)\log 2}{2}\log_{2}\frac{n}{k}(1+o(1)) \\
    &= \frac{1 + \epsilon}{2}\log \frac{n}{k}\big(1+o(1)\big).
\end{align}
Hence, by standard binomial concentration bounds (Appendix \ref{app:bino}) and the fact that $\mu_G \geq \frac{1}{2} \log \frac{n}{k}$, it holds that
\begin{align}
    \bP\Big(G_{i} &\leq \frac{1-\delta_{1}}{2}\log\frac{n}{k}\Big) \leq \bP(G_{i} \leq (1-\delta_{1})\mu_{G})  \\
    \label{eq:gi_bound_1} &\leq \exp\left(-\big((1-\delta_{1})\log(1-\delta_{1}) + \delta_{1}\big)\mu_{G}\right) \\
    \label{eq:gi_bound_2} &\le \exp\left(-\frac{1}{2}\big((1-\delta_{1})\log(1-\delta_{1}) + \delta_{1}\big)\log \frac{n}{k} \right) \\
    \label{eq:gi_bound_3} &= \left(\frac{k}{n}\right)^{\frac{(1-\delta_{1})\log(1-\delta_{1}) + \delta_{1}}{2}} \\
    \label{eq:gi_bound_4} &= \left(\frac{k}{n}\right)^{\frac{1}{2}- o(1)},
\end{align}
where \eqref{eq:gi_bound_4} follows since $(1-\delta_{1})\log(1-\delta_{1}) + \delta_{1} = 1 - o(1)$ when $\delta_{1} = 1-o(1)$. Let 
\begin{equation}
    Z = \sum_{i \in S} \mathds{1}\left\{G_{i} \leq \frac{1-\delta_{1}}{2}\log\frac{n}{k} \right\};
\end{equation}
by the linearity of expectation, $\bE Z \leq k\left(\frac{k}{n}\right)^{\frac{1}{2} - o(1)}$, and by Markov's inequality, it follows that 
\begin{align}
    \bP\left(Z \geq \eta k\right) &\leq \frac{\bE Z}{\eta k} \nonumber \\
    &\leq \frac{ \left(\frac{k}{n}\right)^{\frac{1}{2}-o(1)}}{\eta} \nonumber \\
    \label{eq:markov_good_1} &\leq c\left(\frac{k}{n}\right)^{\frac{1}{2} - \lambda - o(1)} \\
    \label{eq:markov_good_2} &= o(1),
\end{align}
where $c > 0$ is a constant such that $\frac{1}{\eta} \leq c(\frac{n}{k})^{\lambda}$ for sufficiently large $n$ (since $\eta = \Omega((\frac{k}{n})^{\lambda})$), and \eqref{eq:markov_good_2} follows since $\lambda \in [0,\frac{1}{2})$ and $k=o(n)$. Hence, we have with probability $1 - o(1)$ that at least $k - \eta k$ defective items $i \in S$ have $G_{i} \geq \frac{1 - \delta_{1}}{2} \log \frac{n}{k}$ as claimed. 
\end{proof}

Lemma \ref{lem1} implies that the following holds with probability $1-o(1)$:
\begin{align}
    \exists &\tilde{S} \,:\, \big(\abs{\tilde{S}} = (1-\eta)k\big) \wedge \big(\tilde{S} \subseteq S\big) \nonumber \\
    &\wedge \bigg(\text{each $i \in \tilde{S}$ is in at least $\frac{1 - \delta_{1}}{2}\log\frac{n}{k}$ good tests}\bigg), \label{eq:Stilde_exists}
\end{align}
and this further implies
\begin{equation}
    \tilde{S} \in \cS_{\mathrm{alg}},
\end{equation}
which follows from the definition of $\cS_{\mathrm{alg}}$ in \eqref{eq:S_alg} along with $\dH(\tilde{S},S) = |S \setminus \tilde{S}| = \eta k$, $\dH(\hat{S}',S) \le 2\eta k$, and the triangle inequality.  The importance of $\tilde{S}$ is highlighted in the following lemma.

\begin{lemma} \label{lem:succ_conds}
    Conditioned on the event \eqref{eq:Stilde_exists}, and given $\tilde{S}$ specified in that event, the SUBSET algorithm succeeds provided that, for all $\ell = 1,2,\dotsc,5\eta k$, and every possible $\tilde{S}'$ formed by removing $\ell$ items from $\tilde{S}$ and replacing them with $\ell$ non-defective items, it holds that $\tilde{S}'$ explains fewer tests than $\tilde{S}$.  
    
    Moreover, when any set of $\ell$ items $\cD_\ell \subset \tilde{S}$ is removed from $\tilde{S}$, these removals collectively lead to at least $\frac{1-\delta_1}{2}\ell \log \frac{n}{k}$ tests no longer being explained by $\tilde{S} \setminus \cD_\ell$.
\end{lemma}
\begin{proof}
    In view of the decoding rule in Algorithm \ref{alg2}, and the fact that $\tilde{S} \in \mathcal{S}_{\rm alg}$, in order to prove the first statement, it suffices to show that the sets $\tilde{S}'$ formed in this manner (along with $\tilde{S}$ itself) collectively cover all of $\mathcal{S}_{\rm an}$ (and hence all of $\mathcal{S}_{\rm alg}$, since $\mathcal{S}_{\rm alg} \subseteq \mathcal{S}_{\rm an}$).  The range $\ell=1,2,\dotsc,5\eta k$ suffices because the Hamming distance upon performing $\ell$ swaps is $2\ell$, and any two sets in $\mathcal{S}_{\rm an}$ have Hamming distance at most $10\eta k$ due to the triangle inequality (see \eqref{eq:S_an}).  
    
     For the second claim, we note that by the final condition in \eqref{eq:Stilde_exists}, removing any item from $\tilde{S}$ leads to at least $\frac{1 - \delta_{1}}{2} \log \frac{n}{k}$ tests no longer being explained.  Thus, removing any set of $\ell$ items from $\tilde{S}$ leads to at least $\frac{1-\delta_1}{2}\ell \log \frac{n}{k}$ tests no longer being explained.
\end{proof}

We proceed to analyze the probability of some $\tilde{S}'$ from Lemma \ref{lem:succ_conds} explaining more tests than $\tilde{S}$, starting with the following lemma.

\begin{lemma} \label{lem:Stilde}
    Fix $\lambda \in (0, \frac{1}{2})$, let $\eta = \Theta((\frac{k}{n})^{\lambda})$, and suppose that $\theta > \frac{1}{3}$. Then:
    \begin{enumerate}[(i)]
        \item With probability $1-o(1)$, for any $\delta_{2} =  \omega(\frac{1}{\sqrt{T}})$, the number of positive tests is at most $\frac{T}{2}(1+\delta_{2})$.
        \item With probability $1-o(1)$, under the choice $\delta_{2} = \sqrt{\frac{16(3+ \frac{1}{\lambda})\eta k\log \frac{1}{\eta}}{T}}$, it holds for all $\bar{s} \in \cS_{\mathrm{an}}$ that the number of tests in which at least one item from $\bar{s}$ is present is at least $T(1 - 2^{-(1-\eta)(1+o(1))})(1 - \delta_{2})$.
    \end{enumerate}
\end{lemma}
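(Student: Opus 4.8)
The plan is to prove the two parts separately, both via binomial concentration, with the second part additionally invoking a union bound over $\cS_{\mathrm{an}}$ whose size we control using the Hamming-ball constraint $\dH(S,S') \le 5\eta^- k$.

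For part (i), recall from \eqref{eq:p_choice} that each test is positive independently with probability exactly $\frac{1}{2}$, so the number of positive tests is $\bin(T, \frac{1}{2})$ with mean $\frac{T}{2}$. A standard multiplicative Chernoff bound (Appendix~\ref{app:bino}) gives that the number of positive tests exceeds $\frac{T}{2}(1+\delta_2)$ with probability at most $\exp(-c\,\delta_2^2 T)$ for a constant $c>0$; since $\delta_2 = \omega(1/\sqrt{T})$ we have $\delta_2^2 T \to \infty$, so this probability is $o(1)$. (If one wants the stated bound to hold for \emph{all} such $\delta_2$ simultaneously, it suffices to take $\delta_2$ just above the threshold, e.g.\ $\delta_2 = \frac{\log T}{\sqrt T}$, since the event is monotone in $\delta_2$.)

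For part (ii), fix any $\bar{s} \in \cS_{\mathrm{an}}$, so $|\bar{s}| = (1-\eta^-)k$. In a given test, the probability that \emph{no} item of $\bar{s}$ is present is $(1-p)^{(1-\eta^-)k} = \big(1 - \frac{\log 2}{k}(1+o(1))\big)^{(1-\eta^-)k} = 2^{-(1-\eta^-)(1+o(1))}$, using $(1-p)^k \to \frac12$ from \eqref{eq:p_choice}. Hence the number of tests containing some item of $\bar s$ is $\bin\big(T, 1 - 2^{-(1-\eta^-)(1+o(1))}\big)$, with mean $T\big(1 - 2^{-(1-\eta^-)(1+o(1))}\big)$. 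A lower-tail Chernoff bound shows the count falls below $(1-\delta_2)$ times this mean with probability at most $\exp\big(-c' \delta_2^2 T (1-2^{-(1-\eta^-)(1+o(1))})\big) \le \exp(-c'' \delta_2^2 T)$ for constants $c',c''>0$ (the factor $1 - 2^{-(1-\eta^-)(1+o(1))}$ is bounded away from $0$ since $\eta^- = o(1)$). Now union bound over $\cS_{\mathrm{an}}$: since each $S' \in \cS_{\mathrm{an}}$ is obtained from $S$ by at most $\frac{5}{2}\eta^- k$ swaps, $|\cS_{\mathrm{an}}| \le \sum_{j \le \frac52 \eta^- k}\binom{k}{j}\binom{n}{j} \le \big(\tfrac{5}{2}\eta^- k + 1\big)\binom{k}{\frac52\eta^- k}\binom{n}{\frac52\eta^- k} \le \exp\big(O(\eta^- k \log \frac{n}{\eta^- k})\big) = \exp\big(O(\eta^- k \log \frac{n}{k})\big)$ using $\log\frac1{\eta^-} = O(\log\frac nk)$. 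Plugging in $\delta_2 = \sqrt{16\eta^- k \log\frac{1}{\eta^-}/T}$ makes $\delta_2^2 T = 16\eta^- k\log\frac1{\eta^-}$, and the key point is that this must dominate $\log|\cS_{\mathrm{an}}|$; here we need the constant $16$ (and the assumptions $\theta > \frac13$, $\lambda < \frac12$) to ensure $\delta_2^2 T \ge (1+\Omega(1))\log|\cS_{\mathrm{an}}|$, so that the union bound gives $o(1)$. Finally one checks this choice of $\delta_2$ is itself $o(1)$ (indeed $\eta^- k \log\frac1{\eta^-}/T = o(1)$ since $T = \Theta(k\log\frac nk)$ and $\eta^- = o(1)$), so the displayed bound is nontrivial.

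The main obstacle is the bookkeeping in part (ii): making the exponent $16\eta^- k\log\frac1{\eta^-}$ genuinely beat $\log|\cS_{\mathrm{an}}|$ requires carefully relating $\log\frac1{\eta^-}$ to $\log\frac nk$ via the parametrization $\eta^- = \Theta((k/n)^\lambda)$ (so $\log\frac1{\eta^-} = \lambda\log\frac nk \cdot(1+o(1))$) and tracking how the hypotheses $\theta > \frac13$ and $\lambda \in (0,\frac12)$ enter — this is exactly where the specific constant in $\delta_2$ and the restriction on $\lambda$ are used. The concentration steps themselves are routine applications of the bounds in Appendix~\ref{app:bino}.
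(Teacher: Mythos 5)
Part (i) of your argument is identical to the paper's (Chernoff on $\bin(T,\tfrac12)$), so that is fine.

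The gap is in part (ii): the union bound over all of $\cS_{\mathrm{an}}$ does not close with the given $\delta_2$. Since $|\bar{s}| = (1-\eta^-)k$ while $|S|=k$, the constraint $\dH(S,\bar{s}) \le 5\eta^-k$ allows $|\bar{s}\setminus S| \le 2\eta^-k$ and $|S\setminus\bar{s}| \le 3\eta^-k$, so $|\cS_{\mathrm{an}}|$ is of the same order as $\binom{k}{3\eta^-k}\binom{n-k}{2\eta^-k}$. Using $\eta^- = \Theta((k/n)^\lambda)$ so that $\log\tfrac{1}{\eta^-} = \lambda\log\tfrac{n}{k}(1+o(1))$, this gives
\begin{equation*}
\log|\cS_{\mathrm{an}}| = \Big(3\eta^-k\log\tfrac{1}{\eta^-} + 2\eta^-k\log\tfrac{n}{\eta^-k}\Big)\big(1+o(1)\big) = (2+5\lambda)\,\eta^-k\log\tfrac{n}{k}\,\big(1+o(1)\big),
\end{equation*}
while your Chernoff exponent is at most $\tfrac12\delta_2^2\mu_{\bar{s}} \le \tfrac14\delta_2^2 T = 4\eta^-k\log\tfrac{1}{\eta^-} = 4\lambda\,\eta^-k\log\tfrac{n}{k}(1+o(1))$ (the paper uses the cruder factor $\tfrac18$, giving $2\lambda$). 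Since $4\lambda < 2+5\lambda$ for every $\lambda > 0$, the union bound blows up rather than vanishing. Your assertion that $\delta_2^2 T \ge (1+\Omega(1))\log|\cS_{\mathrm{an}}|$ is therefore false: the leading constant in your $O(\eta^-k\log\tfrac{n}{k})$ estimate of $\log|\cS_{\mathrm{an}}|$ is at least $2$ (coming from $\binom{n-k}{2\eta^-k}$) and does not shrink with $\lambda$, so neither the constant $16$ nor the restrictions $\theta>\tfrac13$, $\lambda<\tfrac12$ rescue it.

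The paper's proof avoids this by taking the union bound only over the $\binom{k}{\eta^-k}$ subsets $\bar{s}\subseteq S$ of size $(1-\eta^-)k$ (its count ``$\binom{k}{(1-\eta^-)k}=\binom{k}{\eta^-k}$ sets in $\cS_{\mathrm{an}}$'' is really a count of subsets of $S$, not of all of $\cS_{\mathrm{an}}$). Then $\log\binom{k}{\eta^-k} = \eta^-k\log\tfrac{1}{\eta^-}(1+o(1)) = \lambda\eta^-k\log\tfrac{n}{k}(1+o(1))$ is comfortably dominated by the Chernoff exponent $2\eta^-k\log\tfrac{1}{\eta^-}$, so the union bound is $o(1)$. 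This narrower version is all that the subsequent analysis needs, since part (ii) is invoked only for $\tilde{S}\subseteq S$. The fix to your argument is therefore to restrict the union bound to $\{\bar{s}\subseteq S : |\bar{s}|=(1-\eta^-)k\}$ and observe that this suffices downstream.
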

\begin{proof}
    We begin with claim (i). Since each test is positive with probability $\frac{1}{2}$ and all tests are mutually independent, the number of positive tests is distributed as $\bin(T, \frac{1}{2})$. Thus, due to binomial concentration (Appendix \ref{app:bino}), it holds that there are at most $\frac{T}{2}(1+\delta_{2})$ positive tests with probability at least $1 - \exp(-\frac{\delta_2^2}{3}T)$, which is $1-o(1)$ when $\delta_{2} = \omega(\frac{1}{\sqrt{T}})$.
    
    We now proceed with claim (ii).  For a given test and $\bar{s} \in \cS_{\mathrm{an}}$, at least one item $i \in \bar{s}$ is included in it with probability $1 - \big(1 - \frac{\log 2 (1+o(1))}{k}\big)^{(1-\eta)k} = 1 - e^{-(1-\eta)\log 2(1+o(1))} = 1 - 2^{-(1 - \eta)(1+o(1))}$. For brevity, we say that $\bar{s}$ is \emph{present} in a test if at least one item from $\bar{s}$ is in the test. Thus, the number of tests in which $\bar{s}$ is present is distributed as $\bin(T,1- 2^{-(1-\eta)(1+o(1))})$. Denoting the mean of this distribution as $\mu_{\bar{s}}$ and using the multiplicative Chernoff bound (Appendix \ref{app:bino}), the number of tests in which $\bar{s} \in \cS_{\mathrm{an}}$ is present is at most $\mu_{\bar{s}} (1 - \delta_{2})$ with probability at most $\exp(-\frac{1}{2}\delta_{2}^{2}\mu_{\bar{s}}) \leq \exp(-\frac{1}{8}\delta_{2}^{2}T)$, where we have used the crude bound $\mu_{\bar{s}} \geq \frac{1}{4}T$ for sufficiently large $n$ (with $\eta = o(1)$). Substituting the specified value of $\delta_{2}$, this bound becomes
    \begin{align}
        \bP(\bin(T, 1- &2^{-(1-\eta)(1+o(1))}) \leq (1-\delta_{2})\mu_{\bar{s}}) \nonumber \\
        \label{eq:present_bound}&\leq \exp\left(-2\Big(3 + \frac{1}{\lambda}\Big)\eta k\log \frac{1}{\eta}\right).
    \end{align}
    Note that when $\theta > \frac{1}{3}$ and $\eta = \Theta((\frac{k}{n})^{\lambda})$ for some $\lambda \in (0, \frac{1}{2})$ as assumed in the lemma statement, the expression $\eta k \log\frac{1}{\eta}$ tends to $\infty$ as $n \to \infty$, meaning that the bound in \eqref{eq:present_bound} is $o(1)$. Consequently, we also have $\sqrt{\frac{16(3+\frac{1}{\lambda})\eta k\log\frac{1}{\eta}}{T}} = \omega(\frac{1}{\sqrt{T}})$, % when $T = \big(k\log_{2}\frac{n}{k}\big)(1+\epsilon)$ (via $k \ge n^{\frac{1}{3} + \Omega(1)}$), 
    so we can take this as the value of $\delta_{2}$ in (i).

We now use \eqref{eq:present_bound} along with a union bound to obtain a counterpart that holds \emph{simultaneously} for all $\bar{s} \in \cS_{\mathrm{an}}$. To do so, we first require an upper bound on the size of $\cS_{{\rm an}}$. Toward this, observe that we can calculate $\lvert \cS_{{\rm an}}\rvert$ directly via the following counting argument:
\begin{itemize}
    \item Consider $\bar{s} \in \cS_{{\rm an}}$ as being formed by taking $(1-\eta)k -i$ items from $S$ (recall that we implicitly condition on fixed $S = s$) and $i$ items from $S^{\mathsf{c}}$ for some $i \geq 0 $.
    \item Since $\lvert \bar{s}\rvert = (1-\eta)k$, the Hamming distance between $\bar{s}$ and $S$ satisfies $\dH(\bar{s}, S) = \eta k + 2i$. Moreover, since $\bar{s} \in \cS_{{\rm an}}$, it satisfies $\dH(\bar{s}, S) \leq 5\eta k$. Thus, it holds that $0 \leq i \leq 2\eta k$.
    \item For each $i$, the number of possible $\bar{s} \in \cS_{{\rm an}}$ that can be formed this way is ${k \choose (1-\eta)k - i}{n - k \choose i} ={k \choose \eta k +i}{n - k \choose i}$.
    \item Collectively this construction for $i = 0, \dots, 2\eta k$ produces all sets in $\cS_{{\rm an}}$, and it follows that
    \begin{equation}
        \label{eq:San_size} \lvert \cS_{{\rm an}}\rvert = \sum_{i=0}^{2\eta k} {k \choose\eta k + i}{n - k \choose i}.
    \end{equation}
\end{itemize}
We now proceed to bound $\lvert \cS_{{\rm an}}\rvert$:
\begin{align}
    \label{eq:San_size_bound_1} \lvert \cS_{{\rm an}}\rvert &= \sum_{i=0}^{2\eta k} {k \choose\eta k + i}{n - k \choose i} \\
    \label{eq:San_size_bound_2} &\leq (2\eta k + 1){k \choose 3 \eta k}{n \choose 2 \eta k} \\
    &= (2\eta k + 1) \exp\Big(3 \eta k \log\Big(\frac{1}{3 \eta }\Big)(1+o(1))  \nonumber \\
    \label{eq:San_size_bound_3} &\qquad \qquad + 2 \eta k \log \Big(\frac{n}{2 \eta k}\Big)(1+o(1))\Big) \\
    \label{eq:San_size_bound_4} &= \exp\left(\Big(5 + \frac{2}{\lambda}\Big) \eta k \log\Big(\frac{1}{\eta}\Big)(1+o(1))\right),
\end{align}
where \eqref{eq:San_size_bound_2} follows from bounding ${n - k \choose i}$ by ${n \choose i}$ and since the largest term in the sum is when $i = 2\eta k$, \eqref{eq:San_size_bound_3} holds since $\eta k = o(k)$ (i.e., $\eta = o(1)$), and \eqref{eq:San_size_bound_4} follows from writing $\frac{n}{k} = \Theta((\frac{1}{\eta})^{\frac{1}{\lambda}})$ and simplifying, as well as absorbing the prefactor $2 \eta k + 1$ into the $o(1)$ term.

With the bound on $\lvert \cS_{{\rm an}}\rvert$ in place, we are now in a position to take a union bound over all $\bar{s} \in \cS_{{\rm an}}$. The distribution of the number of tests in which $\bar{s}$ is present is the same for each $\bar{s}$ and satisfies the tail bound in \eqref{eq:present_bound}, and the number of these sets is upper bounded via \eqref{eq:San_size_bound_4}. Thus, the probability of any such set being present in less than $T(1 - 2^{-(1-\eta)(1+o(1))})(1 - \delta_{2})$ tests is upper bounded by
\begin{align}
    & \exp\left(\Big(5 + \frac{2}{\lambda}\Big) \eta k \log\Big(\frac{1}{\eta}\Big)(1+o(1)) \right) \nonumber \\
    &\qquad \qquad \qquad \times \exp\left(-2 \Big(3+ \frac{1}{\lambda}\Big)\eta k\log\frac{1}{\eta}\right)
    \\ %\label{eq:union_bound_2} &= \exp\left(\Big(5 + \frac{2}{\lambda}\Big) \eta k \log\Big(\frac{1}{\eta}\Big)(1+o(1)) - 2\Big(3 + \frac{1}{\lambda}\Big)\eta k \log \frac{1}{\eta}\right) \\
    \label{eq:union_bound_3} &= \exp\left(-\eta k \log \Big(\frac{1}{\eta}\Big) (1-o(1))\right) \\
    \label{eq:union_bound_4} &= o(1),
\end{align}  
    % the assumptions $\eta = \Theta((\frac{k}{n})^\lambda)$, $\lambda \in \big(0,\frac{1}{2}\big)$, and $\theta > \frac{1}{3}$. 
where \eqref{eq:union_bound_4} follows since $\eta k \log \frac{1}{\eta} \to \infty$ as established following \eqref{eq:present_bound}. Thus, all sets in $\cS_{\mathrm{an}}$ are present in at least $T(1 - 2^{-(1-\eta)(1+o(1))})(1 - \delta_{2})$ tests.
\end{proof}

Lemma \ref{lem:Stilde} hinges on the two specified assumptions on $\eta$ and $\theta$; we now proceed to argue that it is safe to assume both of these for the purpose of proving Theorem \ref{thm2}.  For $\eta$, this is because we only replaced the assumed scaling from $\Omega(\cdot)$ to $\Theta(\cdot)$ (and restricted $\lambda$ to $(0, \frac{1}{2})$ rather than $[0,\frac{1}{2})$); however, solving SUBSET with a smaller $\eta$ immediately implies the same for larger $\eta$, since one can simply remove an arbitrary subset of items from the final estimate.  For $\theta$, the restriction $\theta > \frac{1}{3}$ is valid because when $\theta \leq \frac{1}{3}$, a rate of $1$ is achievable even for exact recovery \cite{scarlett2016phase, coja2020optimal}, which implies solving SUBSET for arbitrary $\eta$.  Hence, for the rest of the proof, we adopt the stricter assumptions from Lemma \ref{lem:Stilde}, in particular noting that $\eta = o(1)$.

% Hence, we assume both of these conditions do indeed hold throughout the rest of the proof. We can do so since a rate of $1$ can be attained by exact recovery for $\theta \leq \frac{1}{3}$ \cite{scarlett2016phase,coja2020optimal}, which implies solving SUBSET. 

Using part (ii) of Lemma \ref{lem:Stilde} and the fact that $\tilde{S} \in \cS_{\mathrm{an}}$, we have that $\tilde{S}$ explains at least $T(1- 2^{-(1-\eta)(1+o(1))})$ tests with probability $1-o(1)$ (since $\tilde{S}$ explains every test it is present in due to $\tilde{S} \subseteq S$). Hence, and using part (i) of Lemma \ref{lem:Stilde}, the number of positive tests \emph{not} explained by $\tilde{S}$ %(which only depends on the placements of defective items) 
is at most the following, with probability $1 - o(1)$:
\begin{align} 
    \frac{T}{2}(1 + \delta_{2}) - T(1 - 2^{-(1-\eta)(1+o(1))})(1 - \delta_{2}) \nonumber \\
   \label{eq:unexplained} = \frac{T}{2}\big(2^{\eta(1+o(1))} - 1 + \delta_{2}(3 - 2^{\eta(1+o(1))})\big).
\end{align}
It is also useful to note the following scaling on $\delta_2$ from Lemma \ref{lem:Stilde}:
\begin{align}
    \delta_{2} = \sqrt{\frac{16{(3+ \frac{1}{\lambda})}\eta k\log \frac{1}{\eta}}{T}} = \Theta\bigg( \sqrt{ \frac{\eta \log \frac{1}{\eta} }{ \log n }} \bigg) &= \Theta\big( \sqrt{\eta} \big) \nonumber \\
    &= n^{-\Omega(1)}, \label{eq:delta2_scaling}
\end{align}
where we applied $T = \Theta(k \log n)$ and $\log\frac{1}{\eta} = \Theta(\log n)$ (via $\eta = \Theta((\frac{k}{n})^{\lambda})$).

Using these findings and recalling Lemma \ref{lem:succ_conds}, the following lemma gives an upper bound on the probability that some $\tilde{S}'$ explains more tests than $\tilde{S}$ upon swapping out $\ell$ defectives for $\ell$ non-defectives.
% a given set of $\ell$ defective items from $\tilde{S}$ with a given set of $\ell$ non-defective items for some $\ell \in \{1, 2, \dots 5\eta^-k\}$. 

\begin{lemma} \label{lem:fixed_ell}
    Suppose that the placements of defective items are such that the high-probability events \eqref{eq:Stilde_exists} and \eqref{eq:unexplained} hold, and consider $\tilde{S}$ from \eqref{eq:Stilde_exists}.  Fix some $\ell \in \{1, 2, \dots, 5\eta k\}$ and sets $\cD_\ell \subset \tilde{S}$ and $\cN_{\ell} \subset S^\mathsf{c}$ with $\abs{\cD_\ell} = \abs{\cN_{\ell}} = \ell$. Let $\tilde{S}' =  (\tilde{S} \setminus \cD_\ell) \cup \cN_{\ell}$ be the set formed by swapping out the $\ell$ defective items from $\cD_\ell$ with the $\ell$ non-defective items from $\cN_{\ell}$. Then, denoting the number of tests explained by $\tilde{S}$ (resp. $\tilde{S}'$) as $T_{\mathrm{exp}}(\tilde{S})$ (resp. $T_{\mathrm{exp}}(\tilde{S}')$), there exists a choice of $\delta_{1} > 0$ given in Lemma \ref{lem1} such that 
    \begin{align}
        \label{eq:fixed_ell_bound} \bP(T_{\mathrm{exp}}(\tilde{S}) &\leq T_{\exp}(\tilde{S}')) \leq \exp\left(-\Omega\left(\ell \sqrt{\log n} \cdot \log n\right)\right),
    \end{align}
    where the probability is over the randomness in the placements of non-defective items.
\end{lemma}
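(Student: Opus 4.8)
Here is the plan I would follow.

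First I would condition on the two high-probability events in the statement, together with the high-probability bounds of Lemma \ref{lem:Stilde} and the (also high-probability) event that every column of $\sfX$ has weight at most $(\log n)^{2}$, so that $\tilde{S}$ is fixed, $T_{\mathrm{exp}}(\tilde{S})$ is deterministic, and the only remaining randomness is in the non-defective columns of $\sfX$. The first real step is to sharpen the second part of Lemma \ref{lem:succ_conds}. Write $\mathcal{T}(A)$ for the set of positive tests explained by some item of $A$, and recall $\tilde{S}\setminus\cD_\ell\subseteq S$. The good tests of the $\ell$ items of $\cD_\ell$ are positive, pairwise distinct, contain no item of $\tilde{S}\setminus\cD_\ell$, and number at least $\frac{1-\delta_1}{2}\ell\log\frac{n}{k}$ by \eqref{eq:Stilde_exists}; hence $T_{\mathrm{exp}}(\tilde{S})\ge|\mathcal{T}(\tilde{S}\setminus\cD_\ell)|+\frac{1-\delta_1}{2}\ell\log\frac{n}{k}$, whereas $T_{\mathrm{exp}}(\tilde{S}')=|\mathcal{T}(\tilde{S}\setminus\cD_\ell)\cup\mathcal{T}(\cN_\ell)|$. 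Therefore $T_{\mathrm{exp}}(\tilde{S})\le T_{\mathrm{exp}}(\tilde{S}')$ forces the items of $\cN_\ell$ to explain at least $\frac{1-\delta_1}{2}\ell\log\frac{n}{k}$ tests lying in the set $\mathcal{E}$ of positive tests \emph{not} explained by $\tilde{S}\setminus\cD_\ell$. The crucial point is that $\mathcal{E}$ is small: every test of $\mathcal{E}$ is either unexplained by $\tilde{S}$ (at most $O(T\delta_2)$ of these by \eqref{eq:unexplained}, with $\delta_2=\Theta(\sqrt{\eta^{-}})$ by \eqref{eq:delta2_scaling}) or contains an item of $\cD_\ell$ (at most $\ell(\log n)^{2}=O(\eta^{-}k(\log n)^{2})=o(T\delta_2)$ of these, using $\ell\le\frac{5}{2}\eta^{-}k$), so $|\mathcal{E}|=O(\sqrt{\eta^{-}}\,T)$.

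Next, for a non-defective item $j$, let $A_j$ be the event that $j$ appears in no negative test and $H_j$ the number of tests of $\mathcal{E}$ containing $j$. Since $\mathcal{E}$ is a fixed set of positive tests, $H_j\sim\bin(|\mathcal{E}|,p)$ is independent of $A_j$, and the pairs $(A_j,H_j)$ are i.i.d.\ over non-defective $j$. Using $T=(1+\epsilon)k\log_{2}\frac{n}{k}$, $p=\frac{\log 2}{k}(1+o(1))$, and the bound $N_0\ge\frac{T}{2}(1-\delta_2)$ on the number $N_0$ of negative tests (Lemma \ref{lem:Stilde}(i)), one gets $\bP(A_j)=(1-p)^{N_0}\le e^{-c_1\log\frac{n}{k}}=:\rho$ with $c_1=\frac{(1+\epsilon)(1-\delta_2)(1+o(1))}{2}>0$, while $\bE[H_j]=|\mathcal{E}|p=:\mu'=O(\sqrt{\eta^{-}}\log n)$. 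Since $\sum_{j\in\cN_\ell}H_j\mathds{1}[A_j]$ dominates the number of tests of $\mathcal{E}$ explained by $\cN_\ell$, and $\bE[e^{sH_1\mathds{1}[A_1]}]\le 1+\rho(e^{\mu'(e^{s}-1)}-1)$ for $s>0$ (from $A_1\perp H_1$ and the binomial MGF), it suffices (writing $w:=\frac{1-\delta_1}{2}\ell\log\frac{n}{k}$) to bound
\begin{equation}
    \bP\Big(\sum_{j\in\cN_\ell}H_j\mathds{1}[A_j]\ge w\Big)\le\exp\Big(-sw+\ell\rho\big(e^{\mu'(e^{s}-1)}-1\big)\Big),\qquad s>0.
\end{equation}
Choosing $s$ with $\mu'(e^{s}-1)=(c_1-\xi)\log\frac{n}{k}$ for a small fixed $\xi\in(0,c_1)$ makes the second term at most $\ell e^{-\xi\log\frac{n}{k}}=\ell(k/n)^{\xi}$, and forces $e^{s}-1=\frac{(c_1-\xi)\log\frac{n}{k}}{\mu'}=\Omega(1/\sqrt{\eta^{-}})$, hence $s=\Omega\big(\log\frac{1}{\eta^{-}}\big)=\Omega(\log n)$ --- this last estimate is exactly where the smallness $\mu'=O(\sqrt{\eta^{-}}\log n)$ is used, together with $\lambda>0$ so that $\log\frac{1}{\eta^{-}}=\Theta(\log n)$. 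Then $sw=\Omega\big((1-\delta_1)\ell(\log n)^{2}\big)$, which dominates $\ell(k/n)^{\xi}$, giving $\bP(T_{\mathrm{exp}}(\tilde{S})\le T_{\mathrm{exp}}(\tilde{S}'))\le\exp\big(-\Omega((1-\delta_1)\ell(\log n)^{2})\big)$; taking $\delta_1=1-(\log n)^{-1/2}$ --- which is $1-o(1)$, so Lemma \ref{lem1} applies with this choice of $\delta_1$ --- yields the claimed $\exp\big(-\Omega(\ell\sqrt{\log n}\cdot\log n)\big)$.

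The step I expect to be the main obstacle is the reduction in the first paragraph, namely realizing that $\cN_\ell$ can only help by explaining \emph{new} tests inside the small set $\mathcal{E}$, rather than arbitrary positive tests. The naive argument ``$\cN_\ell$ must explain $\ge\frac{1-\delta_1}{2}\ell\log\frac{n}{k}$ positive tests'' only gives $\exp(-\Omega((1-\delta_1)\ell\log n))$, which even after optimizing over the admissible $\delta_1=1-o(1)$ is too weak to survive the later union bound over the $\approx n^{2\ell}$ choices of $(\cD_\ell,\cN_\ell)$; it is the restriction to the $|\mathcal{E}|=O(\sqrt{\eta^{-}}\,T)\ll T$ eligible tests that boosts the $\log n$ in the exponent up to $(\log n)^{3/2}$. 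A secondary technical point is the calibration of $s$ in the Chernoff bound: $s$ must be taken as large as $\Theta(\log n)$ to exploit $w$ fully, but no larger, since $\rho\,e^{\mu'(e^{s}-1)}$ blows up once $\mu'(e^{s}-1)$ exceeds $\log\frac{1}{\rho}$.
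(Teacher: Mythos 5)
Your proof is correct and follows essentially the same strategy as the paper's: you bound the number of positive tests that the items of $\cN_\ell$ can ``newly'' explain by a binomial-type quantity with a small mean $\Theta(\ell\delta_2\log n)$, then exploit the gap between this mean and the target $w=\frac{1-\delta_1}{2}\ell\log\frac{n}{k}$ (with $\delta_1 = 1-(\log n)^{-1/2}$ and $\delta_2=\Theta(\sqrt{\eta^-})=n^{-\Omega(1)}$) via a Chernoff argument, giving the optimal parameter $s=\Theta(\log n)$ and hence the exponent $\Omega(\ell\sqrt{\log n}\log n)$.

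Two points of comparison are worth noting. First, you are more careful than the paper in one respect: the paper's proof defines $E$ only over the positive tests \emph{not explained by $\tilde S$} (the set $\mathcal{E}_0$ of size roughly $\frac{T}{2}\delta_2$), whereas the event $T_{\exp}(\tilde S)\le T_{\exp}(\tilde S')$ actually forces $\cN_\ell$ to explain $\geq w$ tests in the \emph{larger} set $\mathcal{E}$ of positive tests not explained by $\tilde S\setminus\cD_\ell$, which additionally contains the $O(\ell\cdot\text{col.\ weight})$ tests covered only by $\cD_\ell$. You patch this by conditioning on column weights $\le(\log n)^2$ and noting the extra $O(\ell(\log n)^2)$ tests are $o(T\delta_2)$; the paper elides this, and its conclusion in Lemma~\ref{lem:succ_conds} (``the non-defectives must explain $\ge w$ tests'' over \emph{all} tests) does not literally reduce to the binomial $E$ on $\mathcal{E}_0$ without this extra step. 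Your version closes that small gap. Second, your use of the factor $\mathds{1}[A_j]$ (item $j$ appears in no negative test) is a refinement the paper deliberately discards via its footnote; as you can verify by re-optimizing $s$ in your display without $\rho$, the $\rho$ factor only changes $s$ by an $O(\log\log n)$ additive term, so it does not affect the $\Omega(\ell\sqrt{\log n}\log n)$ rate --- it is harmless but unnecessary.
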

\begin{proof}
We first comment on the conditioning on \eqref{eq:Stilde_exists} and \eqref{eq:unexplained} (and implicitly, also on fixed $S=s$ as indicated earlier).  The reason \eqref{eq:Stilde_exists} only depends on the placements of defective items is that $\tilde{S} \subseteq S$ and non-defective placements do not affect whether a test is ``good'' for $i \in \tilde{S}$ (see Definition \ref{def5}).  The same reasoning applies to \eqref{eq:unexplained}, since this bound is on the ``number of positive tests not explained by $\tilde{S}$'', for which non-defectives are irrelevant.  Here we are proving a statement regarding the placements of non-defectives, and these are independent of the placements of defectives, so the preceding conditioning does not cause complications.

Our proof strategy consists of finding a ``worst case distribution'' on the number of positive tests the non-defective items in $\cN_{\ell}$ get placed in, and showing that the probability of $\tilde{S}'$ explaining more tests than $\tilde{S}$ under this distribution is small. The probability of at least one of the $\ell$ non-defective items in $\cN_{\ell}$ being included in any given test is $1 - \big(1 - \frac{\log 2 (1+o(1))}{k}\big)^{\ell} = \frac{\ell \log 2}{k}\big(1+o(1)\big)$, since $\ell = o(k)$. Thus, recalling the high probability upper bound on the number of tests left unexplained by $\tilde{S}$ given in \eqref{eq:unexplained}, we have that the number of tests that could be newly explained by $\cN_\ell$ is at most
\begin{equation}
    \label{eq:newly_explained} \frac{T}{2}\big(2^{\eta(1+o(1))} - 1 + \delta_{2}(3 - 2^{\eta(1+o(1))})\big) + \Psi_{\cD_\ell},
\end{equation}
where $\Psi_{\cD_\ell} \coloneqq T_{{\rm exp}}(\tilde{S}) - T_{{\rm exp}}( \tilde{S} \setminus \cD_\ell)$ is the number of previously-explained tests that are no longer explained after removing the items in $\cD_\ell$ from $\tilde{S}$. We require the inclusion of $\Psi_{\cD_\ell}$ in \eqref{eq:newly_explained} since it concerns tests no longer explained following the removal of items in $\cD_\ell$, and these tests may be ``re-explained'' by an item in $\cN_\ell$.  % (e.g., if there is a non-defective item that appears in a test that is ``good" (Definition \eqref{def5}) with respect to an item in $\cD_\ell$). 
Also note that since we are conditioned on \eqref{eq:Stilde_exists}, Lemma \ref{lem:succ_conds} implies that $\Psi_{\cD_\ell} \geq \frac{1-\delta_1}{2}\ell \log \frac{n}{k}$ for some $\delta_1 = 1 -o(1)$ (to be specified later), which will be useful later on in the proof.
% To give some further context to \eqref{eq:newly_explained}, we outline the importance of each of its terms separately: {\bf \color{magenta} [TODO: Thought the above was pretty clear or could at most have a little elaboration. Maybe these two dot points are ``over-explaining''?]}
% % \begin{itemize}
%     \item The first term in \eqref{eq:newly_explained} arises since any positive test not containing any item in $\tilde{S}$ (and hence $\tilde{S} \setminus \cD_\ell$) is not explained by $\tilde{S} \setminus \cD_\ell$, and is thus in contention to be newly explained by a non-defective from $\cN_\ell$.
%     \item For the second term, this appears since it concerns tests no longer explained following the removal of items in $\cD_\ell$, and these tests may be ``re-explained" by an item in $\cN_\ell$ (e.g., if there is a non-defective item that appears in a test that is ``good" (Definition \eqref{def5}) with respect to an item in $\cD_\ell$).
% \end{itemize}

From \eqref{eq:newly_explained}, it follows that the number tests that are newly explained by $\cN_{\ell}$ (including those ``re-explained'') is stochastically dominated\footnote{For two random variables $X_1, X_2$, we say that $X_1$ is \emph{stochastically dominated} by $X_2$ (or $X_2$ \emph{stochastically dominates} $X_1$) if $\bP(X_1 \geq x) \leq \bP(X_2 \geq x),\ \forall x \in \bR$.  This means that $X_2$ is ``greater than or equal to $X_1$'' in a probabilistic sense. %, with strict inequality for at least one $x \in \bR$.
} by the following:\footnote{
    Our analysis focuses on the positive tests that these $\ell$ non-defectives get placed in.  Any negative tests they get placed in will only decrease the number of explained tests further, since by Definition \ref{def:explained} any non-defective in a negative test does not explain any tests.  However, for our purposes it suffices to use a naive upper bound that replaces this decrease by zero.
    }
\begin{align}
    E \sim \bin\bigg(\frac{T}{2}\big(2^{\eta(1+o(1))} - 1 + \delta_{2}(3 - 2^{\eta(1+o(1))})\big) \nonumber \\
    + \Psi_{\cD_\ell}, \frac{\ell \log 2}{k}(1+o(1))\bigg). \label{eq:E_distr}
\end{align}
Moreover, the definitions of $E$ and $\Psi_{\cD_\ell}$ directly imply that
\begin{equation}
    \label{eq:event_inclusion} \{T_{\exp}(\tilde{S}) \leq T_{\exp}(\tilde{S}')\} = \{E \geq \Psi_{\cD_\ell}\},
\end{equation}
since when forming $\tilde{S}'$ from $\tilde{S}$ there is a ``loss'' of $\Psi_{\cD_\ell}$ from removing $\cD_\ell$, but then a ``gain'' of $E$ from adding $\mathcal{N}_\ell$.  

Thus, it suffices to upper bound $\bP(E \geq \Psi_{\cD_\ell})$.  To do so, we first analyze the mean:
\begin{align}
    \mu_{E} &\coloneqq \bE[E] \nonumber \\
    &= \Bigg(\frac{T\big(2^{\eta(1+o(1))} - 1 + \delta_{2}(3 - 2^{\eta(1+o(1))})\big)\ell \log 2}{2k} \nonumber \\
    & \qquad \qquad \qquad \qquad+ \frac{\ell \Psi_{\cD_\ell}\log 2}{k}\Bigg)\big(1+o(1)\big). \label{eq:E_mean}
\end{align}
We also note the following asymptotic simplification for $2^{\eta(1+o(1))}-1$ when $\eta = o(1)$:
\begin{align}
    2^{\eta(1+o(1))} &= e^{\eta\log 2(1+o(1))} = 1 + \eta \log 2\big(1 +o(1)\big) \nonumber \\ % + O\left(\eta^{2}\right)
    \label{eq:17} &\implies  2^{\eta(1+o(1))} - 1 = \eta\log 2\big(1+o(1)\big).
\end{align}
As a result, the term $2^{\eta} -1 = O(\eta)$ is dominated by the expression $\delta_{2}(3 - 2^{\eta}) = \Theta(\delta_2) = \Theta\big(\sqrt{\eta})$ (see \eqref{eq:delta2_scaling}). Thus, we have the following asymptotic expression for the mean $\mu_{E}$:
\begin{align} 
    \mu_{E} &= \bigg(\frac{T\delta_{2}(3 - 2^{\eta(1+o(1))})\ell \log 2}{2k} \nonumber \\
    \label{eq:18} &\qquad \qquad \qquad \qquad \quad + \frac{\ell \Psi_{\cD_\ell}\log 2}{k}\bigg)\big(1+o(1)\big) \\
    &= \bigg(\frac{\left(k \log_{2}\frac{n}{k}\right)(1+\epsilon)\ell\delta_{2}(3 - 2^{\eta(1+o(1))})\log 2}{2k} \nonumber \\
    \label{eq:19} &\qquad \qquad \qquad \qquad \quad + \frac{\ell \Psi_{\cD_\ell}\log 2}{k}\bigg)\big(1+o(1)\big) \\
    &= \bigg(\frac{(\log\frac{n}{k})(1 + \epsilon)\ell \delta_{2}(3 - 2^{\eta(1+o(1))})}{2} \nonumber \\
    \label{eq:20} &\qquad \qquad \qquad \qquad \quad + \frac{\ell \Psi_{\cD_\ell}\log 2}{k}\bigg)\big(1+o(1)\big),
\end{align}
where \eqref{eq:19} follows by substituting $T = \big(k\log_{2}\frac{n}{k}\big)(1+\epsilon)$.
Hence, the scaling of $\mu_{E}$ is
\begin{equation}
    \label{eq:mu_e_scaling} \mu_E = \Theta\left(\ell\Big(\delta_{2}\log n + \frac{\Psi_{\cD_\ell}}{k}\Big)\right).
\end{equation}
We can compare this to the above-established threshold of $\Psi_{\cD_\ell}$. Defining $\delta_3 > 0$ to satisfy $\Psi_{\cD_\ell} = (1+\delta_3)\mu_E$, we have via \eqref{eq:mu_e_scaling} that
\begin{align}
    \label{eq:delta3_rearrange_1} \Psi_{\cD_\ell} &= (1+\delta_3) \mu_E\\
    \label{eq:delta3_rearrange_2} &= \Theta\left((1+\delta_3)\ell \Big(\delta_2 \log n  + \frac{\Psi_{\cD_\ell}}{k}\Big)\right) \\
    \label{eq:delta3_rearrange_3} &\implies 1 + \delta_3 = \Theta\left(\frac{\Psi_{\cD_\ell}}{\ell (\delta_2 \log n + \frac{\Psi_{\cD_\ell}}{k})}\right).
\end{align}
Since the quantity in the right hand side of \eqref{eq:delta3_rearrange_3} is increasing in $\Psi_{\cD_\ell}$, and recalling that $\Psi_{\cD_\ell} \geq \frac{1-\delta_1}{2}\ell \log \frac{n}{k}$ when conditioned on \eqref{eq:Stilde_exists}, we can therefore deduce that
\begin{align}
    \label{eq:delta3_rearrange_4} 1 + \delta_3 &= \Omega\left(\frac{\frac{1-\delta_1}{2}\ell \log \frac{n}{k}}{\ell(\delta_2 \log n + \frac{(1-\delta_1)\ell \log \frac{n}{k}}{2k})}\right) \\
    \label{eq:delta3_rearrange_5} &= \Omega\left(\frac{1}{\frac{\delta_2}{1-\delta_1} + \frac{\ell}{k}}\right) \\ 
    \label{eq:delta3_rearrange_6} &= \Omega\left(\frac{1}{\frac{\delta_2}{1-\delta_1} + \eta}\right)\\
    \label{eq:delta3_rearrange_7} &= \Omega\left(\frac{1-\delta_1}{\delta_2}\right),
\end{align}
where \eqref{eq:delta3_rearrange_5} is obtained by dividing through by $\frac{1-\delta_1}{2}\ell \log \frac{n}{k}$ and simplifying, \eqref{eq:delta3_rearrange_6} uses $\ell = O(\eta k)$ and the fact that the fraction is decreasing in $\ell$, and \eqref{eq:delta3_rearrange_7} follows since $\delta_2 = \Theta(\sqrt{\eta}) = \omega(\eta)$ (see \eqref{eq:delta2_scaling}). Choosing $\delta_{1} = 1 - \frac{1}{\sqrt{\log n}}$ and using the scaling of $\delta_2$ in \eqref{eq:delta2_scaling}, the scaling in \eqref{eq:delta3_rearrange_7} becomes $1+\delta_3 = \Omega(\frac{1}{\delta_2\sqrt{\log n}}) = \frac{n^{\Omega(1)}}{\sqrt{\log n}} = \omega(1)$, which implies the same scaling for $\delta_3$. Thus, we see that $\mu_{E}$ grows much more slowly than $\Psi_{\cD_\ell}$. As such, we can use binomial concentration bounds to show that $\bP(E \geq \Psi_{\cD_\ell})$ is small. 

To be more specific, we use the aforementioned scaling of $\delta_3$, the fact that $\Psi_{\cD_\ell} = (1+\delta_3)\mu_E$, and the multiplicative Chernoff bound (Appendix \ref{app:bino}), to show that
\begin{align}
    \label{eq:21} \bP\left(E \geq \Psi_{\cD_\ell}\right) &\hspace{-0.75pt}= \bP(E \geq (1 + \delta_{3})\mu_{E}) \\
    \label{eq:22} &\hspace{-0.75pt}\leq \exp\left(-\big((1 +\delta_{3})\log(1+\delta_{3}) - \delta_{3}\big)\mu_{E}\right) \\
    \label{eq:23} &\hspace{-0.75pt}= \exp(-(1 +\delta_{3})\log(1+\delta_{3})\mu_{E}(1-o(1))) \\
    \label{eq:24} &\hspace{-0.75pt}= \exp\left(-\Omega\Big(\ell\sqrt{\log n} \cdot \log\delta_{3}\Big)\right) \\
    \label{eq:e_bound} &\hspace{-0.75pt}=\exp\left(-\Omega\left(\ell \sqrt{\log n} \cdot \log n\right)\right),
\end{align}
where \eqref{eq:23} follows since $\delta_{3} = \omega(1)$, \eqref{eq:24} follows since $(1+\delta_{3})\mu_{E} = \Psi_{\cD_\ell} \geq \frac{1-\delta_{1}}{2}\ell \log n = \Omega(\ell \sqrt{\log n})$, and \eqref{eq:e_bound} follows since $\delta_{3} =\frac{n^{\Omega(1)}}{\sqrt{\log n}}$ and hence $\log \delta_3 = \Omega(\log n)$. % \Theta(\frac{1}{\sqrt{\eta^-\log n}}) = 
Lemma \ref{lem:fixed_ell} then follows by recalling that $\bP(T_{\exp}(\tilde{S})\leq T_{\exp}(\tilde{S}')) = \bP(E \geq \Psi_{\cD_\ell})$ due to \eqref{eq:event_inclusion}. 
\end{proof}

With this result established, we are now in a position to show that a union bound over all possible $(\cD_\ell, \cN_{\ell})$, followed by a second union bound over all the values of $\ell$, still leads to $o(1)$ probability of $\tilde{S}$ not explaining the most tests.  This is formalized in the following lemma.

\begin{lemma} \label{lem:union_ell}
    Under the setup and notation of Lemma \ref{lem:fixed_ell},\footnote{In particular, we are again conditioning on \eqref{eq:Stilde_exists} and \eqref{eq:unexplained}, and the probability is over the randomness in the placements of non-defective items.} for each $\ell \in\{1,2,\dots, 5\eta k\}$, define 
    \begin{equation}
        \label{eq:error_ell} \cE_{\ell} \coloneqq \bigcup_{(\cD_\ell, \cN_\ell)}\left\{T_{\exp}(\tilde{S}) \leq T_{\exp}((\tilde{S} \setminus \cD_\ell) \cup \cN_{\ell})\right\},
    \end{equation}
    where each $(\cD_\ell, \cN_\ell)$ pair implicitly satisfies $\lvert \cD_\ell\rvert = \lvert \cN_\ell \rvert = \ell$, as the event that at least one way of swapping out $\ell$ items from $\tilde{S}$ with $\ell$ non-defective items leads to more tests being explained.  Then, we have the following as $n \to \infty$:
    \begin{equation} 
        \bP\left(\bigcup_{\ell = 1}^{5\eta k} \cE_{\ell}\right) \to 0. \label{eq:union_to_zero}
    \end{equation}
    % be the probability that the above occurs for at least one $\ell \in \{1,2,\dots, 5\eta^-k\}$ (i.e., the error probability for the SUBSET algorithm). Then under the conditions of Theorem \ref{thm2} it holds that $P_{\mathrm{e}}^{-} \to 0$.
\end{lemma}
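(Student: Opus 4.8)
The plan is to deduce \eqref{eq:union_to_zero} from the per-instance estimate \eqref{eq:fixed_ell_bound} of Lemma \ref{lem:fixed_ell} by two nested union bounds: first over the choices of the swap sets $(\cD_\ell,\cN_\ell)$ with $\ell$ held fixed, and then over the admissible range of $\ell$. The essential point is that the exponent in \eqref{eq:fixed_ell_bound} carries an extra factor of $\sqrt{\log n}$ relative to the naive $\ell\log n$, which leaves ample room to absorb the combinatorial factors that count the swap sets.

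First I would count the swap sets. Since $\cD_\ell\subset\tilde S$ with $\abs{\cD_\ell}=\ell$ and $\abs{\tilde S}=(1-\eta^-)k\le k$, there are at most $\binom{k}{\ell}\le k^\ell$ choices of $\cD_\ell$; likewise there are at most $\binom{n-k}{\ell}\le n^\ell$ choices of $\cN_\ell\subset S^{\mathsf c}$, so the number of pairs $(\cD_\ell,\cN_\ell)$ entering \eqref{eq:error_ell} is at most $k^\ell n^\ell\le n^{2\ell}$. Taking a union bound over these pairs and invoking Lemma \ref{lem:fixed_ell} gives
\[ \bP(\cE_\ell)\ \le\ n^{2\ell}\exp\!\big(-\Omega(\ell\sqrt{\log n}\cdot\log n)\big)\ =\ \exp\!\big(2\ell\log n-\Omega(\ell\sqrt{\log n}\cdot\log n)\big). \]
Because $\sqrt{\log n}\to\infty$, for all sufficiently large $n$ the negative term dominates, so the right-hand side is at most $\exp(-\ell\log n)=n^{-\ell}$.

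Finally I would union bound over $\ell$ and crudely extend the sum to infinity:
\[ \bP\!\Big(\bigcup_{\ell=1}^{\frac{5}{2}\eta^-k}\cE_\ell\Big)\ \le\ \sum_{\ell=1}^{\frac{5}{2}\eta^-k} n^{-\ell}\ \le\ \sum_{\ell=1}^{\infty} n^{-\ell}\ =\ \frac{1}{n-1}\ \to\ 0, \]
which is exactly \eqref{eq:union_to_zero}. The only step warranting attention is verifying that the combinatorial factor $n^{2\ell}$ is genuinely swallowed by the exponential decay of Lemma \ref{lem:fixed_ell}; thanks to the $\sqrt{\log n}$ gap in the exponent this is immediate, and no careful bookkeeping of constants is needed. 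Combined with Lemma \ref{lem:succ_conds} and the high-probability events \eqref{eq:Stilde_exists} and \eqref{eq:unexplained}, this shows that $\tilde S\subseteq S$ explains strictly more tests than every competing set reachable by swaps in $\cS_{\mathrm{an}}$ (hence in $\cS_{\mathrm{alg}}$) with probability $1-o(1)$, completing the proof of Theorem \ref{thm2}.
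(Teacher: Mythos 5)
Your proposal is correct and follows essentially the same approach as the paper: a union bound over the $\binom{k}{\ell}\binom{n-k}{\ell}$ swap pairs using Lemma~\ref{lem:fixed_ell}, followed by a union bound over $\ell$. The only cosmetic difference is in the last step, where you sum an explicit geometric series to get $1/(n-1)$, whereas the paper simply bounds the number of $\ell$-values by $k=e^{O(\log n)}$; both yield $o(1)$.
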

\begin{proof}
We begin by deriving a bound on $\bP(\cE_{\ell})$ for fixed $\ell$.  For the event corresponding to any single choice of $(\cD_\ell, \cN_{\ell})$ in \eqref{eq:error_ell}, the associated probability is upper bounded via Lemma \ref{lem:fixed_ell}.
%We first note that the upper bound on the probability a given swap of $\ell$ items given in Lemma \ref{lem:fixed_ell} is independent of the specific choice of $(\cD_\ell, \cN_{\ell})$, so calculating an upper bound on $\bP(\cE_{\ell})$ involves taking a union bound and counting the number of such $(\cD_\ell, \cN_{\ell})$ pairs. 
A simple counting argument reveals that there are ${(1-\eta)k \choose \ell}{n-k \choose\ell}$ such pairs, which can be further upper bounded by ${k \choose \ell}{n \choose \ell}$. Hence, the union bound gives
\begin{align}
    %\label{eq:pe_ell_1} % P_{\mathrm{e}}^{-}(\ell) & \bP(\cE_{\ell})\\
    % \label{eq:pe_ell_2} \bP(\cE_{\ell}) &\leq{k \choose \ell}{n\choose \ell} \bP\left(E \geq \frac{1- \delta_{1}}{2}\ell \log\frac{n}{k}\right) \\
   \bP(\cE_\ell)\label{eq:pe_ell_3} &\leq {k \choose \ell}{n\choose \ell}\exp\left(-\Omega\big(\ell \sqrt{\log n} \cdot \log n\big)\right) \\
   %&\label{eq:pe_ell_4} \leq\exp\left(\ell\log \frac{ek}{\ell} + \ell\log\frac{en}{\ell} -\Omega\big(\ell \sqrt{\log n} \cdot \log n\big)\right) \\
   \label{eq:pe_ell_5} &\leq \exp\left(\ell \log k  + \ell \log n -\Omega\big(\ell \sqrt{\log n} \cdot \log n\big)\right) \\
   \label{eq:pe_ell_6} &= \exp\left(-\Omega\big(\ell \sqrt{\log n} \cdot \log n\big)\right),
\end{align}
where \eqref{eq:pe_ell_5} follows since ${n \choose k} \leq n^k$, and \eqref{eq:pe_ell_6} follows since $\ell \log k \le \ell \log n = o(\ell \sqrt{\log n} \cdot \log n)$. % and $\ell \log en = o(\ell \sqrt{\log n} \cdot \log \frac{n}{\log n})$.

We can then take another union bound over the values of $\ell$ to obtain
\begin{align}
    % \label{eq:pe_1} P_{\mathrm{e}}^{-} &\coloneqq \bP\left(\bigcup_{\ell = 1}^{5\eta^-k}\cE_{\ell}\right) \\
    \label{eq:pe_2} \bP\left(\bigcup_{\ell = 1}^{5\eta k}\cE_{\ell}\right) &\leq \sum_{\ell=1}^{5\eta k} \bP(\cE_{\ell}) \\
    \label{eq:pe_3} &\leq \sum_{\ell = 1}^{5\eta k} \exp\left(-\Omega\big(\ell \sqrt{\log n} \cdot \log n\big)\right) \\
    \label{eq:pe_4} &= o(1),
\end{align}
since the number of $\ell$ values is upper bounded by $k = e^{O(\log n)}$.  This gives the desired result \eqref{eq:union_to_zero}.

%Here \eqref{eq:pe_3} follows from Lemma \ref{lem:fixed_ell}, and \eqref{eq:pe_4} follows since we are summing up at most $k$ of these probabilities, with each probability being individually being bounded by $\exp(-\Omega(\ell \sqrt{\log n} \log \frac{n}{\log n}))$ for each $\ell$, meaning that summing over $\ell$ still implies $\bP(\cup_{\ell} \cE_{\ell}) \to 0$, yielding the result.
\end{proof}

Finally, the proof of Theorem \ref{thm2} is completed by simply combining Lemma \ref{lem:union_ell} with Lemma \ref{lem:succ_conds}.

%Recalling the reasoning behind swapping out $\ell$ items following \eqref{eq:Stilde_exists}, and the restriction to $\ell = 1,2,\dotsc, 5\eta^- k$ noted following \eqref{eq:unexplained}, we have that the left-hand side of \eqref{eq:union_to_zero} is an upper bound on the error probability of SUBSET.  Thus, Lemma \ref{lem:union_ell} completes the proof of Theorem \ref{thm2}.

\subsection{Proof of Corollary \ref{corollary1}} \label{sec:cor_proof}

We first recall the result in \cite[Theorem 1]{truong2020all}, which is stated as follows. 
\begin{lemma} \label{lem7} \emph{\cite[Theorem 1]{truong2020all}}
    Fix $\alpha \in (0,1)$ and $\epsilon > 0$, and suppose that $T \geq (1- \alpha)\big(k\log_{2}\frac{n}{k}\big)(1 + \epsilon)$. Then there exists a group testing algorithm returning an estimate $\hat{S}'$ such that $\max\{\abs{S \setminus \hat{S}'}, \abs{\hat{S}' \setminus S}\} \leq \alpha k$ with probability $1 - o(1)$.
\end{lemma}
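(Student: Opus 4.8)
The plan is to prove Lemma~\ref{lem7} by reducing it to the rate-$1$ two-sided approximate recovery guarantee of \cite{scarlett2016phase} (the threshold decoder recalled around \eqref{eq:12}) on a smaller instance, after first ``ignoring'' a block of items. Fix a small constant $\epsilon_0 > 0$ and set $\alpha' := \alpha - \epsilon_0$. Partition $[n]$ into a block $\mathcal{I}$ of size $\lfloor \alpha' n\rfloor$ and $\mathcal{J} := [n]\setminus\mathcal{I}$, writing $n' := |\mathcal{J}| = (1-\alpha')n(1+o(1))$. The algorithm declares every item of $\mathcal{I}$ to be non-defective and never tests it, and on $\mathcal{J}$ it runs the decoder of \cite{scarlett2016phase} under a Bernoulli design restricted to the columns of $\mathcal{J}$, with approximate-recovery parameter a small constant $\beta' > 0$ and assumed sparsity $\tilde k := \lceil (1-\alpha')k\rceil + \lceil k^{2/3}\rceil$, the $k^{2/3}$ being a negligible safety margin. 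The final estimate $\hat{S}'$ is this sub-routine's output, with items added or removed arbitrarily so that $|\hat{S}'| = k$.

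First I would handle the ignored block. Since $S$ is uniform over size-$k$ subsets of $[n]$ and $\mathcal{I}$ is fixed, $|S\cap\mathcal{I}|$ is hypergeometric with mean $\alpha' k(1+o(1))$; the concentration bounds already used in the paper give $\big|\,|S\cap\mathcal{I}| - \alpha'k\,\big| = O(\sqrt{k}\log k)$ with probability $1-o(1)$. On this event $|S\cap\mathcal{J}| = k - |S\cap\mathcal{I}| \le \tilde k$ (the $k^{2/3}$ margin dominates the $O(\sqrt k\log k)$ fluctuation), so the sub-instance really has at most $\tilde k$ defectives, and $\tilde k - |S\cap\mathcal{J}| = o(k) \le \beta'\tilde k$, so the ``filler'' items forced into a size-$\tilde k$ estimate stay within the approximate-recovery tolerance. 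Conditioned on $|S\cap\mathcal{J}|$, the set $S\cap\mathcal{J}$ is uniform over subsets of $\mathcal{J}$ of that size, and $\tilde k = \Theta((n')^{\theta})$, so the sub-instance lies in the sparse regime to which \cite{scarlett2016phase} applies.

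Next I would invoke that achievability: with $\beta'$ fixed, the decoder of \cite{scarlett2016phase} returns $\hat{S}'_{\mathcal{J}}$ with $\max\{|(S\cap\mathcal{J})\setminus\hat{S}'_{\mathcal{J}}|,\ |\hat{S}'_{\mathcal{J}}\setminus S|\} \le \beta'\tilde k$ with probability $1-o(1)$, provided the test count is at least $\tilde k\log_2\frac{n'}{\tilde k}(1+\epsilon')$ for an auxiliary small constant $\epsilon' > 0$. Since $n' = (1-\alpha')n(1+o(1))$ and $\tilde k = (1-\alpha')k(1+o(1))$ we get $\log_2\frac{n'}{\tilde k} = \log_2\frac{n}{k}(1+o(1))$, so this count equals $(1-\alpha')\big(k\log_2\frac{n}{k}\big)(1+\epsilon')(1+o(1))$; choosing $\epsilon_0$ small (and $\epsilon'$ small in terms of it) makes it at most $(1-\alpha)\big(k\log_2\frac{n}{k}\big)(1+\epsilon)$, hence within the budget $T$. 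On the intersection of the two good events, $|\hat{S}'\setminus S| \le \beta'\tilde k + o(k) \le \alpha k$, while $|S\setminus\hat{S}'| \le |S\cap\mathcal{I}| + |(S\cap\mathcal{J})\setminus\hat{S}'_{\mathcal{J}}| \le \alpha'k + O(\sqrt k\log k) + \beta'\tilde k$, which is at most $\alpha k$ once $\beta'$ is small enough that $\beta'(1-\alpha') < \epsilon_0$. This establishes the lemma.

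I do not expect a genuine obstacle here: the crux is simply the observation that ignoring an $\alpha'$-fraction of items converts the target rate $\frac{1}{1-\alpha}$ into the already-available rate $1$ on an instance with $\approx(1-\alpha)k$ defectives among $\approx(1-\alpha)n$ items, for which $\tilde k\log_2\frac{n'}{\tilde k} = (1-\alpha)k\log_2\frac{n}{k}(1+o(1))$. The two points that need care are (i) fixing the order of quantifiers on $\epsilon_0, \beta', \epsilon'$ so that the test budget and both misclassification budgets hold simultaneously, and (ii) the mild sparsity mismatch between the design parameter $\tilde k$ and the realized defective count in $\mathcal{J}$, which is absorbed by the $o(k)$ margin in $\tilde k$ together with the constant slack $\beta'$.
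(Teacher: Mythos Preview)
Your proposal is correct and follows essentially the same approach as the paper's outline (which in turn summarizes \cite{truong2020all}): delete a fraction $\alpha-\xi$ of items, apply the rate-$1$ approximate-recovery decoder of \cite{scarlett2016phase} on the remainder with a small distortion parameter, control the hypergeometric fluctuation of $|S\cap\mathcal{I}|$, and absorb the small mismatch in the assumed number of defectives. The only cosmetic difference is that the paper points to \cite[Appendix~B]{scarlett2018noisy} for robustness to a $1\pm o(1)$ misspecified sparsity, whereas you handle this explicitly via the $k^{2/3}$ safety margin in $\tilde{k}$; both accomplish the same thing.
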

This proof of this result is based on a simple argument that we outline as follows:
\begin{itemize}
    \item A fraction $\alpha - \xi$ of items, for arbitrarily small $\xi > 0$, is ``deleted'' and marked as nondefective, without performing any tests on them.
    \item Among the remaining items (of which there are $(1-\alpha+\xi)n$, with roughly $(1-\alpha+\xi)k$ being defective), apply the approximate recovery algorithm with from \cite{scarlett2016phase} with a sufficiently small distortion parameter.  (With a slight modification due to the number of defectives only being known approximately rather than exactly.)
\end{itemize}
A simple analysis then shows that there are $(\alpha - \xi)k(1+o(1))$ false negatives in the first step due to Hoeffding's inequality for the Hypergeometric distribution \cite{hoeffding1963probability}, and at most $\beta k$ false negatives and false positives in the second step (for arbitrarily small $\beta > 0$) by the guarantees in \cite{scarlett2016phase}.  Note that the $1-\alpha$ factor arises in the number of tests due to having roughly $(1-\alpha)k$ non-deleted defectives.

We can use the same idea for the SUBSET problem; since the reasoning is all analogous to that of \cite{truong2020all}, we omit some details and focus mainly on the differences.  We delete a fraction of items arbitrarily close to $\alpha$ and mark them as non-defective. Among those remaining, we apply the approximate recovery algorithm from \cite{scarlett2016phase} with $\beta = \eta$ for some $\eta = o(1)$ to obtain $\hat{S}'$. As noted in \cite{truong2020all}, a result in \cite[Appendix B]{scarlett2018noisy} shows that this algorithm still succeeds with high probability even when the number of defectives is only known within a factor of $1\pm o(1)$. We then apply the SUBSET algorithm (Algorithm \ref{alg2}) using the reduced ground set and the two-sided estimate $\hat{S}'$.  While we are unable to directly apply Algorithm \ref{alg2} and Theorem \ref{thm2}, since these assume that the number of defectives is known, we can easily modify the algorithm and proof to show the result still holds when it is only known to within a factor of $1 \pm o(1)$. Specifically, letting $K'$ be the (random) number of non-deleted defectives and defining $k' \coloneqq \bE K' = (1-\alpha+\xi)k$, we observe the following: 
\begin{itemize}
    \item A simple concentration argument reveals that $\underline{k}' \le K' \le \overline{k}'$ with probability $1-o(1)$, where $\underline{k}' = k'(1-o(1))$ and $\overline{k}' = k'(1+o(1))$.
    \item We cannot set the Hamming distance condition in Algorithm \ref{alg2} to $\dH(\hat{S}', \bar{S}) \leq 3\eta k'$, but we can safely replace $K'$ by the upper bound $\overline{k}'$.
    \item Similarly, the output set size cannot be set to $(1-\eta)K'$, but it suffices to use $|\bar{S}| = (1-\eta)\underline{k}'$.
    \item Due to the above modifications, in the definition of $\cS_{\rm an}$ in \eqref{eq:S_an} the constant $5$ becomes $5+o(1)$, but this substitution is inconsequential throughout the analysis (in fact, even increasing the constant, say from $5$ to $6$, would be inconsequential).
\end{itemize}
Since $\eta = o(1)$ and $\underline{k}' = k'(1-o(1))$, the resulting output is a set of size $|\hat{S}| = (1-\alpha+\xi)k(1-o(1))$, which is at least $(1-\alpha)k$ when $n$ is large enough.  The same $1-\alpha+\xi$ factor then appears in the number of tests, which gives the desired result since $\xi$ is arbitrarily small. \qed

\subsection{Proof of Theorem \ref{thm3} (Converse for SUPERSET)}
We now turn to the proof of Theorem 3, which centers around a commonly-used notion of \emph{masking}, defined as follows.

\begin{definition} \label{def:masked}
    An item \(i \in [n]\) is \emph{masked} if for every \(t \in \{1,2,\dots, T\}\) such that \(\sfX_{ti} =1\) there exists \(j \in S\) with $j \ne i$ such that \(\sfX_{tj} = 1\).  That is, every test that includes $i$ also includes at least one other defective.
\end{definition}

Intuitively, an item being masked means that it does not affect the test outcomes of any tests it is placed in, meaning the tests provide no information about its defectivity status. Thus, the best any decoder can do is guess the more likely of the two possibilities. We use ideas from \cite{coja2020optimal, bay2022optimal}, showing that in the exact recovery case, for $\theta$ close to 1, there exists a large number of masked items with high probability, and the existence of such masked items leads to a high probability of failure. 
They then transfer this ``high $\theta$'' result to any $\theta \in \big(\frac{\log 2}{1 + \log 2}, 1\big)$ by an argument based on ``dummy non-defective items'', which we will also use. (Note that for $\theta \in \big(0,\frac{\log 2}{1+\log 2}\big]$ the rate in Theorem \ref{thm3} is 1, and this converse is well-known and holds even for two-sided approximate recovery  \cite{scarlett2016phase,truong2020all}).

We use these ideas to find similar high probability bounds on the number of masked items, suitably modified to suit our approximate recovery setting, for rates above $R^{*}$ (defined in \eqref{eq:10}). In particular, we show that there are $\omega(1)$ masked defective items and $\omega(k)$ masked non-defective items with high probability, regardless of test design. Intuitively, this will lead to a converse for SUPERSET because when attempting to form an estimate $\hat{S}$ of $S$ of size $(1 + \eta)k$ such that $\hat{S} \supseteq S$, the decoder will be unable to distinguish masked defectives from masked non-defectives.  Since there are many more of the latter, any decoder will have a low probability of choosing all masked defectives in its estimate.

We now formalize the above intuition. To do so, we first show that for rates above $R^{*}$ there are $\omega(k)$ masked non-defective items with high probability. We split this into two cases, namely, $\theta \in \big(\frac{\log 2}{1 + \log 2}, \frac{1}{2}\big)$ and $\theta \in \big[\frac{1}{2}, 1\big)$.

\medskip
\noindent \textbf{\underline{Case 1:}} $\theta \in \big(\frac{\log 2}{1 + \log 2}, \frac{1}{2}\big)$ 
\medskip

\noindent For convenience, we will first establish the claim of $\omega(k)$ masked non-defectives under the assumption that every item appears in at most $\log^3n$ tests.  Intuitively, this is a mild condition because good designs are known to place each item in $O(\log n)$ tests (e.g., see \cite{coja2020optimal}).  Nevertheless, it does not cover all possible tests designs, and accordingly, we will drop this assumption and handle \emph{arbitrary designs} at the end of the analysis of this case.

We use the following result adapted from \cite{coja2020optimal}, which we then use to prove our claims for the SUPERSET problem.

\begin{lemma} \label{lem4}\emph{(Adapted from \cite[Propositions 3.4 and 3.5]{coja2020optimal})}
    \ Let $M_{1}$ denote the number of masked defective items, where there are $k = \Theta(n^{\theta})$ defective items present within a population of size $n$. Let $\xi > 0$ be arbitrarily small, and fix $\epsilon > 0$ and $\theta'$ sufficiently close to $1$ such that $2(1-\theta') < \xi < \epsilon \theta'$. Then if $\theta \in \big(\frac{\log 2}{1+\log 2}, \frac{1}{2}\big)$ and $T \leq \big(\frac{1}{\log 2}k \log_{2}k\big)(1 - \epsilon)$, it holds that
    \begin{equation} \label{eq:30}
        \bP\left(M_{1}\geq \frac{1}{4}k^{\epsilon - \frac{\xi}{\theta'}}\right) = 1 - o(1).
    \end{equation}
\end{lemma}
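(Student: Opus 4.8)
The statement is a mild repackaging of the converse arguments in \cite{coja2020optimal}, so the plan is to reduce to their Propositions~3.4 and 3.5 and then perform the change of variables that yields the exponent $\epsilon - \xi/\theta'$. First, fix an arbitrary test design (under the running assumption that every item lies in at most $\log^3 n$ tests) with $T \le \big(\tfrac{1}{\log 2} k \log_2 k\big)(1-\epsilon)$, and note that whether a defective $i$ is masked depends only on $S$ and on the columns of $\sfX$ indexed by $S \cup \{i\}$; it is unaffected by the presence of any other non-defectives. This lets us transfer freely between our regime $k = \Theta(n^\theta)$, $\theta \in \big(\tfrac{\log 2}{1+\log 2},\tfrac12\big)$, and the regime $k = \Theta(N^{\theta'})$ with $\theta'$ close to $1$ in which the Coja-Oghlan bound is cleanest --- e.g., by restricting attention to a sub-population of size $N = k^{1/\theta'}$ containing $S$ and observing that masking of a defective in this view implies masking in the original instance. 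The inequalities $2(1-\theta') < \xi < \epsilon\theta'$ are precisely the slack that makes this work: they ensure that the hypothesis $T \le \big(\tfrac{1}{\log 2} k \log_2 k\big)(1-\epsilon)$ translates, with margin $\xi$, into the strictly-below-threshold hypothesis of \cite{coja2020optimal} --- the threshold $\tfrac{1}{\log 2} k \log_2 k$ being exactly the $\rho = \log 2$ optimum of $\rho \mapsto \rho \log \tfrac{1}{1-e^{-\rho}}$, which is where the DD / masking phase transition occurs --- and they keep the resulting exponent $\epsilon - \xi/\theta'$ positive.

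The technical core, which I would quote from \cite[Prop.~3.4]{coja2020optimal} (or reprove by their method), is a first-moment lower bound on $M_1$. For uniformly random $S$, the probability that a fixed defective $i$ of degree $d_i$ is masked is, up to lower-order factors, $\prod_{t \ni i}\big(1 - \bP(\text{test }t\text{ contains no defective other than }i)\big)$, and since the inner probability is $\approx e^{-k w_t / n}$ with $w_t$ the number of items in test $t$, the best a designer can do with only $T$ tests, optimized over the degree sequence and the test sizes, gives
\[
\bE[M_1] \ \gtrsim\ k\,\big(1 - e^{-\log 2}\big)^{(1-\epsilon)\log_2 k} \ =\ k\cdot k^{-(1-\epsilon)} \ =\ k^{\epsilon},
\]
up to the $k^{-\xi/\theta'}$ correction from the reparametrization and a constant absorbed into the $\tfrac14$. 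The essential subtlety --- and the reason a union bound over tests is inadequate --- is that one must control the probability of the intersection ``$i$ is masked in \emph{all} of its tests'' rather than bound $\bP(i\text{ not masked}) \le \sum_{t \ni i}\bP(t\text{ private for }i)$; this is where the bounded-degree hypothesis and the near-independence of the events $\{\text{test }t\text{ contains no defective other than }i\}$ over $t \ni i$ are used. The second ingredient, \cite[Prop.~3.5]{coja2020optimal}, promotes this to a high-probability statement via a (truncated) second-moment computation showing $\bE[M_1^2] = (1+o(1))(\bE[M_1])^2$ --- i.e., the pair-masking probabilities essentially factorize --- so that $M_1 = (1+o(1))\bE[M_1]$ with probability $1-o(1)$, which suffices because $\bE[M_1] = k^{\epsilon - o(1)}$ is polynomially large.

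I expect the main obstacle to be making the first-moment bound robust to \emph{arbitrary}, in particular highly non-uniform, designs: a clever designer might try to spend its budget making a handful of defectives essentially impossible to mask, and one must show this cannot beat the uniform $\rho = \log 2$ allocation, using in particular that ``large'' tests (which would help unmask) are exponentially unlikely to be private. The bounded-degree reduction tames this, but one then still has to dispose of designs with items of degree exceeding $\log^3 n$; as the surrounding text indicates, this is handled at the end of the analysis of this case, essentially by noting that an item in $\omega(\log^2 n)$ tests is masked with only super-polynomially small probability, so such items are negligible in number among the defectives and the counting can be restricted to bounded-degree items. The remaining work is bookkeeping: aligning $\xi$, $\theta'$, $\epsilon$ with the precise hypotheses and conclusions of \cite[Props.~3.4 and 3.5]{coja2020optimal}.
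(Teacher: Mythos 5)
Your proposal captures the essential structure of the paper's proof: fix the test design (under the bounded-degree assumption $\log^3 n$, dropped later), transfer to the ``dense'' regime $k = \tilde n^{\theta'}$ with $\theta'$ near $1$ by observing that dummy non-defectives leave the count of masked \emph{defectives} unchanged, invoke the Coja-Oghlan et al.\ masking estimates below the $\rho = \log 2$ threshold, and reparametrize $\tilde n = k^{1/\theta'}$ to surface the exponent $\epsilon - \xi/\theta'$. The main difference is in which intermediate result from \cite{coja2020optimal} you lean on: you propose to quote Propositions~3.4 and~3.5 and run (or re-run) a first-moment / truncated second-moment Chebyshev argument to get $M_1 = (1+o(1))\,\bE[M_1]$, whereas the paper instead cites the cleaner derived statement in the \emph{proof of Corollary~3.13} of \cite{coja2020optimal}, namely that $\widetilde M_1$ stochastically dominates $\bin\big(\tilde n^{1-\xi}, \tfrac13 \tilde n^{\epsilon\theta'-1}\big)$, so that only a standard binomial lower-tail bound is needed. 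Your route is correct but longer and re-derives machinery the reference already packages; the paper's route avoids the second-moment computation entirely. Two small imprecisions in your account: the constraint $\xi < \epsilon\theta'$ serves primarily to keep the final exponent $\epsilon - \xi/\theta'$ positive (and hence the bound nontrivial), while $2(1-\theta') < \xi$ is what lets the cited domination hold for $\theta'$ near $1$ --- neither is really about ``translating the hypothesis into the strictly-below-threshold hypothesis with margin $\xi$,'' since the test-count hypothesis $T \le \big(\tfrac{1}{\log 2}k\log_2 k\big)(1-\epsilon)$ carries over verbatim (as $k$ is unchanged by the reparametrization). And the $k^{-\xi/\theta'}$ correction is not really a constant ``absorbed into the $\tfrac14$''; it comes from the $\tilde n^{1-\xi}$ restriction in the binomial (only a bounded-degree sub-collection of items is counted), and is a genuine polynomial loss that the $\xi < \epsilon\theta'$ condition is there to control.
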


In Appendix \ref{sec:adaptations}, we describe how we obtain this result from the results in \cite{coja2020optimal}.

We now proceed to show that Lemma \ref{lem4} implies that there are $\omega(\frac{n}{k})$ (and hence $\omega(k)$ since $\theta < \frac{1}{2}$) masked non-defective items with high probability when $T \leq \big(\frac{1}{\log 2}k \log_{2}k\big)(1 - \epsilon)$. To do so, we consider $S$ being generated in the following manner:
\begin{enumerate}
    \item Generate $S' \subseteq [n]$ by including each item in it independently with probability $q' = \frac{k + \sqrt{k}\log n}{n}$.
    \item Remove $\max\{\abs{S'}-k, 0\}$ items uniformly at random from $S'$ to form $S$.
\end{enumerate}
Defining the event $\mathscr{B} = \{k \leq \abs{S'} \leq k + 2\sqrt{k}\log n\}$, it follows from the multiplicative Chernoff bound (see Appendix \ref{app:bino}) that $\bP(\mathscr{B}) = 1-o(1)$. Moreover, conditioned on $\abs{S'} \geq k$ (which is implied by $\mathscr{B}$), the resulting distribution of $S$ is indeed the combinatorial prior due to the symmetry of the construction. Thus, an event has $o(1)$ probability under the above distribution if and only if it has $o(1)$ probability under the combinatorial prior.  
%While $\abs{S'} < k$ occurs with nonzero probability, we follow the ideas in \cite{bay2022optimal} and suppose that on this event we always succeed in recovery for the purposes of proving a converse. 
%Since the event $\{\abs{S'} <k\}$ implies $\mathscr{B}^\mathsf{c}$ and $\bP(\mathscr{B}^\mathsf{c}) = o(1)$, the same holds for $\bP(\abs{S'} < k)$.  %(note that each of these $M_1'$ items and subsequent analogous quantities are also assumed to appear in at most $\log^3n$ tests). 
Additionally, letting $M_1'$ be the number of masked defectives after the first step of the construction (i.e., the number of defectives that would be masked if $S'$ were the defective set) and $M_1$ denote the number after the second step, it holds that $M_{1}' \geq M_{1}$.  This is because converting defective items to non-defective in step 2 can only decrease the number of masked items (and, in turn, masked defectives), since items that were masked by such ``converted'' items may no longer be masked. 

Using $M'_1 \ge M_1$ and Lemma \ref{lem4}, setting $\gamma \coloneqq \epsilon - \frac{\xi}{\theta'} > 0$, and denoting the total number of masked items after the first step of the construction by $M'$ (i.e., the analog of $M_1'$ but with both defectives and non-defectives), we have
\begin{align}
   \label{eq:31} &1 - o(1) = \bP\left(M_{1}' \geq \frac{1}{4}k^{\gamma}\right) \\
   \label{eq:32} &= \sum_{m' = \frac{1}{4}k^{\gamma}}^{n} \bP(M' = m') \bP\left(M_{1}' \geq \frac{1}{4}k^{\gamma} \mid M' = m'\right) \\
   &= \underbrace{\sum_{m' = \frac{1}{4}k^{\gamma}}^{\frac{n}{k}k^{\frac{\gamma}{2}}} \bP(M' = m') \bP\left(M'_{1} \geq \frac{1}{4}k^{\gamma} \mid M' = m'\right)}_{\coloneqq \cA_{1}}  \nonumber \\
   \label{eq:33} &+ \underbrace{\sum_{m' = \frac{n}{k} k^{\frac{\gamma}{2}} + 1}^{n} \bP(M' = m') \bP\left(M_{1}' \geq \frac{1}{4}k^{\gamma} \mid M' = m'\right)}_{\coloneqq \cA_{2}}.
\end{align}
We proceed by showing that $\cA_{1} = o(1)$, which in turn implies $\cA_{2} = 1 - o(1)$. 
%which would imply $M' = \omega(\frac{n}{k})$ with probability $1-o(1)$, from which we would be able to show there are $\omega(\frac{n}{k})$ masked non-defectives with high probability.
% \dan{Discuss this -- something seems a bit off with an argument in the optGT paper? Namely, I'm not sure (3.4) holds if we dont have $\theta' = 1-o(1)$, which as far as I'm aware isn't the case in the optGT paper. UPDATE -- Never mind, it does hold.}
% \begin{align}
%     \label{eq:34} \cA_{1} &= \sum_{m = \frac{1}{4}k^{\gamma}}^{\frac{n}{k}k^{\frac{\gamma}{2}}} \bP(M = m) \bP\left(M_{1} \geq \frac{1}{4}k^{\gamma} \,\Big|\, M = m\right) \\
%     &\le \sum_{m = \frac{1}{4}k^{\gamma}}^{\frac{n}{k}k^{\frac{\gamma}{2}}} \bP(M = m) \bP\left(M_{1}^{+} \geq \frac{1}{4}k^{\gamma} \,\Big|\, M = m\right) + o(1).
% \end{align}
Under the i.i.d.~prior, it is known that the posterior probability of being defective for each masked item matches its prior \cite[Lemma 3.1]{bay2022optimal}, which can be shown via Bayes' rule and the fact that a single masked item's defectivity status has no impact on any test outcomes. Since the first step of the generation of $S$ is i.i.d.~with parameter $q'$, it holds that any masked item following this first step has posterior probability $q'$ of being defective. As a result, this implies that $\mu'_1(m') \coloneqq \bE[M_{1}' \mid M' = m'] = m'\frac{k + \sqrt{k}\log n}{n} \leq k^{\frac{\gamma}{2}}\big(1+o(1)\big)$ when $m' \leq \frac{n}{k}k^{\frac{\gamma}{2}}$. Thus, Markov's inequality implies that
\begin{align}
   \label{eq:36} \cA_{1}  &\leq \sum_{m' = \frac{1}{4}k^{\gamma}}^{\frac{n}{k}k^{\frac{\gamma}{2}}} \bP(M' = m') \frac{\mu_1'(m')}{\frac{1}{4}k^\gamma} \\
   % \label{eq:37} &= \sum_{m = \frac{1}{4}k^{\gamma}}^{\frac{n}{k}k^{\frac{\gamma}{2}}} \bP(M = m) \exp\big(- (1+\delta_{4})\log(1 + \delta_{4})\mu_{M}\big(1+o(1)\big)\big) \\
   \label{eq:38} &\leq \sum_{m' = \frac{1}{4}k^\gamma}^{\frac{n}{k}k^{\frac{\gamma}{2}}} \bP(M' = m') \frac{4k^{\frac{\gamma}{2}}(1+o(1))}{k^\gamma} \\
    \label{eq:39} &\leq 4k^{-\frac{\gamma}{2}}(1+o(1)) \\
   \label{eq:40} &= o(1),
\end{align}
where %\eqref{eq:37} follows since $\delta_{4} = \Omega(k^{\frac{\gamma}{2}}) = \omega(1)$, 
\eqref{eq:38} follows from the above bound on $\mu_1'(m')$ for $m' \leq \frac{n}{k}k^\frac{\gamma}{2}$, and \eqref{eq:39} follows by bounding the sum of the probabilities by 1. This implies that $\cA_{2} = 1 - o(1)$, which in turn implies via \eqref{eq:33} that
\begin{align}
   \label{eq:41} &1 - o(1) = \cA_{2} \\
   \label{eq:42} &= \sum_{m'=\frac{n}{k}k^{\frac{\gamma}{2}}+1}^{n} \bP(M' = m') \bP\left(M_{1}' \geq \frac{1}{4}k^{\gamma} \mid M' = m'\right) \\
   \label{eq:43} &\leq \sum_{m'=\frac{n}{k}k^{\frac{\gamma}{2}} +1}^{n} \bP(M' = m') \\
   \label{eq:44} &= \bP\left(M' \geq \frac{n}{k}k^{\frac{\gamma}{2}}\right).
\end{align}
Hence, $M' = \omega(\frac{n}{k})$ with probability approaching one.  We proceed to show that the same is true for $M$, the number of masked items after the second step.  The idea is to exploit the fact that $S'$ and $S$ are sufficiently ``close together'' under the high probability event $\mathscr{B}$.

In more detail, let $\cM'$ and $\cM$ be the sets of masked items under $S'$ and $S$ respectively (yielding $\abs{\cM'} = M'$ and $\abs{\cM} = M$).  We consider the probability for a given $i \in \cM'$ that $i \notin \cM$ (i.e., the probability that a specific masked item loses its masked status between the first and second steps).  Implicitly conditioning on the first step having been done (with $\mathscr{B}$ holding), letting $\bP_2(\cdot)$ denote probability with respect to the randomness in the second step, and letting $\cM_t'(i) \subseteq S'$ denote the set of items masking a given item $i \in \cM'$ in a given test $t$, we have for each $i \in \cM'$ that
\begin{align}
    \label{eq:unmasked_1} \bP_2(i \notin \cM) &= \bP_2\left(\bigcup_{t: \sfX_{ti}=1}\left\{j \notin S, \,\forall j \in \cM_t'(i)\right\}\right) \\
    \label{eq:unmasked_2} &\leq\sum_{t:\sfX_{ti}=1} \bP_2\left(j \notin S, \, \forall j \in \cM_t'(i)\right) \\
    \label{eq:unmasked_3} &\leq \sum_{t: \sfX_{ti}=1} \frac{2\sqrt{k}\log n}{k} \\
    \label{eq:unmasked_4} &\leq \log^3n \frac{2\log n}{\sqrt{k}},
\end{align}
where:
\begin{itemize}
    \item \eqref{eq:unmasked_1} follows since for an item to no longer be masked after the second step, there needs to be at least one test such that all the defective items masking it are converted to non-defective;
    \item \eqref{eq:unmasked_2} follows from the union bound;
    \item \eqref{eq:unmasked_3} follows by bounding the probability that \emph{all} $j \in \cM_t'(i)$ are converted to non-defective by the probability of this happening for a \emph{single} $j$ (e.g., the one with the lowest index). We can then bound the probability of this single item being converted by $\frac{2\sqrt{k}\log n}{k}$, since under the event $\mathscr{B}$ at most $2\sqrt{k}\log n$ such items get chosen uniformly at random from a set of size at least $k$ to be made non-defective;
    \item \eqref{eq:unmasked_4} follows from the assumption that each item is in at most $\log^3n$ tests, as introduced at the start of this case (and to be dropped shortly).
\end{itemize}
Letting $\Delta = M' - M$ denote the number of items that lose their masked status after the second step,
it follows that $\bE \Delta \leq \frac{2\log^4n}{\sqrt{k}}M'$. Markov's inequality then gives $\bP_2(\Delta \geq \frac{\log^5n}{\sqrt{k}}M') = o(1)$, which implies $M \geq M'(1-o(1))$ with probability $1-o(1)$. Combining this with the fact that $M' \geq \frac{n}{k}k^{\frac{\gamma}{2}}$ with probability $1-o(1)$,
% we obtain the following bound:
% % the probability of there being $\omega(\frac{n}{k})$ masked non-defective items via the multiplicative Chernoff bound as follows:
% \begin{align}
%    %\label{eq:45} \bP\left(M_{0} \leq \frac{1}{2}m\right) 
%   \label{eq:45} &\bP\left(M_{0}^{-} \leq \frac{1}{2}m \,\Big|\, M=m\right)\\ 
%    \label{eq:46} & \leq \bP\left(M_{0}^{-} \leq \frac{3}{4}m(1-q') \mid M = m\right) \\
%   \label{eq:47} &\leq \exp\left(-\frac{1}{32}m(1-q')\right) \\
%    \label{eq:48} &\leq \exp\left(-\frac{n}{32k}k^{\frac{\gamma}{2}}(1-q')\right),
% \end{align}
% where \eqref{eq:46} follows since $\frac{1}{2} \leq \frac{3}{4}(1-q')$, \eqref{eq:47} follows from the Chernoff bound and \eqref{eq:48} follows since $m \geq \frac{n}{k}k^{\frac{\gamma}{2}}$ with high probability.
we can thus infer there are at least $ \frac{n}{k}k^{\frac{\gamma}{2}}(1-o(1)) - k \geq \frac{n}{2k}k^{\frac{\gamma}{2}}  = \omega(\frac{n}{k})$ masked non-defective items with probability $1 - o(1)$. Since $\theta < \frac{1}{2}$, this quantity is $\omega(k)$.

{\bf Dropping the bounded tests-per-item assumption.} 
We now argue that the same finding is true for \emph{any} test design. To do so, we require the following results adapted from \cite{coja2020optimal, bay2022optimal}.
\begin{lemma}
    \label{lem:degrees} \emph{(Adapted from \cite[Lemma 3.3]{bay2022optimal} and \cite[Lemma 3.8]{coja2020optimal})} Let $\Gamma = \frac{n}{k} \log n$, and let $\cI_t \subseteq [n]$ be the set of items included in test $t$. Then, if $T = O(k\log n)$\emph{:}
    \begin{itemize}
        \item[(i)] It holds for any test design $\sfX$ that
        \begin{equation}
            \label{eq:test_degree} \hspace{-3pt} \bP(\exists t \in \{1,2,\dots, T\} : \lvert \cI_t \rvert >\Gamma \wedge \lvert \cI_t \cap S \rvert =0) = o(1).
        \end{equation}
        That is, it holds with probability $1-o(1)$ that every test containing more than $\Gamma$ items is positive.
        \item[(ii)] If all tests in $\sfX$ contain at most $\Gamma$ items, then there are $n-o(n)$ items that appear in at most $\log^3 n$ tests.
    \end{itemize}
\end{lemma}

We provide the proof of this lemma in Appendix \ref{sec:adaptations}. Now, consider an arbitrary test design $\sfX$, and partition the items $[n]$ into $(\cN_\mathrm{low}, \cN_\mathrm{high})$, where $\cN_\mathrm{low}$ is the set of items that appear in at most $\log^3n$ tests, and $\cN_\mathrm{high}$ contains the remaining items.  By the first part of Lemma \ref{lem:degrees}, we may assume without loss of generality that no test in $\sfX$ contains more than $\frac{n}{k}\log n$ items, since with probability $1-o(1)$, all tests containing more items are positive (and thus their outcomes could have been guessed reliably without performing them). Under this assumption, the second part of Lemma \ref{lem:degrees} implies that $\abs{\cN_\mathrm{low}} = n-o(n)$.

The key observation is that while Lemma \ref{lem4} is stated for the number of masked defectives, the proof in \cite{coja2020optimal} (under the condition $\abs{\cN_\mathrm{low}} = n-o(n)$) actually shows this result for \emph{masked defectives in $\cN_\mathrm{low}$}; this can directly be seen from the first step after Lemma 3.8 in \cite{coja2020optimal}. In other words, Lemma \ref{lem4} still remains true when we replace $M_1$ by $M_1^{\rm low}$, defined as the number of masked defectives in $\cN_\mathrm{low}$.  We may then apply steps \eqref{eq:31}-\eqref{eq:unmasked_4} with all such quantities (i.e., $M_1$, $M'_1$, $M'$, $M$, and the associated sets $\mathcal{M},\mathcal{M}'$) replaced by their counterparts that only count items in $\cN_\mathrm{low}$; apart from this modification, the analysis remains identical.  It follows that there are at least $\frac{n}{2k}k^{\frac{\gamma}{2}}$ masked non-defectives in $\cN_{\mathrm{low}}$ with probability $1-o(1)$, which trivially implies the same lower bound on the \emph{total} number of masked non-defectives.

% the bound in Lemma \ref{lem4} contains only items from $\cN_{\mathrm{low}}$ by its construction in \cite{coja2020optimal}. As such, Lemma \ref{lem4} remains true when only considering items in $\cN_{\mathrm{low}}$. Restricting the set of items to $\cN_{\mathrm{low}}$ and using Lemma \ref{lem4} we can then compute a high probability bound on the number of masked items $M^{(\mathrm{low})}$ in $\cN_\mathrm{low}$ in the exact manner detailed above (when replacing all relevant variables with their ``low degree'' item equivalents), since \eqref{eq:31}-\eqref{eq:unmasked_4} still hold when performing this restriction. We then have the trivial bound $M^{(\mathrm{low})} \leq M$, as items in $\cN_\mathrm{high}$ can only increase the number of masked items. Hence, any lower bound for $M^{(\mathrm{low})}$ implies a lower bound for $M$. Since the above analysis shows there are at least $\frac{n}{2k}k^{\frac{\gamma}{2}}$ masked non-defective items in $\cN_{\mathrm{low}}$ with probability $1-o(1)$, the same holds for the total number of masked non-defectives.

\medskip
\noindent\textbf{\underline{Case 2:}} $\theta \in [\frac{1}{2}, 1)$
\medskip

%\noindent Now we analyze the case where $\theta \in [\frac{1}{2}, 1)$. Since 
\noindent The steps in this case are similar to Case 1, so we only provide an outline while highlighting the key differences.  Analogous to Case 1, we initially assume that every item is included in at most $n^{\xi}$ tests for some arbitrarily small $\xi > 0$;\footnote{This is only different from $\log^3 n$ because we use a result from \cite{bay2022optimal} instead of \cite{coja2020optimal}, and they happened to choose slightly different values.  The analysis in \cite{bay2022optimal} could easily be modified to use $\log^3 n$ instead, but we find it more convenient to apply their results exactly as stated.} this assumption will later be dropped in the same way as above.

To show there are $\omega(k)$ masked non-defectives with high probability, instead of using the result from \cite{coja2020optimal}, in this case we use the following result adapted from \cite{bay2022optimal}. 

\begin{lemma} \label{lem5} \emph{(Adapted from \cite[Eqn. 3.12]{bay2022optimal})}
    For any $\epsilon \in (0,1)$, when $k = \Theta(n^{\theta})$ for some $\theta \in \big[\frac{1}{2}, 1\big)$ and $T \leq \big(\frac{1}{\log2}k\log_2 \frac{n}{k}\big)(1-\epsilon)$, the number of masked defectives $M_1$ satisfies
    \begin{equation} \label{eq:bay1}
        \bP\left(M_1 \geq \frac{k}{4n}k^{1+\xi}\right) = 1 -o(1),
    \end{equation}
    where $\xi > 0$ is arbitrarily small.
\end{lemma}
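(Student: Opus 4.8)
\medskip
\noindent\emph{Proof proposal.}
The plan is to obtain \eqref{eq:bay1} by specialising the masked-defectives analysis of \cite{bay2022optimal} to our parameters, in the same way that Lemma \ref{lem4} is extracted from \cite{coja2020optimal}. The analysis in \cite{bay2022optimal} that culminates in their Eqn.~3.12 establishes, while proving the exact-recovery converse for $\theta \ge \tfrac12$, a polynomial lower bound on the number of masked defectives holding with probability $1-o(1)$ once the number of tests drops below the ``rate $\log 2$'' level. My first step is therefore to check that our hypothesis $T \le \big(\tfrac{1}{\log 2}k\log_2\tfrac{n}{k}\big)(1-\epsilon)$ is exactly this regime: since $\tfrac{1}{\log 2}k\log_2\tfrac{n}{k}(1+o(1)) = \tfrac{\log_2\binom{n}{k}}{\log 2}$, a bound of this form with a factor $(1-\epsilon)$ is precisely a rate bounded away from $\log 2$, which for $\theta \ge \tfrac12$ is the SUPERSET threshold $R^{*} = \log 2$ that we wish to beat. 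I would also confirm that \cite{bay2022optimal} states the bound for the combinatorial prior with $k$ defectives; if it is phrased for the i.i.d.\ prior instead, the transfer is the same monotone coupling used in Case 1 above (shrinking the defective set only destroys masked items, and $\abs{S'}$ concentrates at $k(1+o(1))$), so \eqref{eq:bay1} results either way.

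The argument being invoked has the familiar two-stage shape, and I would reproduce only as much of it as is needed to track the constants. Stage one is a first-moment bound: discard every test with more than $n^{1-\theta}\log n$ items (these are positive with probability $1-o(1)$, cf.~\cite[Lemma 3.7]{coja2020optimal}, hence uninformative), restrict to the $n-o(n)$ items $\cN_{\mathrm{low}}$ of degree at most $n^{\xi}$, and show --- this is the technical core of \cite{bay2022optimal}, which I would follow closely --- that when $T$ is below the stated threshold, \emph{any} design confined to tests of size $O(n^{1-\theta}\log n)$ and items of degree $\le n^{\xi}$ keeps an expected number $\bE M_1$ of masked defectives that is at least $\tfrac{k^{2-\epsilon'}}{n^{1-\epsilon'}}\ge \tfrac{k^{2+\xi}}{n}$ for a constant $\epsilon'$ controlled by $\epsilon$ and for $\xi$ small enough; intuitively, the mismatch between the sizes of informative tests ($\Theta(n/k)$) and the item degrees ($O(\log\tfrac{n}{k})$) is what prevents an adversarial design from unmasking nearly all of the defectives. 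Stage two is a concentration step: since each item is in $\le n^{\xi}$ tests, the masking indicators depend on only mildly overlapping columns of $\sfX$, so a second-moment / bounded-dependence estimate promotes the mean bound to $\bP\big(M_1 \ge \tfrac14\bE M_1\big) = 1-o(1)$, and $\tfrac14\bE M_1 \ge \tfrac{k}{4n}k^{1+\xi}$ then gives \eqref{eq:bay1}. The only genuine difference from the analysis underlying Lemma \ref{lem4} is that \cite{bay2022optimal} uses the degree cutoff $n^{\xi}$ where \cite{coja2020optimal} used $\log^3 n$; this is immaterial, and in the proof of Theorem \ref{thm3} the cutoff is removed for arbitrary designs exactly as in Case 1.

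The step I expect to be the main obstacle is stage two carried over to the full generality we need: confirming that the concentration bound still yields probability $1-o(1)$ for \emph{arbitrary} bounded-degree designs and for the \emph{entire} range $\theta\in[\tfrac12,1)$, not merely for $\theta$ near $1$ or for the near-constant column-weight design around which these computations are usually written. The first-moment estimate is strongest near $\theta=1$ (where $\tfrac{k^{2+\xi}}{n}$ is a genuine power of $n$) and weakest at $\theta=\tfrac12$ (where it is only $k^{\xi/2}$), so the concentration must be shown to survive the weakest case; and one must keep the exponent small enough that $\tfrac{k^{1+\xi}}{4n}\le 1$ (forced by $M_1\le k$), which is consistent with ``$\xi$ arbitrarily small'' in the sense that for each fixed $\theta$ and $\epsilon$ one takes $\xi\in\big(0,\xi_0(\theta,\epsilon)\big)$ for a suitable $\xi_0>0$. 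Once these are in place, \eqref{eq:bay1} follows.
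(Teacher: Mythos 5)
Your sketch points to the right source analysis in \cite{bay2022optimal}, and your candidate bound $\frac{k^{2+\xi}}{n}$ matches the $\frac{k}{4n}k^{1+\xi}$ asserted by the lemma, but there are two places where your plan glosses over steps that the paper's proof handles concretely, and one of them you explicitly flag as an unresolved obstacle.

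First, the cited result \cite[Eqn.~3.12]{bay2022optimal} is a high-probability lower bound on the \emph{total number of masked items} under the i.i.d.\ prior in the dense regime $\theta' = 1-o(1)$, not directly on masked defectives. The paper passes to masked defectives via \cite[Lemma 3.1]{bay2022optimal} (under the i.i.d.\ prior the posterior defectiveness of a masked item equals its prior), giving $(\widetilde{M}_1' \mid \widetilde{M}' = \tilde m') \sim \bin(\tilde m', \tilde q')$; this conditioning is exactly where the $k/n$ (more precisely $\tilde q' \approx k/\tilde n$) factor in $\frac{k}{4n}k^{1+\xi}$ comes from, and it also replaces the second-moment / bounded-dependence concentration you propose with a direct Chernoff bound on a single binomial. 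Your second-moment route is not wrong in principle, but it adds substantial unverified work about the dependence structure of the masking indicators across items, work that the binomial conditioning sidesteps entirely.

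Second, you identify ``the entire range $\theta \in [\tfrac12,1)$, not merely $\theta$ near $1$'' as the main expected obstacle but do not say how to close it. The paper resolves this with the same dummy-non-defective device used to obtain Lemma~\ref{lem4}: first establish the lower bound \eqref{eq:m1_bound_3} on $\widetilde{M}_1$ in the dense regime with $\tilde n = k^{1/\theta'}$ items and $\theta' = 1-o(1)$, then pad the population with $n - \tilde n$ non-defective items appearing in no test to reach $n = \Theta(k^{1/\theta})$. This padding changes neither $T$, the masking status of any item, nor the collection of satisfying sets, so $M_1 = \widetilde M_1$ and the lower bound carries over; substituting $T = (\frac{1}{\log 2}k\log_2 \frac{n}{k})(1-\epsilon)$ into \eqref{eq:m1_bound_3} then produces $\frac{k}{4n}k^{1+\xi}$. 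Without the padding step it is unclear how your hypothesis on $T$ (stated relative to $n$) interacts with the test-budget condition in \cite{bay2022optimal}, which is formulated relative to the dense population size $\tilde n \ll n$.
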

% \begin{lemma} \label{lem:degree_bounds_bay} \emph{Adapted from \cite[Lemma 3.4]{bay2022optimal}}
%     Under the conditions of Lemma \ref{lem5}, without loss of generality it can be assumed for any test design $\sfX$ that each item is in at most $n^\xi$ tests.
% \end{lemma}
% \dan{Again same thing as my previous comment -- should I add a footnote explaining the change from $\tilde{n}$ to $n$ or add a full on section in Appendix C?}
In Appendix \ref{sec:adaptations}, we describe how we obtain this result from the results in \cite{bay2022optimal}. 
%We once again assume that all items appear in at most $\log^3 n$ tests\footnote{The analysis in \cite{bay2022optimal} uses the value $n^\xi$ instead of $\log^3n$ as their upper bound on the number of tests each extracted masked individual appears in. However, we use $\log^3n$ in our analysis to remain consistent with the previous section and since it does not affect the result in \cite{bay2022optimal} under an appropriate choice of $\theta'$.}, and then argue the derived result can be transferred to any test design. 
We can now use the same techniques as in the first case to find a lower bound on the number of masked non-defective items by viewing the combinatorial prior as the previously discussed the two step procedure (with the first step being i.i.d.~with parameter $q' = \frac{k + \sqrt{k}\log n}{n}$).  Recalling that $M'_1 \ge M_1$, Lemma \ref{lem5} gives
\begin{align}
    \label{eq:51} &1 - o(1) = \bP\left(M_{1}' \geq \frac{k}{4n}k^{1 + \xi}\right) \\
    \label{eq:52} &\hspace{-5.9pt}= \sum_{m' = \frac{k}{4n}k^{1 + \xi}}^{n} \bP(M' = m') \bP\left(M'_{1} \geq \frac{k}{4n}k^{1 + \xi} \mid M' = m'\right) \\
    &\hspace{-5.9pt}= \underbrace{\sum_{m' = \frac{k}{4n}k^{1 + \xi}}^{k^{1 + \frac{\xi}{2}}} \bP(M' = m') \bP\left(M_{1}' \geq \frac{k}{4n}k^{1 + \xi} \mid M' = m'\right)}_{\coloneqq \cA_{3}} \nonumber \\
    \label{eq:53} &\hspace{-5.9pt}+ \underbrace{\sum_{m'=k^{1 + \frac{\xi}{2}}+1}^{n} \bP(M' = m') \bP\left(M_{1}' \geq \frac{k}{4n}k^{1 + \xi} \mid M' = m'\right)}_{\coloneqq \cA_{4}}
\end{align}
From here, the steps are essentially the same as Case 1:
\begin{itemize}
    \item Using similar reasoning to $\cA_1$ (see \eqref{eq:40}), we can show that $\cA_{3} = o(1)$ and hence $\cA_{4} = 1-o(1)$; this is shown by noting that when $m' \leq k^{1 + \frac{\xi}{2}}$ it holds that $m'q' \leq \Theta(\frac{k}{n}k^{1 + \frac{\xi}{2}}) = o(\frac{k}{n}k^{1+\xi})$.
    \item We can then deduce (similar to \eqref{eq:44}) that  $M' \geq k^{1 +\frac{\xi}{2}}$ with probability $1-o(1)$.
 %    \jon{Should this ``...which we do...'' be moved to the first dot point? 
 % i.e., more relevant to \eqref{eq:40} than to \eqref{eq:44}?}
    \item Since $k^{1 +\frac{\xi}{2}} = \omega(k)$, we observe that having at least $k^{1 +\frac{\xi}{2}}$ masked items implies having at least $\frac{1}{2}k^{1+\frac{\xi}{2}}$ masked non-defectives when $k$ is large enough.
    % We can then use this to show there are at least $\frac{1}{2}k^{1+\frac{\xi}{2}}$ masked non-defectives in the same manner as before (while still assuming at most $n^\xi$ tests per item).
    \item Finally, we drop the assumption that each item appears in at most $n^\xi$ tests; this is done by noting that Lemma \ref{lem5} still holds when ``number of masked items'' is replaced by ``number of masked items appearing in at most $n^{\xi}$ tests'' (see \cite[Procedure 3.1, Step 1]{bay2022optimal}) and using the same reasoning as Case 1.
\end{itemize}

% {\color{blue} 
% As before, we wish to show that $\cA_{3} =o(1)$. We then employ the same methods used in the previous case to upper bound {\bf (copied above)}.

% Hence, by the multiplicative Chernoff bound, the probability that $M_1' \geq \frac{k}{4n}k^{1+\xi}$ decays to zero. 
%  As before, this implies $\cA_{4} = 1 - o(1)$. Using this, we can then lower bound the probability that $M' \geq k^{1 + \frac{\xi}{2}}$ by $1 - o(1)$, which we can then use to show there are at least $\frac{1}{2}k^{1 + \frac{\xi}{2}} = \omega(k)$ masked non-defective items by $1 - o(1)$ in the same manner as before when each item is in at most $n^{\xi}$ tests.  As noted earlier, we can drop the latter assumption using the same argument as that of the previous case (in particular, Lemma \ref{lem5} still holds when ``number of masked items'' is replaced by ``number of masked items appearing in at most $n^{\xi}$ tests'').}

\medskip
\noindent \textbf{\underline{Putting it All Together}}
\medskip

\noindent We have established that there are $\omega(k)$ masked non-defectives with high probability, and at least one masked defective (see Lemmas \ref{lem4} and \ref{lem5}).  We proceed to use this to bound $\bP(\suc) \coloneqq \bP(\hat{S} \supseteq S)$, where $\hat{S}$ is the output of an arbitrary decoding algorithm. To do so, we first recall a fact from \cite{coja2020optimal}.

\begin{lemma} \emph{\hspace*{-3px}\cite[Fact 3.1]{coja2020optimal}} \label{lem3}
    For a given test matrix $\sfX$ and outcome vector $\bfy$, let $\cV(\sfX, \bfy) \subseteq \cS$ be the collection of satisfying sets, i.e., the collection of $s \in \cS$ that would give rise to the test outcome $\bfy$ when using the test matrix $\sfX$. Then, under the combinatorial prior, the posterior distribution of $S$ satisfies $(S \mid \sfX, \bfy) \sim \mathrm{Unif}(\cV(\sfX, \bfy))$.
\end{lemma}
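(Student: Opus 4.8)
The plan is to obtain this as a one-line consequence of Bayes' rule, using three structural features of the setup: (i) under the combinatorial prior, $S$ is uniform over $\cS$; (ii) for non-adaptive designs the (possibly randomized) test matrix $\sfX$ is generated independently of $S$; and (iii) the observation model \eqref{eq:1} is noiseless, so that conditioned on $\sfX$ and on $S = s$, the outcome $\bfy$ is the deterministic vector with entries $\bigvee_{i \in s}\sfX_{ti}$.

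First I would fix a realization of $\sfX$ together with an outcome $\bfy$ that occurs with positive probability (so that $\cV(\sfX,\bfy) \neq \emptyset$; note the true defective set is always a satisfying set, so this is automatic whenever we condition on an $\bfy$ that can actually arise), and write, for each $s \in \cS$,
\begin{equation}
    \bP(S = s \mid \sfX, \bfY = \bfy) = \frac{\bP(\bfY = \bfy \mid S = s, \sfX)\,\bP(S = s \mid \sfX)}{\sum_{s' \in \cS}\bP(\bfY = \bfy \mid S = s', \sfX)\,\bP(S = s' \mid \sfX)}.
\end{equation}
By (i)--(ii) we have $\bP(S = s \mid \sfX) = {n \choose k}^{-1}$ for every $s \in \cS$, so the prior cancels between numerator and denominator. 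By (iii), $\bP(\bfY = \bfy \mid S = s, \sfX) = \mathds{1}\{s \in \cV(\sfX,\bfy)\}$, since the outcome equals $\bfy$ exactly when $s$ is a satisfying set for $(\sfX,\bfy)$. Substituting these, the denominator equals $\abs{\cV(\sfX,\bfy)}$ and the whole expression collapses to $\frac{\mathds{1}\{s \in \cV(\sfX,\bfy)\}}{\abs{\cV(\sfX,\bfy)}}$, i.e.\ the uniform distribution on $\cV(\sfX,\bfy)$, as claimed.

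There is no genuine obstacle here: the only facts worth stating explicitly are the independence of $\sfX$ and $S$ (which is precisely what ``non-adaptive'' means) and the determinism of the noiseless channel, both of which are immediate from the problem setup in Section \ref{sec:setup}. An equivalent and perhaps cleaner phrasing would be to observe that, conditioned on $\sfX$, the pair $(S,\bfY)$ is uniform over the ${n \choose k}$ equally likely atoms $\{(s,\bfy(s,\sfX)) : s \in \cS\}$, where $\bfy(s,\sfX)$ denotes the deterministic outcome; conditioning on $\bfY = \bfy$ simply retains those atoms with $\bfy(s,\sfX) = \bfy$, which by definition are exactly the $s \in \cV(\sfX,\bfy)$, each still carrying equal mass.
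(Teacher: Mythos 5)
Your proof is correct, and it is the standard Bayes' rule argument for this fact; the paper does not reprove it but simply cites \cite[Fact 3.1]{coja2020optimal}, where the same three observations (uniform prior, independence of the non-adaptive $\sfX$ from $S$, and determinism of the noiseless OR channel) yield the uniform posterior in exactly this way. Nothing is missing, and your closing ``cleaner phrasing'' via equal-mass atoms is an equivalent restatement rather than a genuinely different route.
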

Lemma \ref{lem3} states that the posterior distribution of $S$ given the test outcomes and testing matrix is uniform over the sets that could have led to the observed outcomes. This fact, in conjunction with the previously established high probability bounds on the number of masked items, allows us to bound $\bP(\suc)$ by showing that any estimate of size $(1 + \eta)k$ with contains the true defective set with low probability over the posterior distribution of $S$.

Let $\cT(\sfX, \bfy) \subseteq \cV(\sfX, \bfy)$ be the size-$k$ satisfying sets that are a subset of $\hat{S}$, and further let $\widetilde{\cT}(\sfX, \bfy) \subseteq \cT(\sfX, \bfy)$ be the set of subsets $S'$ for which all of the following conditions hold:
\begin{itemize}
    \item[(i)] $S'$ is a size-$k$ satisfying set;
    \item[(ii)] $S'$ is a subset of $\hat{S}$;
    \item[(iii)] When the items in $S'$ are defective and the remaining items are non-defective, there is at least $1$ masked defective and at least $\frac{1}{2}k^{1+\frac{\xi}{2}}$ masked non-defectives.
\end{itemize}
Note that we use $\frac{1}{2}k^{1+\frac{\xi}{2}}$ in condition (iii) because it is the smaller of the two bounds we derived when we choose $\xi$ to be sufficiently small (the other bound is $\frac{n}{2k}k^{\frac{\gamma}{2}}$ with $k = o(\sqrt n)$, and we will only need this stronger bound later when proving the second part of the theorem). Accordingly, the calculations presented in the previous part of the proof imply that the true defective set $S$ satisfies condition (iii) with probability approaching one, or equivalently
\begin{equation}
    \bP(S \in \cT(\sfX,\bfY) \setminus \widetilde{\cT}(\sfX,\bfY)) = o(1).
\end{equation}
Accordingly, we can write the success probability as follows:
\begin{align}
    \label{eq:55} \bP(\suc) &= \sum_{\sfX, \bfy} P_{\sfX \bfY}(\sfX, \bfy) \bP(\suc \mid \sfX, \bfy) \\
    \label{eq:56} &= \sum_{\sfX, \bfy} P_{\sfX \bfY}(\sfX, \bfy)\sum_{s \in \cT(\sfX, \bfy)} \bP(S = s \mid \sfX, \bfy) \\
    \label{eq:57} &= \sum_{\sfX, \bfy} P_{\sfX \bfY}(\sfX, \bfy)\sum_{s \in \widetilde{\cT}(\sfX, \bfy)} \bP(S = s \mid \sfX, \bfy)  + o(1) \\
    \label{eq:58} &= \sum_{\sfX, \bfy} P_{\sfX \bfY}(\sfX, \bfy) \frac{\abs{\widetilde{\cT}(\sfX, \bfy)}}{\abs{\cV(\sfX, \bfy)}} + o(1),
\end{align}
where \eqref{eq:58} follows from Lemma \ref{lem3}. 

Now suppose that for a given pair $(\sfX, \bfy)$, the decoder's estimate $\hat{S} = \hat{S}(\sfX, \bfy)$ satisfies $\abs{\widetilde{\cT}(\sfX, \bfy)} = N$.  If $N=0$ then the fraction in \eqref{eq:58} is trivially 0, so we henceforth assume that $N > 0$. 
%
% An upper bound on $\bP(\suc \mid \sfX, \bfy) = \frac{N}{\abs{\cV(\sfX, \bfy)}}$ can then be derived depending on the value of $N$.
% \begin{itemize}
%     \item If $N = 0$ then the success probability is trivially 0.
%     \item If $1 \leq N \leq k \log ^{\frac{1}{4}}n$, then pick an arbitrary plausible set, say $S_{1}$. Then, conditioned on $K$, the number of implausible $S_{1}' \in \cV(\sfX, \bfy)$ (i.e. sets such that $S_{1}' \not \subseteq \hat{S}$) is at least $\frac{1}{2}k \sqrt{\log n}$, since swapping out an individual masked defective item $i \in S_{1}$ with each of the at least $\frac{1}{2}k\sqrt{\log n}$ masked non-defectives leads to a distinct satisfying set. Thus, this implies that $\abs{\cV(\sfX, \bfy)} \geq \frac{1}{2}k\sqrt{\log n} + N$. Therefore, $\frac{N}{\abs{\cV(\sfX, \bfy)}} \leq \frac{N}{\frac{1}{2} k\sqrt{\log n} + N} \leq \frac{1}{\frac{1}{2}\log^{\frac{1}{4}}n+1} = o(1)$.
% \end{itemize}
% The only case left to deal with is the case $N \geq k\log^{\frac{1}{4}}n$, which requires a slightly more involved argument, since for sufficiently large $N$ the above approach provides too loose of a bound. H
%
We refer to the elements of $\widetilde{\cT}(\sfX, \bfy)$ as being \emph{plausible} for $\hat{S}$, and the elements of $\cV(\sfX, \bfy) \setminus \widetilde{\cT}(\sfX, \bfy)$ as being \emph{implausible}.  The number of plausible sets is $N$, and we will show that $\frac{\abs{\widetilde{\cT}(\sfX, \bfy)}}{\abs{\cV(\sfX, \bfy)}} = o(1)$ by arguing that the number of implausible sets is $\omega(N)$.

Denote the plausible sets by $S_1,\dotsc,S_N$ (the dependence on $(\sfX, \bfy)$ is left implicit), and note that by definition $|S_i|=k$.  We first argue that for every plausible set, there are many implausible sets obtained by swapping out a single element.  Specifically, swapping out one masked defective item (in $S_i$) with one masked non-defective item in $[n] \setminus \hat{S}$ produces an implausible set, because being plausible requires being a subset of $\hat{S}$.  Note that $S_i \subseteq \hat{S}$, so combining $|\hat{S}| = (1+\eta)k$ and $|S_i| = k$ gives that there are at most $\eta k$ masked non-defectives in $\hat{S}$, and hence at least  $\frac{1}{2}k^{1+\frac{\xi}{2}} - \eta k$ masked non-defectives in $[n] \setminus \hat{S}$ (see the definition of $\widetilde{\cT}$).  Thus, there are at least $\frac{1}{2}k^{1+\frac{\xi}{2}} - \eta k$ ways of performing the swap described above. For convenience, we apply $\eta = k^{o(1)}$ to simplify $\frac{1}{2}k^{1+\frac{\xi}{2}} - \eta k \geq \frac{1}{4}k^{1+\frac{\xi}{2}}$ for sufficiently large $k$, and work with this simpler lower bound.

When we apply this argument to each plausible set $S_1,\dotsc,S_N$, some care is needed because their associated $N$ implausible sets  may overlap.  To upper bound the \emph{total} number of implausible sets, we identify a sufficiently large subset $\{S_i\}_{i=1}^{N'}$ for some $N' < N$ whose associated implausible sets (as constructed above) have no overlap.  For brevity, we refer to $S_i$'s associated implausible sets as the \emph{neighbors} of $S_i$.  Importantly, while the sets $S_i$ will typically have \emph{multiple} masked defectives, we designate a \emph{single} one (say, the one with the smallest index) to be the one that gets swapped out in the previous paragraph.

Then, we claim that in order for two plausible $S_i$ and $S_j$ to share a common neighbor, the following conditions must all hold:
\begin{itemize}
    \item $|S_i \setminus S_j| = |S_j \setminus S_i| = 1$;
    \item The unique element of $S_i \setminus S_j$ is the designated masked defective for $S_i$;
    \item The unique element of $S_j \setminus S_i$ is the designated masked defective for $S_j$.
\end{itemize}
The first condition is required because if $|S_i \setminus S_j| \ge 2$, then it will be impossible to swap out both of these 2 (or more) items (the neighbors only consist of performing a single swap), and similarly if $|S_j \setminus S_i| \ge 2$.  The two remaining conditions are required because if the element of $S_i \setminus S_j$ is not the designated defective, it is not allowed to be swapped out from $S_i$ (similarly for $S_j$).

Then, we consider the following greedy construction analogous to the Gilbert-Varshamov construction \cite{gilbert1952comparison}:
%owever, we can augment the preceding argument by considering the number of distinct $S_{i}' \in \cV(\sfX, \bfy)$ which can be formed by swapping out a single masked defective for a masked non-defective for \emph{several} $i \in \{1,2,\dots, N\}$. However, doing this naively may lead to overlap in the implausible sets, which may lead to double counting some of the sets. We note however, that if for $i \neq j$ it holds that $\abs{S_{i} \setminus S_{j}} \geq 2$, then their implausible sets must be distinct, since we are only swapping out one masked defective item from each plausible set. Thus, we can lower bound the number of implausible sets by considering only plausible sets with a minimum pairwise set difference of 2. We construct a set of these pairwise minimum set difference items using a greedy GV-type construction \cite{gilbert1952comparison} as follows:
\begin{enumerate}
    % \item For each $i \in \{1,2,\dots, N\}$ designate a single masked defective item to swap out with the masked non-defectives.
    \item Start with the empty set.
    \item Choose an arbitrary $i \in \{1,2, \dots, N\}$, add $S_i$ to the set being constructed, and then remove $S_i$ from further consideration along with all other $ \{S_j\}_{j \ne i} $ that satisfy the three dot points above.
    \item Repeat step 2 with the remaining plausible sets, and continue until there are no plausible sets remaining.
\end{enumerate}
We claim that for each $S_i$ chosen, the number of $S_j$ ($j \ne i$) removed is at most $\eta k$, where we recall that $\eta = k^{o(1)}$.  This is because $S_i$ and $S_j$ necessarily share $k-1$ items in common, and there is only one choice for the item in $S_i \setminus S_j$ (it must be $S_i$'s designated defective) and at most $\eta k$ choices for the item in $S_j \setminus S_i$ (it must be an element of $\hat{S} \setminus S_i$, where $|\hat{S}|=(1+\eta)k$ and $|S_i|=k$).

Thus, the above procedure extracts at least $\frac{N}{\eta k + 1}$ plausible sets, all of which have pairwise disjoint collections of implausible sets. As we already argued, each such collection contains at least $\frac{1}{4}k^{1+\frac{\xi}{2}}$ implausible sets, implying that $\abs{\cV(\sfX, \bfy)} \geq \frac{N}{\eta k + 1}\frac{1}{4} k^{1+\frac{\xi}{2}} + N$.  Since $\eta = k^{o(1)}$, this implies that $\frac{\abs{\widetilde{\cT}(\sfX, \bfy)}}{\abs{\cV(\sfX, \bfy)}} \leq \frac{N}{\frac{N}{\eta k + 1}\frac{1}{4} k^{1+\frac{\xi}{2}} + N} = \frac{1}{\Omega\big(k^{\frac{\xi}{2} - o(1)}\big)} = o(1)$.  Substituting this into \eqref{eq:58}, we obtain $\bP(\suc) = o(1)$ as claimed. 

\medskip
\noindent \textbf{\underline{Establishing the Second Part of the Theorem}}
\medskip

\noindent When $\theta \in \big(\frac{\log 2}{1+\log 2}, \frac{1}{2}\big)$, we have established the stronger claim that there are $\omega(\frac{n}{k})$ masked non-defective items with high probability (rather than only $\omega(k)$).  We claim that this implies the same result when $\eta = \Omega(k^{\lambda})$ for $\lambda \in [0, \frac{1}{\theta}-2)$ (rather than only $\eta = k^{o(1)}$).  This is because under such a choice, we have (via $k = \Theta(n^{\theta})$) that the number of possible swaps with items not in $\hat{S}$ is at least $\frac{n}{2k}k^{\frac{\gamma}{2}} - \eta k = \Theta(\frac{n}{k}k^{\frac{\gamma}{2}})$, and that $\frac{(n/k)k^{\frac{\gamma}{2}}}{\eta k} \to \infty$, which implies $\frac{\abs{\widetilde{\cT}(\sfX, \bfy)}}{\abs{\cV(\sfX, \bfy)}} = o(1)$ by the same reasoning as that above. \qed

\section{Conclusion} \label{sec:conclusion}

We have established optimal rates for group testing with one-sided approximate recovery; specifically, for SUBSET we established an algorithm whose rate matches the counting bound, and for SUPERSET we established a converse that matches the better of two existing achievability results.  

An immediate direction for further work would be to attain the optimal rate for SUBSET with \emph{polynomial decoding time}, possibly building on the spatial coupling approach to exact and two-sided approximate recovery \cite{coja2020optimal}. While such an approach gives us the required starting point of a two-sided approximate recovery estimate, the subsequent analysis in Section \ref{sec:ach_subset_pf} may be difficult to adapt to the more complicated test design.  Perhaps more fundamentally, even given the initial estimate, our subsequent search step is still over $n^{\omega(1)}$ candidate sets, so the question remains of how to maintain its performance with polynomial runtime.

As discussed in Remark \ref{rem:lambda}, another possible direction is to establish the best possible values of $\lambda$ that can be attained in results of the kind stated in Theorems \ref{thm1} to \ref{thm3}.

\appendix
\subsection{Concentration Inequalities} \label{app:bino}
Here we provide statements of some standard concentration inequalities (e.g., see \cite[Ch. 2]{concentration}) that we use throughout the paper. Letting $X_{1}, X_{2}, \dots, X_{n}$ be a sequence of i.i.d.~Bernoulli($\mu$) random variables, we have the following:
\begin{itemize}
    \item (Chernoff Bound for Binomial RVs) For any $\delta > 0$, we have 
    \begin{align}
        &\bP\left(\sum_{i=1}^{n}X_{i} \geq (1+\delta)n\mu\right) \nonumber \\
        \label{eq:chernoff_large} &\qquad \qquad \leq \exp\Big(-n\mu\big((1+\delta)\log(1+\delta)-\delta\big)\Big),
    \end{align}
    and for any $\delta \in (0,1]$, we have
    \begin{align}
        &\bP\left(\sum_{i=1}^{n}X_{i} \leq (1-\delta)n\mu\right) \nonumber \\
        \label{eq:chernoff_small} &\qquad \qquad \leq \exp\Big(-n\mu\big((1-\delta)\log(1-\delta)+\delta\big)\Big).
    \end{align}
    \item (Weakened Chernoff Bound for Binomial RVs) For any $\delta \in (0,1]$, we have 
    \begin{equation}
        \label{eq:weakchernoff_large} \bP\left(\sum_{i=1}^{n}X_{i} \geq (1+\delta)n\mu\right) \leq \exp\Big(-\frac{\delta^{2}}{3}n\mu\Big),
    \end{equation}
    and for any $\delta \in (0,1]$, we have
    \begin{equation}
        \label{eq:weakchernoff_small} \bP\left(\sum_{i=1}^{n}X_{i} \leq (1-\delta)n\mu\right) \leq \exp\Big(-\frac{\delta^{2}}{2}n \mu\Big).
    \end{equation}
\end{itemize}

\subsection{Two-Sided Approximate Recovery Rate  with Smaller Distortion} \label{sec:two_sided}
% \emph{(}\hspace*{-3px}\texorpdfstring{\emph{\cite[Theorem 3]{scarlett2016phase}}}{TEXT}\emph{)}
% when \texorpdfstring{$\beta = \Omega(k^{-\lambda(\frac{1}{\theta}-1)})$}{TEXT} for \texorpdfstring{$\lambda \in [0,\frac{1}{2})$}{TEXT}.} \label{sec:two_sided}

Here we outline the proof that algorithm in the algorithm in \cite{scarlett2016phase} still succeeds for two-sided approximate recovery when $\beta = \Omega(k^{-\lambda(\frac{1}{\theta}-1)})$ and $\lambda \in \big[0,\frac{1}{2}\big)$, as opposed to only constant $\beta \in (0,1)$.  We note that a variant of this idea (with $\beta  = \Theta(k^{\gamma - 1})$ for $\gamma \in (0,1)$) has been analyzed in \cite{scarlett2018noisy}, but the details differ slightly since they consider the noisy case.

%We note that the algorithm only fails if there is no $s' \in S$ with $\max\{\abs{S \setminus s'}, \abs{s' \setminus S}\} \leq \beta k$ such that \eqref{eq:12} holds for all partitions $(s'_{\mathrm{dif}}, s'_{\mathrm{eq}})$ with $\abs{s'_{\mathrm{dif}}} > \beta k$, or if there exists some $s'' \in \cS$ with $\max\{\abs{S \setminus s''}, \abs{s'' \setminus S}\} > \beta k$ for which the above condition holds. 
Recall the condition in \eqref{eq:12} defining the decoder, and let $T I(\tau)$ be the mean of $\imath^{T}\left(\sfX_{\sdif} ; \bfy \mid  \sfX_{\seq}\right)$.  (Here $I(\tau)$ is a conditional mutual information, but the precise definitions of $\imath^{T}$ and $I(\tau)$ are not needed for our purposes.)  
The following condition on the number of tests is stated in \cite[Eqn.~3.37]{scarlett2016phase} for parameters $\delta_1$ and $\{\delta_{2,\tau}\}_{\tau = \lfloor\beta k + 1\rfloor}^{k}$:
\begin{equation}
    \label{eq:T_bound} T \geq \frac{\log{n-k \choose \tau}+\log \big(\frac{k}{\delta_{1}}{k \choose \tau}\big)}{I(\tau)(1-\delta_{2, \tau})}.
\end{equation}
Letting $\cE$ denote the failure event for two-sided approximate recovery, \cite[Eqn.~3.38]{scarlett2016phase} states that if \eqref{eq:T_bound} holds for all $\tau = \lfloor\beta k+1\rfloor,\dots, k$, then the error probability is upper bounded as follows:
\begin{align} \label{eq:two_sided_error}
    \bP(\cE) \leq \sum_{\tau =\lfloor \beta k+1\rfloor}^{k}{k \choose \tau} \psi_{\tau}(T, \delta_{2, \tau}) + \delta_1,
\end{align}
where $\psi_{\tau}: \bZ \times \bR \to \bR$ comes from a concentration inequality given in \cite[Eqn. 3.45]{scarlett2016phase}, and is defined by
\begin{align}
    &\psi_{\tau}(T, \delta_{2, \tau}) = \nonumber \\
    &\begin{cases}
        \exp\Big(-T \frac{\tau}{2k}\log 2 \Big((1 -\delta_{2}^{(1)}) \log(1 - \delta_{2}^{(1)}) \\ \qquad  \qquad \qquad \qquad \quad+ \delta_{2}^{(1)}\Big)(1 - \xi)\Big) & \hspace{-8.1pt}\tau \leq \lfloor\frac{k}{\log k}\rfloor \\
         2\exp\left(-\frac{(\delta_{2}^{(2)}I(\tau))^{2}T}{4(8+\delta_{2}^{(2)}I(\tau))}\right) & \hspace{-8.1pt}\tau > \lfloor \frac{k}{\log k}\rfloor       
    \end{cases} \label{eq:psi_def}
\end{align}
where $\xi > 0$ is arbitrarily small.

We apply \eqref{eq:two_sided_error} with $\delta_1 = \frac{1}{k}$, so that the second term is insignificant.  As for each $\delta_{2, \tau}$, this is set to $\delta_{2, \tau} = \delta_{2}^{(1)}$ for $\tau \leq \lfloor \frac{k}{\log k}\rfloor$, and to $\delta_{2, \tau} = \delta_{2}^{(2)}$ for $\tau > \lfloor\frac{k}{\log k}\rfloor$, where $\delta_{2}^{(1)}$ and $\delta_{2}^{(2)}$ will be specified shortly. We can then split the summation in \eqref{eq:two_sided_error} as follows:
\begin{align}
    \bP(\cE) &\leq \sum_{\tau = \lfloor \beta k + 1\rfloor}^{\lfloor \frac{k}{\log k}\rfloor}{k \choose \tau}\psi_{\tau}(T,\delta_{2}^{(1)}) \nonumber \\
    &+\sum_{\tau = \lfloor \frac{k}{\log k}\rfloor + 1}^{k}{k \choose \tau}\psi_{\tau}(T, \delta_{2}^{(2)}) + \frac{1}{k}. \label{eq:two_sums}
\end{align}
We can directly use part of the analysis in \cite[Appendix B]{scarlett2016phase}, showing that if $T \geq \big(k \log_{2}\frac{n}{k}\big)(1+\epsilon)$, then (i) \eqref{eq:T_bound} is satisfied for all $\tau= \lfloor \beta k + 1\rfloor,\dots, k$, and (ii) the second sum in \eqref{eq:two_sums} tends to $0$ for arbitrarily small $\delta_{2}^{(2)} > 0$.  Hence, to show that $\bP(\cE) \to 0$, it suffices to show that the first sum in \eqref{eq:two_sums} also decays to zero. Note that if $\beta k + 1 > \frac{k}{\log k}$ then the sum is empty and is trivially zero, so henceforth we assume this is not the case. We proceed to bound this first sum as follows via \eqref{eq:psi_def}:
\begin{align}
    \label{eq:sum_bound_1}\cA_{} &\coloneqq \sum_{\tau = \lfloor \beta k + 1\rfloor }^{\lfloor \frac{k}{\log k}\rfloor}{k \choose \tau}\psi_{\tau}(T,\delta_{2}^{(1)}) \\
    &= \sum_{\tau = \lfloor \beta k + 1 \rfloor}^{\lfloor \frac{k}{\log k}\rfloor}{k \choose \tau}\exp\Big(-T \frac{\tau}{2k}\log 2 \nonumber \\
    \label{eq:sum_bound_2} &\qquad \qquad \times \big((1 -\delta_{2}^{(1)}) \log(1 - \delta_{2}^{(1)}) + \delta_{2}^{(1)}\big)(1 - \xi)\Big) \\
    &\leq k\underset{\lfloor\beta k + 1\rfloor \leq \tau\leq \lfloor \frac{k}{\log k}\rfloor}{\max} {k \choose \tau}\exp\Big(-T\frac{\tau}{2k}\log 2\nonumber \\
    \label{eq:sum_bound_3} & \qquad \qquad \times \big((1 - \delta_{2}^{(1)}) \log(1 - \delta_{2}^{(1)}) + \delta_{2}^{(1)}\big)(1 - \xi)\Big) \\
    &=\underset{\lfloor \beta k+1\rfloor \leq \tau \leq \lfloor \frac{k}{\log k}\rfloor}{\max} \exp\Big(\log k + \log {k \choose \tau} - T \frac{\tau}{2k} \nonumber \\
    \label{eq:sum_bound_4} &\qquad \times\log 2 \big((1 - \delta_{2}^{(1)})\log(1 - \delta_{2}^{(1)})+ \delta_{2}^{(1)}\big)(1 - \xi)\Big) \\
    & \leq \underset{\lfloor \beta k+1 \rfloor \leq \tau \leq \lfloor \frac{k}{\log k}\rfloor}{\max} \exp\Big(\log k + \log{k \choose \tau} \nonumber \\
    \label{eq:sum_bound_5} &\qquad \qquad \qquad  \qquad \qquad \qquad -T\frac{\tau}{2k}\cdot\frac{\log 2}{1+\epsilon'}\Big),
\end{align}
where \eqref{eq:sum_bound_5} follows by taking $\delta_{2}^{(1)}$ to be arbitrarily close to $1$ and introducing $\epsilon'  > 0$ that can be made arbitrarily small (as a function of $\xi$ and $\delta_{2}^{(1)}$). In order for $\cA \to 0$, we require that the exponent in \eqref{eq:sum_bound_5} approaches $-\infty$.  For this to happen, it suffices that the following holds for all $\tau \in \big\{\lfloor \beta k + 1\rfloor,\dotsc,\lfloor \frac{k}{\log k}\rfloor\big\}$:
% is negative. Using this condition and rearranging for $T$, we require for all $\tau \in \big[\beta k, \frac{k}{\log k}\big]$ that
\begin{align}
   \label{eq:test_bound} T \geq \left(\log k + \log {k \choose \tau}\right)\frac{2k}{\tau\log 2 }(1 + 2\epsilon').
\end{align}
Since $\tau = o(k)$, we can write $\log {k \choose \tau} = \big(\tau\log\frac{k}{\tau}\big)\big(1+o(1)\big)$, and substituting this into \eqref{eq:test_bound} gives
\begin{align}
    \label{eq:test_bound_2.1} T &\geq \Big(\log k + \Big(\tau \log \frac{k}{\tau}\Big)\big(1+o(1)\big)\Big)\frac{2k}{\tau\log 2}(1+2\epsilon') \\
   \label{eq:test_bound_2.2} & = \frac{2}{\log 2}\Big(k \log \frac{k}{\tau}\Big)(1+2\epsilon')\big(1+o(1)\big) \\
   \label{eq:test_bound_2.3} &= 2 \Big( k \log_{2}\frac{k}{\tau}\Big)(1+2\epsilon')\big(1+o(1)\big).
\end{align}
Observe that the strictest requirement on $T$ is when $\tau = \lfloor\beta k + 1\rfloor$. Substituting this into \eqref{eq:test_bound_2.3} and recalling that $\beta = \Omega((\frac{k}{n})^\lambda)$, the bound simplifies to
\begin{align}
    \label{eq:test_bound_2.4} T &\geq 2k \Big(\log_{2}\Big(\Big(\frac{n}{k}\Big)^\lambda\Big)\Big)(1+2\epsilon')\big(1+o(1)\big) \\ 
    \label{eq:test_bound_2.5} &= 2\lambda k\Big(\log_{2}\frac{n}{k}\Big)(1+2\epsilon')\big(1+o(1)\big).
\end{align}
By the assumption $\lambda \in [0,\frac{1}{2})$, this condition on $T$ is satisfied when $T = (k\log_{2}\frac{n}{k})(1+\epsilon)$ for arbitrarily small $\epsilon > 0$.  Having established that $\bP(\cE) \to 0$ under this condition, the proof is complete. \qed
% Hence, $\cA_{} \to 0$ when $T = \big(k \log_{2}\frac{n}{k}\big)(1+\epsilon)$, implying $\bP(\cE) \to 0$. \qed

\subsection{Adaptations of Existing Intermediate Results (Lemmas \ref{lem4}-\ref{lem5})} \label{sec:adaptations}

In this section, we provide details on the modifications we make to the results in \cite{coja2020optimal, bay2022optimal} to obtain Lemmas \ref{lem4}-\ref{lem5}.  
Throughout the section, we will use symbols with a tilde (e.g., $\tilde{n}$)  to denote quantities in a ``denser'' scaling regime, which will then be related to ``sparser'' regimes with the usual notation (e.g., $n$).  When the two settings both have the same value of a certain parameter, the tilde will be omitted (e.g., $k$).

\subsubsection{Establishing Lemma \ref{lem4}} \label{sec:lem4_proof}

The relevant analysis in \cite{coja2020optimal} first bounds the number of masked defectives under the i.i.d.~prior in a (relatively) dense setting, then translates it to the combinatorial prior in a dense setting, and finally to the combinatorial prior in a sparse setting.  The details are given as follows.

Consider a setting with population size $\tilde{n}$ in which each item is defective independently with probability $\tilde{q}' = \frac{k - \sqrt{k}\log \tilde{n}}{\tilde{n}}$, where $k = \tilde{n}^{\theta'}$ for some $\theta'$ sufficiently close to one such that $2(1-\theta') < \xi <\epsilon\theta'$ for arbitrarily small $\xi > 0$.  Under this setting, it is shown in \cite[Proof of Corollary 3.13]{coja2020optimal} that the number of masked defectives $\widetilde{M}_{1}'$ stochastically dominates $\bin(\tilde{n}^{1-\xi}, \frac{1}{3}\tilde{n}^{\epsilon \theta' -1})$. Accordingly, we have
\begin{align}
    &\bP\Big(\widetilde{M}_{1}' \leq \frac{1}{4}\tilde{n}^{\epsilon \theta' - \xi}\Big) \nonumber \\
    \label{eq:stoch_dom_1} &\leq \bP\Big(\bin\Big(\tilde{n}^{1-\xi}, \frac{1}{3} \tilde{n}^{\epsilon \theta' -1}\Big) \leq \frac{1}{4}\tilde{n}^{\epsilon\theta' - \xi}\Big) = o(1),
\end{align}
where the last step uses standard binomial concentration (Appendix \ref{app:bino}). The analysis in \cite{coja2020optimal} then relates this result to the number of masked defectives under the combinatorial prior (in a population of size $\tilde{n}$) via the following two-step generation of $S$:
\begin{enumerate}
    \item Generate $S' \subseteq [\tilde{n}]$ by including each item in it independently with probability $\tilde{q}' = \frac{k - \sqrt{k}\log \tilde{n}}{\tilde{n}}$
    \item Add $\max\{k - \abs{S'}, 0\}$ items from $[\tilde{n}] \setminus S'$ to $S'$ uniformly at random to form $S$.
\end{enumerate}
This two-step procedure resembles the one used after Lemma \ref{lem4}, except we are now adding defective items from the first to the second step rather than removing them.  The first step uses the i.i.d.~prior, meaning the high probability bound in \eqref{eq:stoch_dom_1} applies to the number of masked defectives in $S'$. Since the second step involves making more items defective, the number of masked items following this step can only increase, meaning that $\widetilde{M}_1'$ serves as a lower bound on the number of masked defectives under this distribution. Moreover, under the high-probability event $\{k - 2 \sqrt{k}\log n\leq \abs{S'} \leq k\}$, the resultant distribution on $S$ follows the combinatorial prior, meaning any bounds we derive under this distribution of $S$ also hold for the combinatorial prior up to $o(1)$ events. 

Letting $\widetilde{M}_1$ denote the number of masked defectives under the above-mentioned combinatorial prior, the above discussion and \eqref{eq:stoch_dom_1} imply that $\widetilde{M}_1 \geq \widetilde{M}_1' \geq \frac{1}{4}n^{\epsilon \theta' - \xi}$ with probability $1-o(1)$. Since rearranging $k = \tilde{n}^{\theta'}$ gives $\tilde{n} = k^{\frac{1}{\theta'}}$, we have
\begin{equation}
    \label{eq:modified_stoch_dom} 
    \widetilde{M}_{1} \geq \frac{1}{4}\tilde{n}^{\epsilon \theta' - \xi} \iff \widetilde{M}_{1} \geq \frac{1}{4}k^{\epsilon - \frac{\xi}{\theta'}}.
    % \bP\left(\widetilde{M}_{1} \leq \frac{1}{4}k^{\epsilon - \frac{\xi}{\theta'}}\right) \leq \bP\left(\bin(k^{\frac{1}{\theta'}(1 - \xi)}, \frac{1}{3}k^{\epsilon - \frac{1}{\theta'}}) \leq \frac{1}{4}k^{\epsilon - \frac{\xi}{\theta'}}\right).
\end{equation}
Thus, with this parametrization, there are at least $\widetilde{M}_{1} \ge \frac{1}{4}k^{\epsilon - \frac{\xi}{\theta'}}$ masked defective items, with probability $1-o(1)$. 

We now recap how converse results can be translated from denser settings to sparser settings, i.e., from higher $\theta'$ (near 1) to smaller $\theta$.  This is established in \cite{coja2020optimal} via the contrapositive statement that achievability for small $\theta$ implies the same for large $\theta'$ (with the same number of tests, say $T = c k \log k$ for some $c > 0$ when the two problems have the same $k$ but differing population sizes). To show this, the idea is to start with a test design $\sfX$ for the sparser problem with parameters $n$ and $\theta$, and construct a test design $\sfX '$ for the denser problem with parameters $\tilde{n} < n$ and $\theta ' > \theta$ as follows:
\begin{enumerate}
    \item Choose a size-$\tilde{n}$ subset $V \subset [n]$ uniformly at random from the ${n \choose \tilde{n}}$ possibilities, and declare the remaining items $[n] \setminus V$ to be (dummy) non-defectives.
    \item Map the $\tilde{n}$ denser problem's items to $V$ in an arbitrary fixed manner (e.g., in numerical order).
    % \item Choose $S' \subset V$ of size $\lvert S \rvert = \tilde{n}^{\theta'} = k$ uniformly at random from the ${\tilde{n} \choose k}$ possible size-$k$  subsets of $V$.
    \item Perform the tests (and retrieve their results $\bfY'$) according to $\sfX'$, defined to be the sub-matrix of the sparser problem's test matrix $\sfX$ with the columns indexed by $V$.
\end{enumerate}
We observe that since the items in $[n] \setminus V$ are non-defective, the test results $\bfY$ and $\bfY'$ generated by $\sfX$ and $\sfX'$ respectively are the same (since the items removed to form $\sfX '$ are designated as non-defective).  Moreover, due to the symmetry of the construction, it is easy to see that if the denser problem has $k$ uniformly random defectives, then so does the sparser problem (accounting for the randomness in both the defective set and in $V$).  Combining the findings from the preceding two sentences, we have that an achievability result for the sparser problem immediately implies the same for the denser problem.
%, since the error probability in each problem is over the randomness in $(S, \sfX)$, and $(S', \sfX')$ respectively.
%
% To show this, the denser problem with parameter $\theta'$ and $\tilde{n}$ items is solved by adding ``dummy non-defectives'' to increase the population size to $n \gg \tilde{n}$ (satisfying $n = \Theta(k^{\frac{1}{\theta}})$), without changing any of the test results, and the resultant sparser problem with parameter $\theta$ and $n$ items is then solved.
We refer the interested reader to \cite{coja2020optimal} for the full details.

We can use a similar idea for our problem setup. Namely, we can transfer our results from the high $\theta$ regime to the lower $\theta$ regime for $\theta \in \big(\frac{\log 2}{1+\log 2}, \frac{1}{2}\big)$ by adding in these dummy non-defectives, and noting that doing so can only increase the number of masked non-defectives, and leaves the number of masked defectives unchanged (since non-defective items do not affect masking events of defectives). This means that $\widetilde{M}_{1} = M_{1}$, and \eqref{eq:30} follows by combining this with \eqref{eq:stoch_dom_1}--\eqref{eq:modified_stoch_dom}.

% We can then re-write this result in terms of $k$ and $n$ by setting $\tilde{n} = \Theta(n^{\frac{\theta}{\theta'}})$ such that $\tilde{n} = k^{\frac{1}{\theta'}}$ for some $\theta \in \big(\frac{\log 2}{1+\log 2}, \frac{1}{2}\big)$. Note that this preserves the scaling $k = \Theta(\tilde{n}^{\theta'})$ and further implies that $k = \Theta(n^{\theta})$. Hence, we can write \eqref{eq:stoch_dom_1} and \eqref{eq:stoch_dom_2} as follows:
% \begin{align}
%     \label{eq:stoch_dom_3} \bP\Big(M_{1} \leq \frac{1}{4}\tilde{n}^{\epsilon \theta' - \xi}\Big) &\leq \bP\Big(\bin(\tilde{n}^{1 - \xi}, \frac{1}{3}\tilde{n}^{\epsilon \theta' - 1)} \leq \frac{1}{4}\tilde{n}^{\epsilon\theta' -\xi}\Big) \\
%     \label{eq:stoch_dom_4} &= \bP\Big(\bin(k^{\frac{1 - \xi}{\theta'}}, \frac{1}{3}k^{\epsilon - \frac{1}{\theta'}}) \leq \frac{1}{4}k^{\epsilon - \frac{\xi}{\theta'}}\Big) \\
%     \label{eq:stoch_dom_5} &= o(1).
% \end{align}
% This then implies that $\bP(M_{1} \leq \frac{1}{4}\tilde{n}^{\epsilon \theta' - \xi}) = \bP(M_{1} \leq \frac{1}{4}k^{\epsilon - \frac{\xi}{\theta'}}) = o(1)$. We can then add in ``dummy non-defectives'' items such that there are $\Theta(n^{\theta})$ defective items within a population of size $n$, implying a sparsity parameter of $\theta \in \big(\frac{\log 2}{1 + \log 2}, \frac{1}{2}\big)$. With this modification, \cite[Proposition 2]{coja2020optimal} shows that \eqref{eq:30} still holds since adding in these dummy non-defectives does not affect the number of masked defective items.

\subsubsection{Establishing Lemma \ref{lem:degrees}}

Fix a test $t \in \{1,2,\dots, T\}$ containing $\gamma \geq \Gamma = \frac{n}{k} \log n$ items, and define the event $\cE_t \coloneqq \{\lvert \cI_t \cap S\rvert = 0\}$ (i.e., the test outcome is negative), where we recall that $\cI_t \subseteq [n]$ is the set of items included in test $t$. Without loss of generality, we assume that $\gamma = \Gamma$, since adding more items to the test can only decrease the probability of a negative test. Thus under the combinatorial prior, the probability of $\cE_t$ occurring (with respect to the randomness in $S$) is bounded by
\begin{align}
    \label{eq:neg_test_prob_1} \bP(\cE_t) &= \frac{\lvert \{s \in \cS : \lvert \cI_t \cap s \rvert = 0\}\rvert}{{n \choose k}} \\
    \label{eq:neg_test_prob_2} &= \frac{{n - \Gamma \choose k}}{{n \choose k}} \\
    \label{eq:neg_test_prob_3} &= \frac{(n-\Gamma)(n-\Gamma -1)\cdots (n-\Gamma - k+1)}{n(n-1)\cdots(n-k+1)} \\
    \label{eq:neg_test_prob_4} &\leq \left(\frac{n- \Gamma}{n}\right)^k \\
    \label{eq:neg_test_prob_5} &\leq \exp\left(-\frac{k}{n}\Gamma\right) \\
    \label{eq:neg_test_prob_6} &= \frac{1}{n},
\end{align}
where:
\begin{itemize}
    \item \eqref{eq:neg_test_prob_1} uses $S \sim {\rm Unif}(\cS)$;
    \item \eqref{eq:neg_test_prob_2} follows since any set $s$ such that $\lvert \cI_t \cap s \rvert = 0$ must be a subset of $[n] \setminus \cI_t$. Since $\lvert \cI_t\rvert = \Gamma$, there are ${n - \Gamma \choose k}$ such sets;
    \item \eqref{eq:neg_test_prob_3} expands out the binomial coefficients and cancels terms;
    \item \eqref{eq:neg_test_prob_4} follows by upper bounding $\frac{n - \Gamma - i}{n-i}$ by $\frac{n-\Gamma}{n}$ for each $i = 1, \dots, k-1$;
    \item \eqref{eq:neg_test_prob_5} uses $1-x \leq e^{-x}$ for $x \geq 0$, and applies this with $x = \frac{\Gamma}{n}$;
    \item \eqref{eq:neg_test_prob_6} follows by substituting the choice of $\Gamma$ and simplifying.
\end{itemize}
Taking a union bound over the at most $T = O(k \log n) = o(n)$ tests containing at least $\Gamma$ items yields the result.

Next, we show that if a test design $\sfX$ places at most $\Gamma$ items in each test, then the resultant number of items in at most $\log^3 n$ tests is $n - o(n)$. For each $i \in [n]$, define $\cT_i \subseteq \{1,2,\dots, T\}$ to be the set of tests in which $i$ is included. A double counting argument then yields
\begin{equation}
    \label{eq:double_count} \sum_{i=1}^n \lvert \cT_i\rvert = \sum_{t=1}^T \lvert \cI_t\rvert \leq  T \cdot \Gamma = O(n \log^2 n),
\end{equation}
where the last inequality follows since $T = O(k \log n)$. Thus, the number of items that appear in more than $\log ^3n$ tests is at most $O(\frac{n}{\log n}) = o(n)$, which gives the desired result.

\subsubsection{Establishing Lemma \ref{lem5}}

For Lemma \ref{lem5}, we will ultimately use the same ``dummy non-defectives'' used in proving  Lemma \ref{lem4}, but before doing so, we need to transfer a result from \cite{bay2022optimal} from \emph{number of masked items} to \emph{number of masked defectives}.  

We begin by using \cite[Eq.~3.12]{bay2022optimal} and its preceding paragraph, which state that when using an i.i.d.~prior with parameter $\tilde{q} = \frac{k}{\tilde{n}}$ and $\tilde{n}$ items, where $k = \tilde{n}^{\theta'}$ for $\theta' = 1 - o(1)$, the number of masked items stochastically dominates $\bin(\tilde{n}^{1-3\xi}, e^{-\frac{(1+\xi)c_{\tilde{q}}}{\tilde{n}\tilde{q}}T})$ for some arbitrarily small $\xi > 0$ and constant $c_{\tilde{q}} > 0$ that depends on $\tilde{q}$. Moreover, when $\tilde{q} = o(1)$ (which is the case we consider), this constant is given explicitly by $c_{\tilde{q}} = (\log^{2} 2)\big(1+o(1)\big)$.  To obtain a similar result for masked \emph{defectives}, it is useful to note how the binomial distribution arises in \cite{bay2022optimal}:
\begin{itemize}
    \item Using the same ideas as \cite{coja2020optimal}, they identify a subset of $\tilde{n}^{1-3\xi}$ items that are masked with probability at least $e^{-\frac{(1+\xi)c_{\tilde{q}}}{\tilde{n}\tilde{q}}T}$, and whose \emph{masking events are independent};
    \item The independence of masking events is ensured by a \emph{distance separation property}, which can be stated as follows: For any two items $i$ and $i'$ among the subset of size $\tilde{n}^{1-3\xi}$, the following two sets are disjoint: (1) items that share any test with $i$ (including $i$ itself); (2) items that share any test with $i'$ (including $i'$ itself).
\end{itemize}
To adapt the desired result from masked items to masked defectives, we simply observe that this separation property ensures that the property of being \emph{both masked and defective} is still independent across these items.\footnote{In fact, an equivalent property was implicitly used in \cite[Eq.~(3.14)]{bay2022optimal} later in their analysis, and this was also used in \cite{coja2020optimal} to establish the binomial distribution mentioned in the proof of Lemma \ref{lem4} above.}  Combining this with the fact that any masked item is defective with probability matching the prior (i.e., $\tilde{q}$) \cite{coja2020optimal,bay2022optimal}, it follows that the number of masked defectives stochastically dominates  $\bin(\tilde{n}^{1-3\xi}, \tilde{q}\cdot e^{-\frac{(1+\xi)c_{\tilde{q}}}{\tilde{n}\tilde{q}}T})$.  Observe that this matches the above-mentioned bound for masked items, but with the probability multiplied by $\tilde{q}$.

%This is since the proof in \cite{bay2022optimal} involves constructing a set of items whose masking events are mutually independent and occur with a ``sufficiently large" probability. This independence is achieved by ensuring all items are sufficiently ``distant" from each other; see \cite[Section 3.1]{bay2022optimal} for the details. For our purposes however, it suffices to know that this ``distance" between items implies that none share any tests. 

Using binomial concentration (Appendix \ref{app:bino}) and $c_{\tilde{q}} = (\log^{2} 2)\big(1+o(1)\big)$, it follows that the number of masked defectives (under an i.i.d.~prior with probability $\tilde{q} = \frac{k}{\tilde{n}}$) is at least
\begin{equation}
    \label{eq:bay_mean} \frac{1}{2} \tilde{q} \cdot \tilde{n}^{1-3\xi} \exp\Big(-\frac{(1+\xi)\log^{2}2}{k}T\big(1+o(1)\big)\Big)
\end{equation}
with probability $1-o(1)$, provided that \eqref{eq:bay_mean} itself has $\omega(1)$ scaling (which will be true under our choice of $T$ below).  Note that we have replaced $\tilde{n}\tilde{q}$ by $k$ in the exponent since the two are identical.
% \begin{equation}
%     \label{eq:test_cond} T \leq \frac{k(1-4\xi)}{(1+\xi)(\log^2 2)(1+o(1))}.
% \end{equation}
Then, as noted in \cite[Section 3.2]{bay2022optimal}, the exponent in \eqref{eq:bay_mean} remains the same when the prior probability $\tilde{q} = \frac{k}{\tilde{n}}$ is replaced by $\tilde{q}' = \frac{k - \sqrt{k}\log \tilde{n}}{\tilde{n}}$, since any changes are only in the $1+o(1)$ term. Letting $\widetilde{M}'_1$ denote the number of masked defectives when each item is defective independently with probability $\tilde{q}'$, we deduce that the following holds with probability $1-o(1)$:
\begin{align}
    \label{eq:mtilde_bound_1} \widetilde{M}_1' &\geq \frac{1}{2} \tilde{q}' \cdot \tilde{n}^{1-3\xi} \exp\Big(-\frac{(1+\xi)\log^{2}2}{k}T\big(1+o(1)\big)\Big) \\
    \label{eq:mtilde_bound_2} &\geq \frac{1}{4}\cdot \frac{k}{\tilde{n}}\tilde{n}^{1-3\xi} \exp\Big(-\frac{(1+\xi)\log^{2}2}{k}T\big(1+o(1)\big)\Big),
\end{align}
where \eqref{eq:mtilde_bound_2} uses the fact that $\frac{k - \sqrt{k}\log \tilde{n}}{\tilde{n}} \geq \frac{k}{2\tilde{n}}$ for sufficiently large $\tilde{n}$. 

This bound can then be transferred to a bound on the number of masked defective items $\widetilde{M}_1$ under the combinatorial prior via the same argument involving the two-step generation of $S$ detailed in the proof of Lemma \ref{lem4} above. We omit the details to avoid repetition, and simply write the resulting lower bound:
\begin{equation} \label{eq:m1_bound_2}
    \widetilde{M}_{1} \geq \frac{1}{4}\cdot\frac{k}{\tilde{n}}\tilde{n}^{1-3\xi}\exp\Big(-\frac{(1+\xi)\log^{2}2}{k}T\big(1+o(1)\big)\Big)
\end{equation}
with probability $1-o(1)$. Writing $\tilde{n} = k^\frac{1}{\theta'}$ and applying some algebraic simplifications, \eqref{eq:m1_bound_2} yields that with probability $1-o(1)$, the number of masked defectives $\widetilde{M}_{1}$ is at least
\begin{equation} \label{eq:m1_bound_3}
     \frac{1}{4} k^{1-\frac{3\xi}{\theta'}}\exp\Big(-\frac{(1+\xi)\log^{2}2}{k}T\big(1+o(1)\big)\Big).
\end{equation}
% and \eqref{eq:bay_mean} (along with its subsequent discussion) yields that the number of masked items 
% and $\widetilde{M}$ is at least 
% \begin{equation}
%     \label{eq:mtilde_bound} \frac{1}{2} k^{\frac{1-3\xi}{\theta'}}\exp\Big(-\frac{(1+\xi)\log^{2}2}{k}T\big(1+o(1)\big)\Big).
% \end{equation}
With this finding in place, we use the same ``dummy non-defectives'' idea as that of Lemma \ref{lem4} to make it such that there are $k = \Theta(n^{\theta})$ defective items in a population of size $n$ (but now with $\theta \in \big[\frac{1}{2}, 1\big)$), which leaves the number of masked defectives unchanged as already argued, i.e., $\widetilde{M}_1 = M_1$. Letting $T = \big(\frac{1}{\log 2}k\log_2\frac{n}{k}\big)(1-\epsilon)$ for some arbitrarily small suitably chosen $\epsilon > 0$, \eqref{eq:m1_bound_3} becomes $\frac{k}{4n}k^{1+\xi}$, thus yielding \eqref{eq:bay1} in Lemma \ref{lem5}.

\subsection{Proof of Theorem \ref{thm1} (Approximate Recovery Bounds for COMP and DD)} \label{sec:comp_dd}

The term $\zeta(\theta)$ is the well-documented exact recovery rate \cite{coja2020optimal}, so it remains to show that the rate $\log 2$ is achievable for both SUBSET and SUPERSET.  This is done using the \emph{Definite Defectives} (DD) algorithm \cite{aldridge2014group} for SUBSET, and the \emph{Combinatorial Orthogonal Matching Pursuit} (COMP) algorithm \cite{chan2011non} for SUPERSET. In both cases, we adopt the near constant column weight design \cite{johnson2018performance} where each item is placed in $L = \frac{T \log 2}{k}$ tests independently with replacement.

An outline of achievable approximate recovery rates using DD/COMP is given in \cite[Section 5.1]{aldridge2019group} for the case that $\eta^-,\eta^+$ are fixed constants in $(0,1)$.  However, we are unaware of any works giving full proofs, and moreover, the result that we seek in Theorem \ref{thm1} also permits smaller values of $\eta^-,\eta^+$, which is non-trivial to achieve.  Thus, we provide a detailed proof for completeness, despite amounting to relatively straightforward extensions of existing work. As with Section \ref{sec:proofs}, we subsequently drop the superscripts from $\eta^-, \eta^+$ since they will be clear from the context.

\subsubsection{Achievability for SUPERSET Using COMP}
The COMP algorithm declares all items that appear in at least one negative test as non-defective, and all other items as defective.  Since defectives can never appear in negative tests, it follows that every defective item is correctly classified, and the resulting estimate $\hat{S}_{\mathrm{COMP}}$ satisfies $\hat{S}_{\mathrm{COMP}} \supseteq S$.  To show that a rate of $\log 2$ is achievable, we will analyze an upper bound on the expected number of non-defective items that only appear in positive tests. We will then use this to get a high probability upper bound on the number of such items using Markov's Inequality.

Let $T_{1}$ be the number of positive tests. We use the following result from \cite{johnson2018performance} to establish a concentration bound on the value of $T_{1}$. 
\begin{lemma} \emph{\cite[Lemma 1]{johnson2018performance}} \label{lem6}
    Let $\cM \subseteq [n]$ be an arbitrary set of items with $\abs{\cM} = M$, and let $W^{(\cM)}$ be the number of tests in which at least one item of $\cM$ is present. Fix constants $\alpha > 0$ and $\delta > 0$. Then, when drawing $LM$ tests with replacement for items in $\cM$ to be included in, if $LM = \alpha T$ for some $\alpha = \Theta(1)$, then the following holds as $T \to \infty$\emph{:}
    \begin{equation} \label{eq:mcdiarmids}
        \bP(\abs{W^{(\cM)} - (1 - e^{-\alpha})T} > \delta T) \leq 2 \exp\left(-\frac{\delta^{2} T}{\alpha}\right).
    \end{equation}
\end{lemma}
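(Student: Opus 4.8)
The plan is to deduce this from McDiarmid's bounded-differences inequality, exploiting the fact that $W^{(\cM)}$ is a stable function of the independent placement choices. First I would recast the relevant portion of the near-constant column weight design as a balls-into-bins process: each of the $M$ items in $\cM$ independently selects $L$ tests uniformly at random \emph{with replacement}, so the $LM = \alpha T$ placements (indexed by item--replicate pairs) can be modeled as throwing $\alpha T$ balls independently and uniformly at random into the $T$ bins corresponding to the tests. Under this coupling, $W^{(\cM)}$ is precisely the number of non-empty bins, hence a deterministic function $f(Z_1,\dotsc,Z_{\alpha T})$ of the i.i.d.\ bin choices $Z_1,\dotsc,Z_{\alpha T} \in \{1,\dotsc,T\}$.

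Next I would verify the bounded-differences condition: altering a single ball's choice $Z_j$ can turn at most one empty bin non-empty and at most one non-empty bin empty, so $|f(z)-f(z')|\le 1$ whenever $z,z'$ differ in one coordinate. Since there are $\alpha T$ coordinates each with bounded difference $1$, McDiarmid's inequality yields $\bP\big(|W^{(\cM)}-\bE[W^{(\cM)}]| > s\big) \le 2\exp\big(-2s^2/(\alpha T)\big)$ for all $s > 0$. It then remains to center correctly. A direct computation gives $\bE[W^{(\cM)}] = T\big(1-(1-\tfrac{1}{T})^{\alpha T}\big)$, and since $(1-\tfrac{1}{T})^{\alpha T} = e^{-\alpha}\big(1+O(\tfrac{1}{T})\big)$ we have $|\bE[W^{(\cM)}] - (1-e^{-\alpha})T| = O(1)$. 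Hence the event $\{|W^{(\cM)}-(1-e^{-\alpha})T| > \delta T\}$ is contained in $\{|W^{(\cM)}-\bE[W^{(\cM)}]| > \delta T - O(1)\}$, and plugging $s = \delta T - O(1)$ into the bound above gives $2\exp\big(-2(\delta T - O(1))^2/(\alpha T)\big)$, which for large $T$ is at most $2\exp(-\delta^2 T/\alpha)$ --- exactly the claimed conclusion, with the gap between the exponent constants $2$ and $1$ comfortably absorbing the $O(1)$ centering correction.

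I do not expect a genuine obstacle here: once the design is expressed as balls into bins, this is a textbook bounded-differences computation. The only points requiring care are (i) using the ``with replacement'' convention so that the $\alpha T$ placements are genuinely independent --- and noting that if an item happens to land in the same test twice this only decreases $W^{(\cM)}$, which is harmless for the concentration claim --- and (ii) tracking the $O(1)$ discrepancy between $\bE[W^{(\cM)}]$ and $(1-e^{-\alpha})T$ so that the final statement matches the (slightly loose) form quoted in the lemma rather than the sharper $2\exp(-2\delta^2 T/\alpha)$ that McDiarmid directly produces.
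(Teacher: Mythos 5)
The paper does not prove Lemma \ref{lem6} itself but imports it directly from \cite[Lemma 1]{johnson2018performance}, and that reference's proof is precisely the McDiarmid bounded-differences argument on the balls-into-bins representation that you outline (note the paper even labels the display \texttt{eq:mcdiarmids}). Your proof is correct and matches that approach, including the careful accounting that $\bE[W^{(\cM)}] = T\bigl(1 - (1-\tfrac{1}{T})^{\alpha T}\bigr)$ differs from $(1-e^{-\alpha})T$ by only $O(1)$, which is what lets the factor-of-two slack between McDiarmid's $2\exp(-2\delta^2T/\alpha)$ and the quoted $2\exp(-\delta^2T/\alpha)$ absorb the centering error for large $T$.
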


Recalling that $L = \frac{T\log 2}{k}$, we can apply this result with $\cM = S$ and $\alpha = \log 2$, so that $W^{(\cM)}$ corresponds to the number of positive tests.  By doing so, we obtain that $W^{(\cM)} = T_{1}$ concentrates around $\frac{T}{2}$ as follows:
\begin{equation}
    \bP\left( \Big|T_1 - \frac{T}{2}\Big| > \frac{\delta T}{2} \right) \le 2\exp\left(-\frac{\delta^{2}T}{4\log 2}\right), \label{eq:conc_rescaled}
\end{equation}
where we rescaled $\delta$ to $\frac{\delta}{2}$ for later convenience. 
With this established, we can now proceed to bound the probability that a non-defective item only appears in positive tests. Fix some $i \in S^\mathsf{c}$; given $T_{1}$, each test that $i$ is drawn to be included in is positive with probability $\frac{T_{1}}{T}$. Thus, we can bound the probability of $i$ only being included in positive tests as follows:
\begin{align}
   \label{eq:pos_tests_1} \bP(A_{i}) &\coloneqq \bP(i \ \mathrm{only \ appears \ in \ positive \ tests}) \\
   \label{eq:pos_tests_2} &= \sum_{t_1 = 1}^{T}\bP(T_{1} = t_1)\left(\frac{t_1}{T}\right)^{L} \\
   &=\sum_{t_1 = 1}^{ (1+\delta)\frac{T}{2}}\bP(T_{1} = t_1)\left(\frac{t_1}{T}\right)^{L} \nonumber \\
   \label{eq:pos_tests_3}  &\qquad  + \sum_{t_1 = (1+\delta)\frac{T}{2} + 1}^T\bP(T_{1} = t_1)\left(\frac{t_1}{T}\right)^{L} \\
   \label{eq:pos_tests_4} &\leq \left(\frac{(1+\delta)T}{2T}\right)^{L} + 2\exp\left(-\frac{\delta^{2}T}{4\log 2}\right) \\
   \label{eq:pos_tests_5} &= \left(\frac{1 + \delta}{2}\right)^{\frac{T\log 2}{k}} + 2\exp\left(-\frac{\delta^{2}T}{4\log 2}\right) \\
   \label{eq:pos_tests_6} &= \exp\left(-\frac{T\log 2}{k}  \log \frac{2}{1 + \delta}\right) + 2\exp\left(-\frac{\delta^{2}T}{4\log 2}\right)
\end{align}
where \eqref{eq:pos_tests_4} follows from \eqref{eq:conc_rescaled} (and by bounding $\left(\frac{t_1}{T}\right)^{L}$ above by $\big(\frac{(1+\delta )T}{2T}\big)^{L}$ and $1$ in the first and second sums respectively), and \eqref{eq:pos_tests_5} follows from $L = \frac{T \log 2}{k}$.  Taking $T = \frac{1+\delta}{\log 2\cdot\log \frac{2}{1+\delta}}k \log \frac{n}{k} = \big(\frac{1}{\log^{2}2}k \log \frac{n}{k}\big)(1+\epsilon)$ for some arbitrarily small $\epsilon > 0$ (depending on arbitrarily small $\delta > 0$), we obtain the following bound on $\bP(A_{i})$:
\begin{align}
    \label{eq:pos_tests_8} \bP(A_{i}) &\leq \exp\left(-(1+\delta) \log \frac{n}{k}\right) + 2\exp\left(-\frac{\delta^{2}T}{4\log 2}\right) \\
    \label{eq:pos_tests_9} &= \left(\frac{k}{n}\right)^{1+\delta} + 2\exp\left(-\frac{\delta^{2}T}{4\log 2}\right) \\
    \label{eq:pos_tests_10} &\leq 2\left(\frac{k}{n}\right)^{1+\delta},
\end{align}
where \eqref{eq:pos_tests_10} holds for sufficiently large $n$. Therefore, denoting the number of non-defective items that only appear in positive tests as $H$, it follows that $\bE H \leq 2n \left(\frac{k}{n}\right)^{1+\delta} = 2k\left(\frac{k}{n}\right)^{\delta}$. By Markov's inequality, we can then proceed to obtain a high probability bound on $H$:
\begin{align}
    \label{eq:markov_comp_1} P_{\mathrm{e}}^{+} &= \bP(H \geq \eta k) \\
    \label{eq:markov_comp_2} &\leq \frac{\bE H}{\eta k} \\
    \label{eq:markov_comp_3} &\leq \frac{2k \left(\frac{k}{n}\right)^{\delta}}{\eta k} \\
    \label{eq:markov_comp_4} &= k^{o(1)}\left(\frac{k}{n}\right)^{\delta} \\
    \label{eq:markov_comp_5} &= o(1),
\end{align}
where \eqref{eq:markov_comp_4} applies $\eta = k^{-o(1)}$, and \eqref{eq:markov_comp_5} applies $k = \Theta(n^{\theta})$ with $\theta \in (0,1)$. Finally, we note that $T = \big(\frac{1}{\log^{2}2}k \log\frac{n}{k}\big)(1+\epsilon)$ being sufficient for $P_{\mathrm{e}}^{+} \to 0$ implies that a rate of $\log 2 $ is achievable for any $\theta \in (0,1)$. \qed

\subsubsection{Achievability for SUBSET Using DD}
Before discussing the proof of the achievability, we first state the DD algorithm \cite{aldridge2014group}:
\begin{enumerate}
    \item Mark any item that appears in a negative test as non-defective, and every other item (only appearing in positive tests) as a Possible Defective (PD) item. 
    \item Mark all PD items that appear in some test as the only such PD item as defective. 
    \item Mark all remaining items as non-defective and return the set of marked defective items.
\end{enumerate}
Clearly, steps $1$ and $2$ make no mistakes, so any mistakes from the algorithm must be due to step $3$. In other words, any misclassification by the algorithm will be due to a defective item being marked as non-defective, implying that the estimate $\hat{S}_{\mathrm{DD}}$ returned by the algorithm satisfies $\hat{S}_{\mathrm{DD}} \subseteq S$. Thus, proving that the DD algorithm succeeds for SUBSET is equivalent to showing that the number of these false negative items is at most $\eta k$. To prove this for rates less than $\log 2$ (or equivalently for $T \geq \big(\frac{1}{\log 2}k \log_{2}\frac{n}{k}\big)(1+\epsilon)$), we follow the steps of \cite{johnson2018performance}. In particular we wish to show that 
\begin{equation} \label{eq:dd_error}
    \sum_{i \in S}\bP(L_{i} = 0) = o(\eta k),
\end{equation}
where $L_{i}$ is the number of tests in which $i$ appears and no other PD items appear. If we can show that this is the case, then this implies that the number of these false positive items is at most $\eta k$ with probability $1-o(1)$ due to Markov's inequality (since the above sum is the expected number of such items).

To show \eqref{eq:dd_error}, we first introduce some notation\footnote{Due to notational clashes we use slightly different notation compared to \cite{johnson2018performance}, which denotes $\Phi$ as $\beta$.} from \cite{johnson2018performance} (again adopting the near-constant column weight design with $L = \frac{T\log 2}{k}$):
\begin{itemize}
    \item $\delta > 0$ is an arbitrarily small constant to be determined later;
    \item $g^{*} \coloneqq n(\frac{1}{2} + \delta)^{L}$;
    \item $w_{-} \coloneqq \frac{T(1- \delta)}{2}$;
    \item $\Phi \coloneqq \log (\frac{g^{*}L}{w_{-}})$;
    \item $C(L, w_{-}) \coloneqq \exp(\frac{L^{2}}{4w_{-}})$.
\end{itemize}
We now proceed to establish \eqref{eq:dd_error}. We use the following bound derived in \cite[Eqn. 37]{johnson2018performance} as a starting point:
\begin{align}
   \label{eq:dd_startpoint_1} \Lambda &\coloneqq \sum_{i \in S} \bP(L_{i}= 0) \\
   \label{eq:dd_startpoint_2} &\leq \frac{kC(L, w_{-})}{2^{L}} (1 + \delta + e^\Phi(1-\delta))^L + o(\eta k)  \\
   \label{eq:dd_startpoint_3} &\leq \frac{kC(L, w_{-})}{2^{L}} \exp\left((\delta + e^{\Phi}(1 - \delta))L\right) + o(\eta k),
\end{align}
where in \eqref{eq:dd_startpoint_2} the $o(\eta k)$ term consists of terms that are exponentially decaying\footnote{The terms in question are the last two terms in \cite[Eqn. 30]{johnson2018performance}. We refer the interested reader to \cite[Eqns. 40-42]{johnson2018performance} and the set of equations above them for explicit bounds on these terms that establish their exponential decay.} in $n$ and $k$ (and therefore $o(\eta k)$ for the range of $\eta$ we consider), and \eqref{eq:dd_startpoint_3} uses $1+x \leq e^x$. Using the definitions of $g^{*}$ and $w_{-}$ along with $L = \frac{T\log 2}{k}$, we can express $\Phi$ as
\begin{align}
    \label{eq:phi_bound_1} \Phi &= \log\left(\frac{n}{k}\cdot \frac{2\log 2}{1-\delta}\Big(\frac{1}{2}+\delta\Big)^{L}\right) \\
    \label{eq:phi_bound_2} &= \log \frac{n}{k} + L\log\Big(\frac{1}{2} + \delta\Big) + \log\left(\frac{2\log 2}{1-\delta}\right) \\
    \label{eq:phi_bound_3} &= \log \frac{n}{k} - \frac{T\log 2 \cdot \log\Big(\frac{2}{1 + 2\delta}\Big)}{k} + \log\left(\frac{2\log 2}{1-\delta}\right) \\
    \label{eq:phi_bound_4} &= \log\frac{n}{k} - (1+\xi)\log \frac{n}{k} + \log\left(\frac{2\log 2}{1-\delta}\right) \\
    \label{eq:phi_bound_5} &= -\xi\log \frac{n}{k} + \log\left(\frac{2\log 2}{1 -\delta}\right),
\end{align}
where \eqref{eq:phi_bound_3} follows from $L = \frac{T \log 2}{k}$, and \eqref{eq:phi_bound_4} follows by setting $T = \big(\frac{1}{\log2\cdot \log\frac{2}{1+2\delta}}k\log \frac{n}{k}\big)(1+\xi)$ for arbitrarily small $\xi > 0$ and choosing $\delta$ to be such that this equals $\big(\frac{1}{\log^{2}2}k\log \frac{n}{k}\big)(1+\epsilon)$ for arbitrarily small $\epsilon> 0$.
This then implies that $e^{\Phi} = \frac{2\log 2}{1 -\delta}(\frac{n}{k})^{-\xi} \to 0$.
We can then substitute this back into the upper bound on $\Lambda$ in \eqref{eq:dd_startpoint_3}:
\begin{align}
   &\Lambda -o(\eta k)\nonumber \\
   \label{eq:lambda_bound_1} &\leq \frac{kC(L, w_{-})}{2^{L}}\exp\Big(\Big(\delta + 2\log 2 \cdot \Big(\frac{n}{k}\Big)^{-\xi}\Big)L\Big) \\
   &= kC(L, w_{-})\exp\Bigg(- L\log2 \nonumber \\
   \label{eq:lambda_bound_2} &\qquad \qquad \qquad \quad  +\Big(\delta + 2\log 2 \cdot \Big(\frac{n}{k}\Big)^{-\xi}\Big)L\Bigg) \\
   &= kC(L, w_{-})\exp\Bigg(- \frac{T\log^{2}2}{k} \nonumber \\
   \label{eq:lambda_bound_3} & \qquad \quad \quad+ \Big(\delta+2\log 2\cdot \Big(\frac{n}{k}\Big)^{-\xi}\Big)\frac{T\log 2}{k}\Bigg) \\
   \label{eq:lambda_bound_4} &\leq kC(L, w_{-})\exp\left(-\frac{T \log^{2}2}{k} + 2\delta\frac{T\log 2}{k}\right) \\
   \label{eq:lambda_bound_5} &= kC(L, w_{-}) \exp\left(-\Big(1 - \frac{2\delta}{\log 2}\Big)(1+\epsilon)\log\frac{n}{k}\right) \\
   \label{eq:lambda_bound_6} &= kC(L, w_{-})\left(\frac{k}{n}\right)^{(1+\epsilon)(1 - \frac{2\delta}{\log 2})},
\end{align}
where \eqref{eq:lambda_bound_3} follows from the choice of $L$, \eqref{eq:lambda_bound_4} holds for sufficiently large $n$, and \eqref{eq:lambda_bound_5} follows by substituting $T = \big(\frac{1}{\log^{2}2}k\log \frac{n}{k}\big)(1+\epsilon)$. Finally, we note that $C(L, w_{-}) \coloneqq \exp\big(\frac{L^{2}}{4w_{-}}\big) \to 1$ since $L= \Theta(\log n)$ and $w_{-} = \Theta(k\log n)$, meaning we can crudely upper bound $C(L,w_{-})$ by 2 for sufficiently large $n$. Thus, we get the following final upper bound for $\Lambda$:
\begin{align}
    \label{eq:lambda_final_bound_1} \Lambda &\leq 2 k\left(\frac{k}{n}\right)^{(1+\epsilon)(1 - \frac{2\delta}{\log 2})} + o(\eta k) \\
    \label{eq:lambda_final_bound_2} &=o(\eta k),
\end{align}
where \eqref{eq:lambda_final_bound_2} holds by the assumption that $\eta = \Omega((\frac{k}{n})^\lambda)$ for some $\lambda \in [0,1)$, provided that $\delta$ is sufficiently small. Denoting the number of items $i \in S$ such that $L_{i} = 0$ by $D$, it holds that $\bE D = \Lambda$. Using the aforementioned bound on $\Lambda$ and Markov's inequality, we thus have the following high probability bound on there being $\eta k$ or more such items:
\begin{align}
    \label{eq:markov_dd_1} P_{\mathrm{e}}^{-} &= \bP(D \geq \eta k) = o(1).
    % \label{eq:markov_dd_2} &\leq \frac{\Delta}{\eta^-k} \\
    % \label{eq:markov_dd_3} &\leq \frac{3k\left(\frac{k}{n}\right)^{(1+\epsilon)(1 - \frac{2\delta}{\log 2})}}{\eta^-k} \\
    % \label{eq:markov_dd_4} &= O\left(k^{\lambda(\frac{1}{\theta} - 1)}\left(\frac{k}{n}\right)^{(1+\epsilon)(1 - \frac{2\delta}{\log 2})}\right) \\
    % \label{eq:markov_dd_5} &= O\left(\left(\frac{k}{n}\right)^{(1+\epsilon)(1-\frac{\delta}{\log 2}) - \lambda}\right) \\
    % \label{eq:markov_dd_6} &=o(1),
\end{align}
%where \eqref{eq:markov_dd_4} follows since $\eta^- =\Omega(k^{-\lambda(\frac{1}{\theta}-1)})$ (and hence $\frac{1}{\eta^-} = O(k^{\lambda(\frac{1}{\theta}-1)})$, and \eqref{eq:markov_dd_6} follows since the exponent $(1+\epsilon)(1-\frac{\delta}{\log 2}) -\lambda $ is positive as long as $\lambda \in [0,1)$ for arbitrarily small $\epsilon, \delta > 0$. 
Additionally, $T = \big(\frac{1}{\log^{2}2}k \log \frac{n}{k}\big)(1+\epsilon)$ tests being sufficient for this to be the case implies that a rate of $\log 2$ is achievable for SUBSET, as desired.  \qed

\subsection{Optimal Results for Asymmetric Approximate Recovery} \label{sec:asymmetric}

In this appendix, we consider the problem of recovering an estimate $\hat{S}$ that contains at most $\alpha_{1}k$ false negatives and $\alpha_{2}k$ false positives for two fixed and possibly distinct parameters $\alpha_{1}, \alpha_{2} \in (0,1)$. While we are not aware of any works giving explicit bounds on the number of tests necessary and sufficient to solve such a problem, we will show that both the achievability and converse bounds follow almost immediately from existing works, with the converse bound being a consequence of \cite[Theorem 1]{scarlett2017how} and the achievability bound following from \cite[Theorem 1]{truong2020all}.% We now show the connection between these results and the asymmetric approximate recovery problem.

We start with the converse.  Under the problem setup in \cite{scarlett2017how}, the decoder outputs a ``list'' $\cL \subseteq [n]$ with some pre-specified size $L^{*} \geq k$, and the recovery criterion is $\abs{\cL \cap S} \geq (1-\alpha_{1})k$ for some fixed $\alpha_{1} > 0$. The existing converse bound for this problem is stated as follows.

\begin{theorem} \label{thm4} \emph{\cite[Theorem 1]{scarlett2017how}} Fix $\alpha_{1} \in (0,1)$ and suppose that the decoder's list size is $L^{*} \geq k$. Then in order to have $P_{\mathrm{e}}(L^{*}, \alpha_{1}) \coloneqq \bP(\abs{\cL \cap S} < (1-\alpha_{1})k) \not \to 1$ as $n \to \infty$, we require that 
\begin{equation} \label{eq:list_decoding_converse}
    T \geq (1 - \alpha_{1})\Big(k\log_{2}\frac{n}{L^{*}}\Big)\big(1-o(1)\big).
\end{equation}   
\end{theorem}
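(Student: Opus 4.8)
The plan is to establish this strong converse by a counting argument that refines the classical counting bound so as to absorb the list-decoding slack. First I would reduce to a deterministic setting: conditioning on any internal randomness of the (possibly adaptive) testing procedure and of the decoder, it suffices to upper bound the success probability $1 - P_{\mathrm e}(L^*,\alpha_1) = \bP\big(\abs{\cL \cap S} \ge (1-\alpha_1)k\big)$ for each fixed realization and then average. So fix such a realization. In the noiseless model each defective set $S \in \cS$ produces a deterministic transcript of test outcomes, and since each of the $T$ tests is binary there are at most $2^T$ distinct transcripts; the decoder maps each transcript to a fixed list of size $L^*$. Writing $N^* \coloneqq \abs{\{S' \subseteq [n] : \abs{S'}=k,\ \abs{S' \cap \cL}\ge(1-\alpha_1)k\}}$ for the number of size-$k$ sets that a given list $\cL$ of size $L^*$ would ``accept'' (which by symmetry depends only on $L^*$), I would prove the key inequality
\begin{equation}
    1 - P_{\mathrm e}(L^*,\alpha_1) \ \le\ \frac{2^T N^*}{\binom{n}{k}}. \label{eq:prop_count}
\end{equation}

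To obtain \eqref{eq:prop_count}, group the sets $S$ on which the decoder succeeds by their transcript $\tau$: for each $\tau$ the successful sets with transcript $\tau$ all lie in $\{S : \abs{S \cap \cL(\tau)} \ge (1-\alpha_1)k\}$, a collection of size $N^*$; since there are at most $2^T$ transcripts, at most $2^T N^*$ sets are successful, and dividing by $\binom{n}{k}$ gives \eqref{eq:prop_count}. This step uses nothing about non-adaptivity, so the bound also holds for arbitrary adaptive designs, and the $N^*$ is unaffected by any randomization, so averaging over the randomness preserves \eqref{eq:prop_count}.

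Next I would estimate $N^* = \sum_{j=(1-\alpha_1)k}^{k}\binom{L^*}{j}\binom{n-L^*}{k-j}$. Examining the ratio of consecutive summands shows that (for $L^* = o(n)$, or more generally $L^*$ below a small constant multiple of $n$) the summand is maximized at the left endpoint $j = (1-\alpha_1)k$, i.e.\ the ``just-barely-correct'' overlap, so $N^* \le \alpha_1 k\, \binom{L^*}{(1-\alpha_1)k}\binom{n-L^*}{\alpha_1 k}$. Applying $\binom{m}{\ell}\le(em/\ell)^\ell$ to the two binomials and $\binom{n}{k}\ge(n/k)^k$, the factors of $k$ cancel up to lower-order terms and one gets
\begin{equation}
    \log_2\frac{\binom{n}{k}}{N^*} \ \ge\ (1-\alpha_1)k\log_2\frac{n}{L^*} - \Theta(k). \label{eq:prop_est}
\end{equation}
Substituting \eqref{eq:prop_est} into \eqref{eq:prop_count} yields $1 - P_{\mathrm e}(L^*,\alpha_1) \le 2^{\,T - (1-\alpha_1)k\log_2(n/L^*) + \Theta(k)}$. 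Hence if $T \le (1-\epsilon)(1-\alpha_1)k\log_2\frac{n}{L^*}$ for a fixed $\epsilon>0$, the exponent tends to $-\infty$ provided $k\log_2\frac{n}{L^*} = \omega(k)$, i.e.\ $\log_2\frac{n}{L^*}\to\infty$ (which holds whenever $L^* = o(n)$, in particular whenever $L^* \le n^{1-\Omega(1)}$), so $1 - P_{\mathrm e}\to 0$, i.e.\ $P_{\mathrm e}(L^*,\alpha_1)\to 1$. Taking the contrapositive — and observing that the $\Theta(k)$ correction in \eqref{eq:prop_est} is precisely what gets absorbed into the $(1-o(1))$ factor once $\log_2\frac{n}{L^*}\to\infty$ — gives the theorem.

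The hard part will be the combinatorial estimate \eqref{eq:prop_est}: one has to pin down the dominant term of $N^*$ uniformly over the possible scalings of $L^*$ versus $k$ and $n$, and, more delicately, check that the accumulated $\Theta(k)$ contributions (the $e$'s from the binomial bounds, the $\alpha_1 k$ number of summands, and the binary-entropy term $H_2(\alpha_1)$ produced when the $k\log k$ pieces cancel) are genuinely $o\big((1-\alpha_1)k\log_2\frac{n}{L^*}\big)$; this is what necessitates the mild requirement $\log_2(n/L^*)\to\infty$, while in the residual regime $L^*=\Theta(n)$ the claimed bound is only $\Theta(k)$ and essentially vacuous, so nothing more is needed there. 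By comparison, the reduction to deterministic procedures and the transcript-counting inequality \eqref{eq:prop_count} are routine.
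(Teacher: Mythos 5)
The paper does not prove Theorem~\ref{thm4}; it is cited verbatim from \cite{scarlett2017how} and used as a black box in the reduction for asymmetric approximate recovery. Your proposal is therefore a genuinely different, self-contained route, and it is correct in the regime you identify. The transcript-counting inequality is sound: under any deterministic, possibly adaptive, strategy, each defective set $S$ deterministically induces one of at most $2^T$ binary transcripts, each transcript maps to a single list of size $L^*$, and each such list ``accepts'' at most $N^*$ size-$k$ sets, so at most $2^T N^*$ of the $\binom{n}{k}$ sets can lead to success; averaging over any test/decoder randomization preserves the bound because $N^*$ depends only on $L^*$. Your estimate of $N^* = \sum_{j \ge (1-\alpha_1)k}\binom{L^*}{j}\binom{n-L^*}{k-j}$ via the left-endpoint summand together with $\binom{m}{\ell}\le (em/\ell)^{\ell}$ and $\binom{n}{k}\ge(n/k)^k$ is also correct (one checks the ratio of consecutive summands is below one for $L^*$ below roughly $(1-\alpha_1)n$, so the maximum is at $j=(1-\alpha_1)k$), and the resulting $\Theta(k)$ additive slack — coming from the $e$-factors, a binary-entropy term $k H_2(\alpha_1)$, and the $\alpha_1 k+1$ summands — is $o\big((1-\alpha_1)k\log_2(n/L^*)\big)$ exactly when $\log_2(n/L^*)\to\infty$. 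You rightly flag this as the one restriction on your proof, and it is satisfied in the paper's application, where $L^*=(1+\alpha_2)k=\Theta(n^\theta)$ with $\theta<1$. The trade-off versus the cited source is roughly what you would expect: the argument in \cite{scarlett2017how} is cast in information-theoretic terms that generalize beyond the noiseless deterministic channel, whereas your counting argument is specific to noiseless testing; in exchange, it is shorter, more transparent, and automatically handles adaptive designs (not needed here, but obtained for free).
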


We claim that this converse implies a converse for the asymmetric approximate recovery problem. To see this, we show that achievability in the asymmetric approximate recovery problem implies achievability in the ``list approximate recovery'' problem (as specified in the previous paragraph), which means that an impossibility result in the list approximate recovery setting is also valid for asymmetric recovery (by taking the contrapositive).  

Consider an estimate $\hat{S}$ such that $\hat{S}$ has at most $\alpha_1k$ false negatives and $\alpha_2k$ false positives. This implies that $(1-\alpha_1)k \leq\abs{\hat{S}} \leq (1+\alpha_2)k$, since one of the bounds on the number of false positives or negatives would trivially be violated if $\abs{\hat{S}}$ were outside this range. We can then extend this estimate $\hat{S}$ to a list $\cL$ such that $\abs{\cL} = (1+\alpha_2)k$ by adding in arbitrary items. Since there are at most $\alpha_1k$ false negatives in $\hat{S}$ by assumption, the same holds for $\cL$ since adding items can only decrease the number of false negatives.  Hence, $\abs{\cL \cap S} \geq (1-\alpha_{1})k$, meaning that achievability in the asymmetric approximate recovery setting implies achievability in the list approximate recovery setting. By Theorem \ref{thm4}, if $T \leq (1-\alpha_1)\big(k\log_2 \frac{n}{k}\big)(1-\epsilon)$ for any fixed, arbitrarily small $\epsilon > 0$, then $P_{\mathrm{e}}(L^*, \alpha_{1}) \to 1$. Hence, if $T \leq (1-\alpha_1)\big(k\log_2 \frac{n}{k}\big)(1-\epsilon)$, the error probability for asymmetric approximate recovery also tends to $1$.

The achievability bound follows from a result in \cite{truong2020all}, which was stated earlier as Lemma \ref{lem7}.  While the result is stated for $\alpha_1 = \alpha_2$, the proof (which we outlined after Lemma \ref{lem7}) reveals that $T = (1-\alpha_{1})\big(k\log_{2}\frac{n}{k}\big)(1+\epsilon)$ tests suffice to recover an estimate $\hat{S}$ of $S$ with at most $\alpha_{1}k$ false negatives and $\xi k$ false positives for arbitrarily small $\xi > 0$. Taking $\xi$ sufficiently small so that $\xi < \alpha_{2}$ yields the desired result. 

Combining the achievability and converse results established above, we conclude that for any positive constants $\alpha_1,\alpha_2$, the optimal threshold for asymmetric approximate recovery is $(1-\alpha_{1})\big(k\log_{2}\frac{n}{k}\big)$.

% Combining these results gives an optimal threshold of $T =(1-\alpha_{1})\big(k\log_{2}\frac{n}{k}\big)$.

\section*{Acknowledgment} This work is supported by the National University of Singapore under the Presidential Young Professorship grant scheme. The authors would like to thank Sidharth Jaggi for helpful discussions surrounding the asymmetric approximate recovery problem.

\bibliography{ref.bib}
\bibliographystyle{IEEEtran}

\begin{IEEEbiographynophoto}{Daniel McMorrow} received the B.Sci.~degree in mathematics and the B.Sci.~degree in economics from the University of Bristol in 2024. He is now an incoming PhD student in the Department of Computer Science at the National University of Singapore (NUS). His research interests are in the areas of statistical learning, information theory, and theoretical computer science.
\end{IEEEbiographynophoto}

\begin{IEEEbiographynophoto}{Jonathan Scarlett} received the B.Eng.~degree in electrical engineering and the B.Sci.~degree in computer science from the University of Melbourne, Australia. From October 2011 to August 2014, he was a Ph.D. student in the Signal Processing and Communications Group at the University of Cambridge, United Kingdom. From September 2014 to September 2017, he was post-doctoral researcher with the Laboratory for Information and Inference Systems at the \'Ecole Polytechnique F\'ed\'erale de Lausanne, Switzerland. From January 2018 to April 2024, he was an assistant professor in the Department of Computer Science and Department of Mathematics, National University of Singapore. From April 2024 onwards, he has been an associate professor in the Department of Computer Science and the Department of Mathematics, National University of Singapore. His research interests are in the areas of information theory, machine learning, signal processing, and high-dimensional statistics. He received the Singapore National Research Foundation (NRF) fellowship, and the NUS Presidential Young Professorship award.
\end{IEEEbiographynophoto}

% Returning to \eqref{eq:58}, we have
% \begin{align}
%    \label{eq:59} \bP(\suc) &\leq \sum_{\sfX, \bfy} P_{\sfX \bfY}(\sfX, \bfy) \bP(\suc \mid \sfX, \bfy, K) + o(1) \\ \label{eq:60} &\leq \sum_{\sfX, \bfy} P_{\sfX \bfY}(\sfX, \bfy) \frac{1}{\frac{1}{2}\log^{\frac{1}{4}}n + 1} + o(1) \\
%    \label{eq:61} &= o(1).
% \end{align}
% Hence, $P_{\mathrm{e}}^{+} \to 1$ since $\bP(\suc) = 1 - P_{\mathrm{e}}^{+}$ as claimed. \qed

\end{document}